\newcommand{\R}{\mathbb{R}}
\newcommand{\N}{\mathbb{N}}
\newcommand{\Z}{\mathbb{Z}}
\newcommand{\thetalin}{\hat\theta^{\mathrm{lin},N}_0}
\newcommand{\thetapol}{\hat\theta^{\mathrm{pol},N}_0}
\newcommand{\thetafull}{\hat\theta^{\mathrm{full},N}_0}
\newcommand{\thetatwo}{\hat\theta^{2,N}_0}
\newcommand{\thetathree}{\hat\theta^{3,N}_0}
\newcommand{\thetafour}{\hat\theta^{4,N}_0}
\newtheorem{theorem}{Theorem}
\newtheorem{lemma}[theorem]{Lemma}
\newtheorem{proposition}[theorem]{Proposition}
\begin{document}

\title[Diffusivity estimation for activator-inhibitor 
models]{Diffusivity Estimation for Activator-Inhibitor Models: Theory and Application to Intracellular Dynamics of the Actin Cytoskeleton}
\author[G. Pasemann, S. Flemming, S. Alonso, C. Beta, 
W. Stannat]{Gregor Pasemann\footnotesize{$^1$}, 
Sven Flemming\footnotesize{$^{2}$}, 
Sergio Alonso\footnotesize{$^{3}$}, \\ 
Carsten Beta\footnotesize{$^{2}$}, 
Wilhelm Stannat\footnotesize{$^{1}$}}

\email{pasemann@math.tu-berlin.de \\
svenflem@uni-potsdam.de \\
s.alonso@upc.edu \\
beta@uni-potsdam.de \\
stannat@math.tu-berlin.de}
\date{\today}

\begin{abstract}
A theory for diffusivity estimation for spatially extended activator-inhibitor dynamics modelling the evolution of intracellular signaling networks is developed in the mathematical framework of stochastic reaction-diffusion systems. In order to account for model uncertainties, we extend the results for parameter estimation for semilinear stochastic partial differential equations, as developed in \cite{PasemannStannat20}, to the problem of joint estimation of diffusivity and parametrized reaction terms. 
Our theoretical findings are applied to the estimation of effective diffusivity of signaling components contributing to intracellular dynamics of the actin cytoskeleton in the model organism {\it Dictyostelium discoideum}.
\end{abstract}

\keywords{ 
parametric drift estimation, stochastic reaction-diffusion 
systems, maximum-likelihood estimation, actin cytoskeleton 
dynamics}

\subjclass[2010]{60H15 (62F10)} 

\maketitle

\footnotetext[1]{Institute of Mathematics, Technische Universit\"at 
Berlin, 10623 Berlin, Germany}
\footnotetext[2]{Institute of Physics and Astronomy, University of Potsdam, 14476 Potsdam, Germany} 
\footnotetext[3]{Department of Physics, Universitat Politècnica de Catalunya, 08034 Barcelona, Spain}

\section{Introduction}

The purpose of this paper is to develop the mathematical 
theory for statistical inference methods for the parameter 
estimation of stochastic reaction-diffusion systems modelling 
spatially extended signaling networks 
in cellular systems.
Such signaling networks are one of the central topics in cell biology and biophysics as they provide the basis for essential processes including cell division, cell differentiation, and cell motility~\cite{devreotes_excitable_2017}.
Nonlinearities in these network may cause rich spatiotemporal behavior including the emergence of oscillations and waves~\cite{beta_intracellular_2017}.
Furthermore, alterations and deficiencies in the network topology can explain many pathologies and play a key role in diseases such as cancer~\cite{condeelis_great_2005}.
Here we present a method to estimate both diffusivity and reaction terms of a stochastic reaction-diffusion system, given space-time structured data of local concentrations of signaling components. 
We mainly focus on the estimation of diffusivity, whose precision can be increased by simultaneous calibration of the reaction terms. 
To test this approach, we use fluorescence microscopy recordings of the actin dynamics in the cortex of cells of the social amoeba {\it Dictyostelium discoideum}, 
a well-established model organism for the study of a wide range of actin-dependent processes~\cite{annesley_dictyostelium_2009}.
A recently introduced stochastic reaction-diffusion model could reproduce many features of the dynamical patterns observed in the cortex of these cells including excitable and bistable states~\cite{alonso_modeling_2018,FlemmingFontAlonsoBeta20,moreno_modeling_2020}.
In combination with the experimental data, this model will serve as a specific test case to exemplify our mathematical approach.
Since in real-world applications the available data will not 
allow for calibrating and validating detailed mathematical 
models, in this paper we will be primarily interested in 
minimal models that are still capable of generating all 
observed dynamical features at correct physical magnitudes. 
The developed estimation techniques should in practice 
be as robust as possible w.r.t. uncertainty and even 
misspecification of the unknown real dynamics. 

The impact of diffusion and reaction 
in a given model 
will be of fundamentally different structure and it is one of the 
main mathematical challenges to separate these impacts 
in the data 
in order to come to valid parameter estimations. 
On the more mathematical side, diffusion corresponds to a 
second order partial differential operator --- resulting in a 
strong spatial coupling in the given data, whereas the 
reaction corresponds to a lower order, in fact $0$-order, in 
general resulting in highly nonlinear local interactions in the data. 
For introductory purposes, let us assume that our data is 
given in terms of a space and time-continuous field $X(t,x)$ 
on $[0,T]\times\mathcal{D}$, where $T$ is the terminal time 
of our observations and $\mathcal{D}\subset\R^2$ a 
rectangular domain that corresponds to a chosen data-segment 
in a given experiment. 
Although in practice the given data will be discrete w.r.t. both space 
and time, we will be interested in applications where the resolution is high enough in order to approximate the data by such a continuous field. 
Our standing assumption is that $X(t, x)$ is generated by a dynamical system of the form
\begin{align}
\label{eq:IntroGenericEquation}
	\partial_t X(t, x) = \theta_0\Delta X(t, x) + 
	\mathcal{F}_X(t, x),
\end{align}
where $\Delta$ is the Laplacian, given by $\Delta X(t, x) = \partial_{x_1}^2X(t, x) + \partial_{x_2}^2X(t, x)$, $x=(x_1,x_2)$, which captures the diffusive spreading in the dynamics of $X(t, x)$. The intensity of the diffusion is given by the diffusivity $\theta_0$. Finally, 
$\mathcal{F}$ is a generic term, depending on the solution field $X(t, x)$, which describes 
all non-diffusive effects present in $X(t, x)$, whether they are known or unknown. 
A natural approach to extract $\theta_0$ from the data is to use a ``cutting-out estimator'' 
of the form
\begin{align}\label{eq:IntroGenericEstimator}
	\hat\theta_0 = \frac{\int_0^T\int_{\mathcal{D}} 
	Y(t, x)\partial_t X(t, x) 
	\mathrm{d}x\mathrm{d}t}{\int_0^T\int_{\mathcal{D}}Y(t, x) 
	\Delta X(t, x)\mathrm{d}x\mathrm{d}t}
	 = \frac{\int_0^T\langle Y,\partial_tX\rangle\mathrm{d}t}{\int_0^T\langle Y,\Delta X\rangle\mathrm{d}t},
\end{align}
where $Y(t, x)$ is a suitable test function. In the second fraction of \eqref{eq:IntroGenericEstimator}, we use the functional form for readability. In particular, we write $X=X(t, x)$ for the solution field. We will also write $X_t = X(t, \cdot)$ for the (spatially varying) solution field at a fixed time $t$. In order to ease notation, we will use this functional form from now on throughout the paper. It is possible to derive \eqref{eq:IntroGenericEstimator} 
from a least squares approach by minimizing $\theta\mapsto\lVert\partial_tX-\theta\Delta X\rVert^2$ with a suitably chosen norm. If the non-diffusive 
effects described by $\mathcal{F}$ are negligible, we see by 
plugging \eqref{eq:IntroGenericEquation} into 
\eqref{eq:IntroGenericEstimator} that $\hat\theta_0$ is 
close to $\theta_0$. If a sound approximation 
$\overline{\mathcal{F}}$ to $\mathcal{F}$ is known, the 
estimator can be made more precise by substituting 
$\partial_tX$ by $\partial_tX-\overline{\mathcal{F}}_X$ in 
\eqref{eq:IntroGenericEstimator}. A usual choice for $Y$ is a 
reweighted spectral cutoff of $X$. 

Under additional model assumptions, e.g. if 
\eqref{eq:IntroGenericEquation} is in fact a stochastic 
partial differential equation (SPDE) driven by Gaussian white noise, a rather developed parameter estimation theory for 
$\theta_0$ has been established in \cite{PasemannStannat20} on the basis of maximum likelihood estimation (MLE). 

In this paper, we are interested in further 
taking into account also those parts of $\mathcal{F}_X$ 
corresponding to local nonlinear reactions. As a particular example, we will focus on a recently introduced stochastic reaction-diffusion system of FitzHugh--Nagumo type that captures many aspects of the dynamical wave patterns observed in the cortex of motile amoeboid cells~\cite{FlemmingFontAlonsoBeta20},
\begin{align}
	\partial_t U &= D_U\Delta U + k_1 U(u_0-U)(U-u_0a)- k_2V+\xi, \label{eq:IntroActivator}\\
	\partial_t V &= D_V\Delta V  
	+ \epsilon(bU - V).\label{eq:IntroInhibitor}
\end{align}
Here, we identify $\theta_0=D_U$ and the only observed data 
is the activator variable, i.e. $X=U$. 

Therefore, in this example the non-diffusive part of the dynamics will be further decomposed as  
$\mathcal{F} = F + \xi$, where $\xi$ is Gaussian white noise 
and $F=F(U)$ encodes the non-Markovian reaction dynamics of 
the activator. The inhibitor component $V$ in the above 
reaction-diffusion system is then incorporated for minimal 
modelling purposes to allow the formation of travelling 
waves in the activator variable $U$ that are indeed 
observed in the time evolution of the actin concentration. 
This model and its dynamical features is explained in detail in Section \ref{sec:ExampleFitzHughNagumo}.

As noted before, it is desirable to include this additional 
knowledge into the estimation procedure 
\eqref{eq:IntroGenericEstimator} by subtracting a suitable 
approximation $\overline{\mathcal{F}}$ of the --- in practice ---
unknown $\mathcal{F}$. Although \eqref{eq:IntroActivator}, 
\eqref{eq:IntroInhibitor} suggest an explicit parametric  
form for $\overline{\mathcal{F}}$, it is a priori not clear 
how to quantify the nuisance parameters appearing in the 
system. Thus an (approximate) model for the data is known 
{\it qualitatively}, based on the observed dynamics, but not 
{\it quantitatively}. In order to resolve this issue, we 
extend \eqref{eq:IntroGenericEstimator} and adopt a joint 
maximum likelihood estimation of $\theta_0$ and various 
nuisance parameters. 

The field of statistical inference for SPDEs is rapidly growing, see \cite{Cialenco18} for a recent survey. 
The spectral approach to drift estimation was pioneered by \cite{HubnerKhasminskiiRozovskii93,HuebnerRozovskii95} and subsequently extended by various works, see e.g. \cite{HuebnerLototskyRozovskii97, LototskyRozovskii99, LototskyRozovskii00} for the case of non-diagonalizable linear evolution equations. In \cite{CialencoGlattHoltz11}, the stochastic Navier-Stokes equations have been analyzed as a first example of a nonlinear evolution equation. This has been generalized by \cite{PasemannStannat20} to semilinear SPDEs. 
Joint parameter estimation for linear evolution equations is treated in \cite{Huebner93,Lototsky03}. 
Besides the spectral approach, other measurement schemes have been studied. 
See e.g. \cite{PospisilTribe07, BibingerTrabs17, BibingerTrabs19, Chong18, Chong19, KhalilTudor19, CialencoHuang19, CialencoDelgadoVencesKim19, CialencoKim20, KainoUchida19} for the case of discrete observations in space and time. Recently, the so-called local approach has been worked out in \cite{AltmeyerReiss19} for linear equations, 
was subsequently generalized in \cite{AltmeyerCialencoPasemann20} to the semilinear case 
and applied to a stochastic cell repolarization model in \cite{AltmeyerBretschneiderJanakReiss20}. 

The paper is structured as follows: In Section \ref{sec:Theory} we give a theory for joint diffusivity and reaction parameter estimation for a class of semilinear SPDEs and study the spatial high-frequency asymptotics. In Section \ref{sec:data}, the biophysical context for these models is discussed. The performance of our method on simulated and real data is evaluated in Section \ref{sec:data_analysis}.

\section{Maximum Likelihood Estimation for \\ Activator-Inhibitor Models}\label{sec:Theory}

In this section we develop a theory for parameter estimation for a class of semilinear SPDE using a maximum likelihood ansatz. The application we are aiming at is an activator-inhibitor model as in \cite{FlemmingFontAlonsoBeta20}. More precisely, we show under mild conditions that the diffusivity of such a system can be identified in finite time given high spatial resolution and observing only the activator component. 

\subsection{The Model and Basic Properties}

Let us first introduce the abstract mathematical setting in 
which we are going to derive our main theoretical results. 
We work in spatial dimension $d\geq 1$. 
Given a bounded domain $\mathcal{D}=[0,L_1] \times 
\dots \times [0,L_d] \subset \R^d$, $L_1,\dots,L_d>0$, we 
consider the following parameter estimation problem for the 
semilinear SPDE
\begin{align}\label{eq:SPDEmodelGeneral}
	\mathrm{d}X_t = \left(\theta_0\Delta X_t 
	+ F_{\theta_1,\dots,\theta_K}(X)\right)\mathrm{d}t 
	+ B\mathrm{d}W_t
\end{align}
with periodic boundary conditions for $\Delta$ on the 
Hilbert space $H=\bar L^2(\mathcal{D}) = \{u \in 
L^2 (\mathcal{D}) |\int_\mathcal{D} u\mathrm{d}x=0\}$, 
together with initial condition $X_0\in H$. 
We allow the nonlinear term $F$ to depend on additional (nuisance) parameters $\theta_1,\dots,\theta_K$ and write $\theta=(\theta_0,\dots,\theta_K)^T$, $\theta_{1:K}=(\theta_1,\dots,\theta_K)$ for short. 
Without further mentioning it, we assume that 
$\theta\in\Theta$ for a fixed 
parameter space 
$\Theta$, e.g. $\Theta=\R_+^K$. 
Next, $W$ is a cylindrical Wiener process modelling Gaussian space-time white noise, that is, $\mathbb{E}[\dot W(t, x)]=0$ and $\mathbb{E}[\dot W(t, x)\dot W(s, y)]=\delta(t-s)\delta(x-y)$. In order to introduce spatial correlation, we use a dispersion operator of the form $B=\sigma(-\Delta)^{-\gamma}$ with $\sigma>0$ and $\gamma>d/4$. 
Here, $\sigma$ is the noise intensity, and $\gamma$ quantifies the decay of the noise for large frequencies in Fourier space. In addition, $\gamma$ determines the spatial smoothness of $X$, see Section \ref{sec:BasicRegularityResults}. The condition $\gamma>d/4$ ensures that the covariance operator $BB^T$ is of trace class, which is a standard assumption for well-posedness of \eqref{eq:SPDEmodelGeneral}, cf. \cite{LiuRockner15}. 
Denote by $(\lambda_k)_{k\geq 0}$ the 
eigenvalues of $-\Delta$, ordered increasingly, with 
corresponding eigenfunctions $(\Phi_k)_{k\geq 0}$. It is 
well-known \cite{Weyl11, Shubin01} that 
$\lambda_k\asymp\Lambda k^{2/d}$ for a constant $\Lambda>0$, i.e. $\lim_{k\rightarrow\infty}\lambda_k/(\Lambda k^{2/d})=1$. 
The proportionality constant $\Lambda$ is known explicitly (see e.g. \cite[Proposition 13.1]{Shubin01}) and depends on the domain $\mathcal{D}$.
Let $P_N:H\rightarrow H$ be the projection onto the span of the 
first $N$ eigenfunctions, and set $X^N:=P_NX$. 
For later use, we denote by $I$ the identity operator acting on $H$.
For $s\in\R$, we 
write $H^s:=D((-\Delta)^{s/2})$ for the domain of $(-\Delta)^{s/2}$, which is given by
\begin{align*}
    (-\Delta)^{s/2}x=\sum_{k=1}^\infty \lambda_k^{s/2}\langle \Phi_k, x\rangle\Phi_k,
\end{align*}
and abbreviate $|\cdot|_s:=|\cdot|_{H^s}$ for the norm on that space whenever convenient. 
We assume that the initial condition $X_0$ is regular enough, i.e. it satisfies $\mathbb{E}[|X_0|_{s}^p]<\infty$ for any $s\geq 0$, $p\geq 1$, without further mentioning it in the forthcoming statements. We will use the following general class of conditions with $s\geq 0$ in order to describe the regularity of $X$:
\begin{itemize}
	\item[$(A_s)$] For any $p\geq 1$, it holds
		\begin{align}
			\mathbb{E}\left[\sup_{0\leq t\leq T}|X_t|_{s}^p\right] < \infty.
		\end{align}
\end{itemize}
Our standing assumption is that there is a unique solution $X$ to \eqref{eq:SPDEmodelGeneral} such that $(A_0)$ holds. This is a general statement and can be derived e.g. under the assumptions from \cite[Theorem 3.1.5]{LiuRockner15}. 

\subsubsection{An Activator-Inhibitor Model}\label{sec:ExampleFitzHughNagumo}

An important example for our analysis is given by the following FitzHugh--Nagumo type system of equations in $d\leq 2$ (cf. \cite{FlemmingFontAlonsoBeta20}): 
\begin{align}
	\mathrm{d}U_t &= (D_U\Delta U_t + k_1f(\lvert U_t\rvert_{L^2}, U_t) - k_2V_t)\mathrm{d}t + B\mathrm{d}W_t, \label{eq:Activator}\\
	\mathrm{d}V_t &= (D_V\Delta V_t + \epsilon(bU_t - V_t))\mathrm{d}t,\label{eq:Inhibitor}
\end{align}
together with sufficiently smooth initial conditions. 
Here, 
$f$ is a bistable third-order polynomial  
$f(x, u) = u(u_0-u)(u - a(x)u_0)$, and $a\in C^1_b (\R,\R)$ is a bounded and continuously differentiable function with bounded 
derivative. The boundedness condition for $a$ is not  
essential to the dynamics of $U$ and can be realized in 
practice by a suitable cutoff function. \\

The FitzHugh--Nagumo system \cite{Fitzhugh61, Nagumo62} originated as a minimal model capable of generating excitable pulses mimicking action potentials in neuroscience. 
Its two components $U$ and $V$ are called {\it activator} and {\it inhibitor}, resp.

The spatial extension of the Fitzhugh--Nagumo system, obtained via diffusive 
coupling, is used to model propagation of excitable pulses and two-phase dynamics. 
In the case of the two phases, low and high concentration of the activator $U$ are realized as the stable fixed points of the third order polynomial $f$ at $0$ and $u_0$. 
The unstable fixed point  
$a u_0$, $a\in (0,1)$, separates the domains of attraction of the two stable fixed 
points. The interplay between spatial diffusion, causing a smoothing out of 
concentration gradients with rate $D_U$, and the local reaction forcing $f$, 
causing convergence of the activator to one of the stable phases with rate 
$k_1$, leads to the formation of transition phases between regions with low or high 
concentration of $U$. The parameters determine shape and velocity of the transition 
phases, e.g. low values of $a$ enhance the growth of regions with high activator 
concentration. This corresponds to the excitable regime, as explained in 
\cite{FlemmingFontAlonsoBeta20}.

Conversely, a high concentration of the inhibitor $V$ leads to a decay in the 
activator $U$, with rate $k_2$. In the excitable regime, this mechanism leads to 
moving activator wave fronts. The inhibitor is generated with rate $\epsilon b$ in 
the presence of $U$ and decays at rate $\epsilon$. Finally, its spatial evolution 
is determined by diffusion with rate $D_V$. 

Finally, choosing $a$ as a functional depending on the total activator concentration 
introduces a feedback control that allows to stabilize the dynamics. 

A detailed discussion of the relevance for cell biology is given in Section 
\ref{sec:data}. More information on the FitzHugh--Nagumo model and related models can be found in \cite{ErmentroutTerman2010}. 
\\

For this model, we can 
find a representation of the above type \eqref{eq:SPDEmodelGeneral} as follows: Using the 
variation of constants formula, the solution $V$ to 
\eqref{eq:Inhibitor} with initial condition $V_0=0$ can be 
written as $V_t=\epsilon b\int_0^te^{(t-r)(D_V\Delta 
-\epsilon I)}U_r\mathrm{d}r$. Inserting this representation 
into \eqref{eq:Activator} yields the following reformulation 
\begin{equation} 
\label{eq:SPDEmodelLinearFN}
\begin{aligned} 
	\mathrm{d}U_t & = \Big( D_U \Delta U_t + k_1 U_t 
	(u_0 - U_t )(U_t - a u_0) \\ 
	& \quad - k_2 \epsilon b\int_0^t e^{(t-r)( D_V\Delta -\epsilon I)}U_r\mathrm{d}r\Big)\mathrm{d}t 
	+ B\mathrm{d}W_t \\ 
	& = \left(\theta_0\Delta U_t + \theta_1F_1(U_t) + \theta_2F_2(U_t) + \theta_3F_3(U)(t) \right)\mathrm{d}t + B\mathrm{d}W_t 
\end{aligned}
\end{equation} 
of the activator-inhibitor model \eqref{eq:Activator}, \eqref{eq:Inhibitor} by setting $\theta_0 = D_U$, $\theta_1 = k_1u_0\bar a$, $\theta_2=k_1$, $\theta_3 = k_2\epsilon b$, $\overline F=0$ for some $\bar a>0$ and
\begin{align}
	F_1(U) & = -\frac{a(|U|_{L^2})}{\bar a} U (u_0-U) , 
	\label{eq:FNmodelF1} \\
	F_2(U) & = U^2(u_0-U),\label{eq:FNmodelF2} \\
	F_3(U)(t) & = -\int_0^te^{(t-r) 
	(D_V\Delta-\epsilon I)} U_r \mathrm{d}r. 
	\label{eq:FNmodelF3}
\end{align}
Here $e^{D_V\Delta-\epsilon I}$ is the semigroup generated by 
$D_V\Delta-\epsilon I$. Note that $F_3$ now depends on the 
whole trajectory of $U$, so that the resulting stochastic 
evolution equation \eqref{eq:SPDEmodelLinearFN} is no longer 
Markovian.  \\

For the activator-inhibitor system \eqref{eq:Activator}, \eqref{eq:Inhibitor}, we can verify well-posedness directly. For completeness, we state the optimal regularity results for both $U$ and $V$, but our main focus lies on the observed variable $X=U$. 

\begin{proposition}\label{prop:FNWellPosed}
	Let $\gamma>d/4$. Then there is a unique solution $(U,V)$ to \eqref{eq:Activator}, \eqref{eq:Inhibitor}. Furthermore, $U$ satisfies $(A_s)$ for any $s<2\gamma-d/2+1$, and $V$ satisfies $(A_s)$ for any $s<2\gamma-d/2+3$.
\end{proposition}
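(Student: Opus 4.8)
The plan is to treat the coupled system \eqref{eq:Activator}, \eqref{eq:Inhibitor} as a fixed-point problem by first eliminating $V$ via the variation-of-constants formula, exactly as done in the passage preceding the statement, and then to run a Picard iteration for the resulting non-Markovian equation \eqref{eq:SPDEmodelLinearFN} for $U$ alone. The key structural observations are: (i) the map $U\mapsto F_3(U)$ defined by \eqref{eq:FNmodelF3} is a bounded linear operator on $C([0,T];H^s)$ for every $s$, since $e^{(t-r)(D_V\Delta-\epsilon I)}$ is a contraction semigroup on each $H^s$ (it commutes with $(-\Delta)^{s/2}$), so it contributes no loss of regularity and is globally Lipschitz; (ii) the polynomial nonlinearities $F_1, F_2$ in \eqref{eq:FNmodelF1}, \eqref{eq:FNmodelF2} are locally Lipschitz on $H^s$ once $H^s$ is a Banach algebra, i.e. for $s>d/2$, using also that $a\in C^1_b$ so that $U\mapsto a(|U|_{L^2})U$ is locally Lipschitz on $H^s$; and (iii) the stochastic convolution $Z_t=\int_0^t e^{(t-r)\theta_0\Delta}B\,\mathrm{d}W_r$ has, by the standard factorization method (cf. Da Prato–Zabczyk), a version in $C([0,T];H^s)$ for every $s<2\gamma-d/2+1$, precisely because $B=\sigma(-\Delta)^{-\gamma}$ and $\sum_k \lambda_k^{s-2\gamma}\lambda_k^{-1}<\infty$ iff $s<2\gamma+1-d/2$ (using $\lambda_k\asymp\Lambda k^{2/d}$), together with the time-regularity gain from the analytic semigroup.

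The steps, in order, are as follows. First, fix $s\in(d/2,\,2\gamma-d/2+1)$; such $s$ exists since $\gamma>d/4$ forces $2\gamma-d/2+1>1>d/2$ only when $d=1$, so for $d=2$ one uses that $\gamma>1/2$ gives $2\gamma-d/2+1=2\gamma>1=d/2$ — in both cases the interval is nonempty. Second, decompose $U=Z+\tilde U$ where $Z$ is the stochastic convolution above; then $\tilde U$ solves a random PDE with zero initial noise and a nonlinearity that is locally Lipschitz on $H^s$ (a Banach algebra) plus the globally Lipschitz term $\theta_3 F_3$. Third, establish local existence and uniqueness in $C([0,\tau];H^s)$ by the contraction mapping principle on a small time interval, with $\tau$ depending on $\sup_{t\le T}|Z_t|_s$ and $|X_0|_s$. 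Fourth, obtain a priori bounds that preclude blow-up: here one tests the equation for $U$ (or rather a Galerkin approximation $U^N$) against $U^N$ in $H$ and exploits the sign of the cubic term — the leading term of $k_1 U(u_0-U)(U-au_0)$ is $-k_1U^3$, which is dissipative in $L^2$ — together with the a priori $L^2$-control of $V$ from \eqref{eq:Inhibitor}, to derive $\mathbb{E}[\sup_{t\le T}|U_t|_0^p]<\infty$; the higher-order bound $\mathbb{E}[\sup_{t\le T}|U_t|_s^p]<\infty$ then follows from the mild formulation, the smoothing of the analytic semigroup $e^{t\theta_0\Delta}$, the algebra property, and a Gronwall argument, using that the cubic terms are controlled by $|U_t|_s^3$ once the global $L^2\cap H^s$ bound is bootstrapped through the standard $H^{s'}\to H^s$ interpolation. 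Fifth, having $U$ with $(A_s)$ for all $s<2\gamma-d/2+1$, reconstruct $V_t=\epsilon b\int_0^t e^{(t-r)(D_V\Delta-\epsilon I)}U_r\,\mathrm{d}r$ and read off its regularity: the integration against the (bounded analytic) semigroup gains two spatial derivatives, so $V$ satisfies $(A_s)$ for any $s<2\gamma-d/2+3$.

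The main obstacle is the global-in-time a priori estimate in Step four: the nonlinearity is only locally Lipschitz, so local existence is cheap but ruling out explosion requires genuinely using the structure of the FitzHugh–Nagumo reaction term. The clean way is to work with Galerkin approximations $U^N=P_NU$, apply Itô's formula to $|U^N_t|_0^2$, and note that $\langle U^N, P_N(k_1 U^N(u_0-U^N)(U^N-au_0))\rangle \le C(1+|U^N_t|_0^2) - c|U^N_t|_{L^4}^4$ with $c>0$, which absorbs the indefinite lower-order cubic contributions and the coupling term $-k_2\langle U^N,V^N\rangle$ (bounded via Young's inequality and the $L^2$-contractivity of the $V$-semigroup) after taking expectations and using the trace-class property of $BB^T$; a Gronwall and Burkholder–Davis–Gundy argument then yields the uniform-in-$N$ bound, and lower-semicontinuity passes it to the limit. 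Once the $L^2$ bound is in hand, promoting it to $H^s$ is routine mild-solution bootstrapping. One should also check that the feedback term $a(|U|_{L^2})$ causes no trouble: since $|U_t|_{L^2}$ is already controlled and $a$ is bounded with bounded derivative, this term is a bounded Lipschitz perturbation and is harmless throughout.
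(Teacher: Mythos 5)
Your overall architecture is sound and genuinely different from the paper's: you propose a mild-solution/Picard approach (local existence by contraction in $H^s$, $s>d/2$, plus Galerkin--It\^o energy estimates for globalization), whereas the paper works in the variational framework of Liu--R\"ockner, verifying hemicontinuity, local monotonicity, coercivity and growth for the coupled pair $(U,V)$ on the Gelfand triple $H^1\times H^1\subset L^2\times L^2$. Your identification of the key dissipative structure ($\langle u, f(\cdot,u)\rangle_{L^2}\leq C(1+|u|_{L^2}^2)-c|u|_{L^4}^4$), your treatment of $F_3$ as a harmless globally Lipschitz term, and your regularity count for the stochastic convolution and for $V$ all match the paper's.

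There is, however, one genuine gap: the claim that ``promoting it to $H^s$ is routine mild-solution bootstrapping'' starting from the $L^2$ bound. In $d=2$ the cubic term cannot be closed from $\sup_t|U_t|_{L^2}$ and $\int_0^T|U_t|_{L^4}^4\,\mathrm{d}t$ alone: one has $|U^3|_{L^{4/3}}=|U|_{L^4}^3$ and $L^{4/3}\hookrightarrow H^{-1/2}$, but feeding this into the smoothing estimate $|\int_0^t e^{(t-r)\theta_0\Delta}F(U_r)\,\mathrm{d}r|_{H^s}\lesssim\int_0^t(t-r)^{-(s+1/2)/2}|F(U_r)|_{H^{-1/2}}\,\mathrm{d}r$ and applying H\"older against the $L^{4/3}_t$ integrability of $|F(U_r)|_{H^{-1/2}}$ forces $s<0$; the time singularity is not integrable for any positive gain. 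The phrase ``once the global $L^2\cap H^s$ bound is bootstrapped through interpolation'' is circular, since the $H^s$ bound is what you are trying to produce. What is actually needed is an intermediate a priori estimate at the $H^1$ level, which the paper supplies in a dedicated lemma by testing against $-\Delta U$ (coercivity on the triple $L^2\subset H^1\subset H^2$) and exploiting once more the one-sided bound $\partial_u f\leq C$ to control $\langle \partial_uf(\cdot,U)\nabla U,\nabla U\rangle\lesssim|\nabla U|_{L^2}^2$. Only with $\sup_t|U_t|_{H^1}$ in hand does the Sobolev embedding $H^1\hookrightarrow L^q$ (all $q<\infty$ in $d\leq2$) give $\sup_t|U_t^3|_{L^q}<\infty$, after which the semigroup smoothing in $W^{\eta,q}$ gains almost two derivatives with no integrability obstruction and the rest of your bootstrap goes through. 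Insert that $H^1$ energy estimate as an explicit step and your argument closes.
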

The proof is deferred to Appendix \ref{app:FNWellPosed}. 

\subsubsection{Basic Regularity Results}\label{sec:BasicRegularityResults}

The nonlinear term $F$ is assumed to satisfy (cf. \cite{AltmeyerCialencoPasemann20}):
\begin{itemize}
	\item[$(F_{s,\eta})$] There is $b>0$ and $\epsilon>0$ such that
		\begin{align*}
			|(-\Delta)^{\frac{s-2+\eta+\epsilon}{2}}F_{\theta_{1:K}}(Y)|_{C(0,T;H)} \leq c(\theta_{1:K})(1+|(-\Delta)^{\frac{s}{2}}Y|_{C(0,T;H)})^b
		\end{align*}
		for $Y\in C(0,T;H^s)$, where $c$ depends continuously on $\theta_{1:K}$. 
\end{itemize}
In particular, if $F(Y)(t) = F(Y_t)$, this simplifies to 
\begin{align}
	|F_{\theta_{1:K}}(Y)|_{s-2+\eta+\epsilon} \leq c(\theta_{1:K})(1+|Y|_{s})^b
\end{align}
for $Y\in H^s$. In order to control the regularity of $X$, we apply a splitting argument (see also \cite{CialencoGlattHoltz11, PasemannStannat20, AltmeyerCialencoPasemann20}) and write $X=\overline X+\widetilde X$, where $\overline X$ is the solution to the linear SPDE
\begin{align}\label{eq:SplittingEquationLinear}
	\mathrm{d}\overline X_t = \theta_0\Delta \overline X_t\mathrm{d}t + B\mathrm{d}W_t,\quad \overline X_0 = 0,
\end{align}
where $W$ is the same cylindrical Wiener process as in \eqref{eq:SPDEmodelGeneral}, and $\widetilde X$ solves a random PDE of the form
\begin{align}\label{eq:SplittingEquationNonlinear}
	\mathrm{d}\widetilde X = (\theta_0\Delta \widetilde X_t + F_{\theta_{1:K}}(\overline X+\widetilde X)(t))\mathrm{d}t,\quad\overline X_0 = X_0.
\end{align}

\begin{lemma}\label{lem:LinearRegularity}
	The process $\overline X$ is Gaussian, and for any $p\geq 1$, $s<2\gamma-d/2+1$:
	\begin{align}
		\mathbb{E}\left[\sup_{0\leq t\leq T}|\overline X_t|_s^p\right] < \infty.
	\end{align}
\end{lemma}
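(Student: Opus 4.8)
The plan is to identify $\overline X$ with the stochastic convolution $\overline X_t=\int_0^t e^{(t-r)\theta_0\Delta}B\,\mathrm{d}W_r$, which is the mild solution of \eqref{eq:SplittingEquationLinear}. Since this is a deterministic continuous linear functional of the cylindrical Wiener process $W$, the process $\overline X$ is Gaussian; in particular each $\overline X_t$ is a centred Gaussian random variable in $H$, so by the equivalence of Gaussian moments it suffices to control second moments in order to bound the $p$-th moments at fixed $t$. Moreover, by Jensen's inequality it is enough to prove the asserted estimate for $p$ large, which will leave room in the factorization step below.

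The pointwise-in-time bound is a direct Fourier computation. Writing $\overline X_t=\sum_{k\geq 1}\overline x_k(t)\Phi_k$, each coefficient $\overline x_k(t)=\sigma\lambda_k^{-\gamma}\int_0^t e^{-(t-r)\theta_0\lambda_k}\,\mathrm{d}\beta_k(r)$ is an independent centred Gaussian with variance bounded by $\tfrac{\sigma^2}{2\theta_0}\lambda_k^{-2\gamma-1}$, uniformly in $t\in[0,T]$. Hence $\mathbb{E}[|\overline X_t|_s^2]\leq\tfrac{\sigma^2}{2\theta_0}\sum_{k\geq 1}\lambda_k^{s-2\gamma-1}$, and the Weyl asymptotics $\lambda_k\asymp\Lambda k^{2/d}$ show that this series converges precisely when $s<2\gamma-d/2+1$. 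This settles the statement for fixed $t$.

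The only real work is to bring the supremum over $t$ inside the expectation at the optimal regularity, and for this I would use the factorization method as in \cite{PasemannStannat20,AltmeyerCialencoPasemann20}. Fix $s<2\gamma-d/2+1$, pick $\alpha\in\bigl(1/p,\ \tfrac12(2\gamma-d/2+1-s)\bigr)$ — a nonempty interval once $p$ is large — and write
\[
  \overline X_t=\frac{\sin\pi\alpha}{\pi}\int_0^t (t-r)^{\alpha-1}e^{(t-r)\theta_0\Delta}Y_r\,\mathrm{d}r,
  \qquad Y_r=\int_0^r (r-q)^{-\alpha}e^{(r-q)\theta_0\Delta}B\,\mathrm{d}W_q.
\]
Since $e^{u\theta_0\Delta}$ is a contraction on $H^s$, Hölder's inequality in $r$ (using $\alpha>1/p$, so that $(\alpha-1)\tfrac{p}{p-1}>-1$) gives $\sup_{0\leq t\leq T}|\overline X_t|_s^p\leq C\int_0^T|Y_r|_s^p\,\mathrm{d}r$; taking expectations and applying Fubini reduces matters to a uniform bound on $\mathbb{E}[|Y_r|_s^p]$. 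As $Y_r$ is again centred Gaussian, Gaussian moment equivalence together with the same Fourier estimate as above — now carrying an extra factor $(r-q)^{-2\alpha}$, which after integrating in $q$ produces the series $\sum_k\lambda_k^{s-2\gamma-1+2\alpha}$, convergent by the choice of $\alpha$ — yields $\sup_{0\leq r\leq T}\mathbb{E}[|Y_r|_s^p]<\infty$, and the claim follows for $p$ large and hence, by Jensen, for all $p\geq 1$. The main obstacle is exactly this balancing in the choice of $\alpha$: it must be large enough for the time-convolution against $(t-r)^{\alpha-1}$ to be $L^p$-integrable, yet small enough that $Y$ still takes values in $H^s$, and the fact that the inequality $s<2\gamma-d/2+1$ is strict is precisely what leaves the necessary gap. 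The Gaussianity claim itself is immediate and needs no separate argument.
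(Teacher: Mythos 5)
Your proof is correct and is precisely the classical Da Prato--Zabczyk factorization argument that the paper itself does not spell out but delegates to the literature (its proof reads, in full, ``This is classical, see e.g.\ \cite{DaPratoZabczyk14, LiuRockner15}''), so you have simply written out the standard proof behind that citation. The only cosmetic point is that $\alpha$ must additionally be taken below $1/2$ so that $(r-q)^{-2\alpha}$ is integrable near $q=r$ in the bound for $\mathbb{E}[|Y_r|_s^2]$; since one may always shrink $\alpha$ within your interval, this does not affect the argument.
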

\begin{proof}
	This is classical, see e.g. \cite{DaPratoZabczyk14, LiuRockner15}.
\end{proof}

\begin{proposition}\label{prop:AdditionalRegularity} \
	\begin{enumerate}
		\item Let $(A_{s})$ and $(F_{s, \eta})$ hold. Then for any $p\geq 1$:
		\begin{align}\label{eq:BoundNonlinearProcess}
			\mathbb{E}\left[\sup_{0\leq t\leq T}|\widetilde X_t|_{s+\eta}^p\right]<\infty.
		\end{align}
		In particular, if $s+\eta<2\gamma-d/2+1$, then $(A_{s+\eta})$ is true.
		\item Let $G:C(0,T;H)\supset D(G)\rightarrow C(0,T;H)$ be any function such that $(F_{s,\eta})$ holds for $G$. Then for $s < 2\gamma-d/2+1$ and $p\geq 1$:
		\begin{align}
			\mathbb{E}\left[\sup_{0\leq t\leq T}|G(X)(t)|_{s+\eta-2}^p\right] < \infty.
		\end{align}
		In particular, 
		\begin{align}
			\mathbb{E}\int_0^T|G(X)(t)|_{s+\eta-2}^2\mathrm{d}t < \infty.
		\end{align}
	\end{enumerate}
\end{proposition}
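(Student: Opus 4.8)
The plan is to exploit the additive splitting $X=\overline X+\widetilde X$ from \eqref{eq:SplittingEquationLinear}--\eqref{eq:SplittingEquationNonlinear} together with the parabolic smoothing of the analytic semigroup $(e^{t\theta_0\Delta})_{t\geq 0}$ on $H$. For part (1) I would start from the mild formulation of \eqref{eq:SplittingEquationNonlinear},
\[
	\widetilde X_t = e^{t\theta_0\Delta}X_0 + \int_0^t e^{(t-r)\theta_0\Delta}F_{\theta_{1:K}}(\overline X+\widetilde X)(r)\,\mathrm{d}r ,
\]
and use the crucial observation that $\overline X+\widetilde X=X$, so that the forcing term is simply $F_{\theta_{1:K}}(X)(r)$. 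Since $(A_s)$ is assumed and, by Lemma \ref{lem:LinearRegularity}, $\overline X\in C(0,T;H^s)$ for $s<2\gamma-d/2+1$ --- which is the only regime in which $(A_s)$ can hold, as $X$ inherits the exact regularity of the stochastic convolution $\overline X$ --- the path $r\mapsto X(r)$ lies in $C(0,T;H^s)$ almost surely, so $(F_{s,\eta})$ may legitimately be applied to it.

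The key estimate then combines the standard bound $\lVert(-\Delta)^{\alpha}e^{t\theta_0\Delta}\rVert_{H\to H}\leq C_\alpha t^{-\alpha}$ (for $\alpha\geq 0$, $t>0$) with the exponent splitting $\tfrac{s+\eta}{2}=\tfrac{s-2+\eta+\epsilon}{2}+\tfrac{2-\epsilon}{2}$: applying $(-\Delta)^{(s+\eta)/2}$ under the integral gives
\[
	|\widetilde X_t|_{s+\eta} \leq |X_0|_{s+\eta} + C\int_0^t (t-r)^{-1+\epsilon/2}\,|F_{\theta_{1:K}}(X)(r)|_{s-2+\eta+\epsilon}\,\mathrm{d}r ,
\]
where the time singularity is integrable on $[0,t]$ precisely because $\epsilon>0$. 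By $(F_{s,\eta})$ the integrand is dominated by $c(\theta_{1:K})(1+|X|_{C(0,T;H^s)})^{b}$ uniformly in $r$; taking $\sup_{t\in[0,T]}$ and then $p$-th moments yields a bound by $\mathbb{E}[|X_0|_{s+\eta}^p]$ plus a constant times $\mathbb{E}[(1+|X|_{C(0,T;H^s)})^{bp}]$, which is finite by the assumed regularity of $X_0$ and by $(A_s)$. This proves \eqref{eq:BoundNonlinearProcess}; and if moreover $s+\eta<2\gamma-d/2+1$, Lemma \ref{lem:LinearRegularity} gives the analogous bound for $\overline X$ in $H^{s+\eta}$, so adding the two contributions via the triangle inequality yields $(A_{s+\eta})$ for $X=\overline X+\widetilde X$.

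For part (2) the argument is shorter and does not use the semigroup: since $(F_{s,\eta})$ holds for $G$ by hypothesis, and since $(A_s)$ is available for every $s<2\gamma-d/2+1$ (by iterating part (1) from the standing assumption $(A_0)$, or directly from Proposition \ref{prop:FNWellPosed} in the FitzHugh--Nagumo case), the elementary embedding $|\cdot|_{s+\eta-2}\leq|\cdot|_{s+\eta-2+\epsilon}$ (valid since $\epsilon>0$ and $\inf_k\lambda_k>0$) gives
\[
	\sup_{0\leq t\leq T}|G(X)(t)|_{s+\eta-2} \leq c(\theta_{1:K})\bigl(1+|X|_{C(0,T;H^s)}\bigr)^{b} ,
\]
and taking $p$-th moments and invoking $(A_s)$ proves the first claim; the stated integral bound then follows from $\int_0^T|G(X)(t)|_{s+\eta-2}^2\,\mathrm{d}t\leq T\sup_{0\leq t\leq T}|G(X)(t)|_{s+\eta-2}^2$.

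I expect the only genuinely delicate point to be the bookkeeping of the fractional powers of $-\Delta$ in part (1) --- in particular verifying that the resulting singularity $(t-r)^{-1+\epsilon/2}$ is integrable, which is exactly where the slack $\epsilon>0$ built into $(F_{s,\eta})$ is consumed --- together with the (routine) identification of the mild solution of \eqref{eq:SplittingEquationNonlinear} with $\widetilde X$ and the measurability/continuity statements needed to take suprema inside expectations. Everything else reduces to the semigroup smoothing bound, Hölder's inequality, and the moment estimates already furnished by $(A_s)$ and Lemma \ref{lem:LinearRegularity}.
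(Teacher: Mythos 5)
Your proof is correct and takes essentially the same route as the paper's: the mild formulation of \eqref{eq:SplittingEquationNonlinear}, the semigroup smoothing bound yielding the integrable singularity $(t-r)^{-1+\epsilon/2}$ via the exponent split $\tfrac{s+\eta}{2}=\tfrac{s-2+\eta+\epsilon}{2}+\tfrac{2-\epsilon}{2}$, and for part (2) a direct application of $(F_{s,\eta})$ combined with $(A_s)$. You merely make explicit two points the paper leaves implicit --- the availability of $(A_s)$ for the relevant $s$ in part (2) and the elementary embedding $|\cdot|_{s+\eta-2}\leq|\cdot|_{s+\eta-2+\epsilon}$ --- which is fine.
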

\begin{proof} \
	\begin{enumerate}
		\item For $t\in [0,T]$ and $\epsilon>0$,
			\begin{align*}
				|\widetilde X_t|_{s+\eta}
					&\leq |S(t)X_0|_{s+\eta} + \int_0^t|S(t-r)F_{\theta_{1:K}}(X)(t)|_{s+\eta}\mathrm{d}r \\
					&\leq |X_0|_{s+\eta} + \int_0^t(t-r)^{-1+\epsilon/2}|F_{\theta_{1:K}}(X)(t)|_{s-2+\eta+\epsilon}\mathrm{d}r \\
					&\leq |X_0|_{s+\eta} + \frac{2}{\epsilon}T^\frac{\epsilon}{2}\sup_{0\leq t\leq T}|F_{\theta_{1:K}}(X)(t)|_{s-2+\eta+\epsilon} \\
					&\leq |X_0|_{s+\eta} + \frac{2}{\epsilon}T^\frac{\epsilon}{2}c(\theta_{1:K})(1+|X|_{C(0,T;H^s)})^b,
			\end{align*}
			where $\theta_1,\dots,\theta_K$ are the true parameters. This implies \eqref{eq:BoundNonlinearProcess}. If $s+\eta < 2\gamma-d/2+1$, then a bound as in \eqref{eq:BoundNonlinearProcess} holds for $\overline X$ by Lemma \ref{lem:LinearRegularity}, and the claim follows. 
		\item This follows from 
			\begin{align}
				\mathbb{E}\left[\sup_{0\leq t\leq T}|G(X)(t)|_{s+\eta-2}^p\right] &\leq c\mathbb{E}\left[\left(1+\sup_{0\leq t\leq T}|X_t|_s\right)^{bp}\right] < \infty.
			\end{align}
	\end{enumerate}
\end{proof}

\subsection{Statistical Inference: The General Model}

The projected process $P_NX$ induces a measure $\mathbb{P}_\theta$ on $C(0,T;\R^N)$. Heuristically (see \cite[Section 7.6.4]{LiptserShiryayev77}), we have the following representation for the density with respect to $\mathbb{P}_{\overline \theta}^N$ for an arbitrary reference parameter $\overline\theta\in\Theta$:

\begin{align*}
	\frac{\mathrm{d}\mathbb{P}^N_\theta}{\mathrm{d}\mathbb{P}^N_{\overline\theta}}(X^N) &= \exp\left(-\frac{1}{\sigma^2}\int_0^T\left\langle(\theta_0-\overline\theta_0)\Delta X^N_t,(-\Delta)^{2\gamma}\mathrm{d}X^N_t\right\rangle\right. \\
	    & \quad -\frac{1}{\sigma^2}\int_0^T\left\langle P_N (F_{\theta_{1:K}}- F_{\overline\theta_{1:K}})(X),(-\Delta)^{2\gamma}\mathrm{d}X^N_t\right\rangle \\
		& \quad\left.+\frac{1}{2\sigma^2}\int_0^T\left\langle(\theta_0-\overline\theta_0)\Delta X^N_t + P_N(F_{\theta_{1:K}}-F_{\overline\theta_{1:K}})(X), \right.\right.\\
		& \quad\quad \left.\left.(-\Delta)^{2\gamma}\left[(\theta_0+\overline\theta_0)\Delta X^N_t + P_N (F_{\theta_{1:K}}+F_{\overline\theta_{1:K}})(X)\right]\right\rangle\mathrm{d}t\right).
\end{align*}

By setting the score (i.e. the gradient with respect to $\theta$ of the log-likelihood) to zero, and by formally substituting the (fixed) parameter $\gamma$ by a (free) parameter $\alpha$, we get the following maximum likelihood equations:
\begin{align*}
	\hat\theta_0^N\int_0^T|(-\Delta)^{1+\alpha}X^N_t|_H^2\mathrm{d}t 
	    &= \int_0^T\langle(-\Delta)^{1+2\alpha}X^N_t,P_NF_{\hat\theta_{1:K}^N}(X)\rangle\mathrm{d}t \hspace{-5cm} & \\
		&\quad -\int_0^T\langle(-\Delta)^{1+2\alpha}X^N_t,\mathrm{d}X^N_t\rangle, \\
	-\hat\theta_0^N\int_0^T\langle(-\Delta)^{1+2\alpha}X^N_t,\partial_{\theta_i}P_NF_{\hat\theta_{1:K}^N}(X)\rangle\mathrm{d}t \hspace{-2.5cm} & \\
	    &\hspace{-2cm}= - \int_0^T\langle(-\Delta)^{2\alpha}P_NF_{\hat\theta_{1:K}^N}(X),\partial_{\theta_i}P_NF_{\hat\theta_{1:K}^N}(X)\rangle\mathrm{d}t \hspace{-10cm} & \\
		&\hspace{-2cm}\quad + \int_0^T\langle(-\Delta)^{2\alpha}\partial_{\theta_i}P_NF_{\hat\theta_{1:K}^N}(X),\mathrm{d}X^N_t\rangle.
\end{align*}

Any solution $(\hat\theta^N_0,\dots,\hat\theta^N_K)$ to these equations is a (joint) maximum likelihood estimator (MLE) for $(\theta_0,\dots,\theta_K)$. W.l.o.g. we assume that the MLE is unique, otherwise fix any solution. We are interested in the asymptotic behavior of this estimator as $N\rightarrow\infty$, i.e. as more and more spatial information (for fixed $T>0$) is available. While identifiability of $\theta_1,\dots,\theta_K$ in finite time depends in general on additional structural assumptions on $F$, the diffusivity $\theta_0$ is expected to be identifiable in finite time under mild assumptions. Indeed, the argument is similar to \cite{CialencoGlattHoltz11, PasemannStannat20}, but we have to take into account the dependence of $\hat\theta^N_0$ on the other estimators $\hat\theta^N_1,\dots,\hat\theta^N_K$. 
Note that the likelihood equations give the following useful representation for $\hat\theta^N_0$:
\begin{align}\label{eq:EstimatorClassicalRepresentation}
	\hat\theta^N_0 = \frac{-\int_0^T\langle(-\Delta)^{1+2\alpha}X^N_t,\mathrm{d}X^N_t\rangle + \int_0^T\langle(-\Delta)^{1+2\alpha}X^N_t,P_NF_{\hat\theta_{1:K}^N}(X)\rangle\mathrm{d}t}{\int_0^T|(-\Delta)^{1+\alpha}X^N_t|_H^2\mathrm{d}t}.
\end{align}

By plugging in the dynamics of $X$ according to \eqref{eq:SPDEmodelGeneral}, we obtain the following decomposition:
\begin{align}
    \hat\theta^N_0 - \theta_0 &= \frac{\int_0^T\langle(-\Delta)^{1+2\alpha}X^N_t,P_NF_{\hat\theta^N_{1:K}}(X)\rangle\mathrm{d}t}{\int_0^T|(-\Delta)^{1+\alpha}X^N_t|_H^2\mathrm{d}t} \label{eq:EstimatorDecompositionWithBias}\\
    &\quad\hspace{-1.5cm} - \frac{\int_0^T\langle(-\Delta)^{1+2\alpha}X^N_t,P_NF_{\theta_{1:K}}(X)\rangle\mathrm{d}t}{\int_0^T|(-\Delta)^{1+\alpha}X^N_t|_H^2\mathrm{d}t} - \frac{\int_0^T\langle(-\Delta)^{1+2\alpha}X^N_t,B\mathrm{d}W^N_t\rangle}{\int_0^T|(-\Delta)^{1+\alpha}X^N_t|_H^2\mathrm{d}t}. \nonumber
\end{align}
The right-hand side vanishes whenever for large $N$ the denominator grows faster than the numerator in each of the three fractions. In principle, strong oscillation of the reaction parameter estimates $\hat\theta^N_{1:K}$ may influence the convergence rate for the first term, so in order to exclude undesirable behaviour, we assume that $\hat\theta^N_{1:K}$ is bounded in probability\footnote{A sequence of estimators $(\hat\theta^N)_{N\in\N}$ is called bounded in probability (or tight) if $\sup_{N\in\N}\mathbb{P}(|\hat\theta^N|>M)\rightarrow 0$ as $M\rightarrow\infty$.}. This is a mild assumption which is in particular satisfied if the estimators for the reaction parameters are consistent. 
In Section \ref{sec:StatInfLinearModel} we verify this condition for the case that $F$ depends linearly on $\theta_{1:K}$. 
Regarding the third term, we exploit the martingale structure of the noise in order to capture the growth in $N$. Different noise models may be used in \eqref{eq:SPDEmodelGeneral} without changing the result, as long as the numerator grows slower than the denominator. For example, the present argument directly generalizes to noise of martingale type\footnote{
The generalization of our results to the case of multiplicative noise depends crucially on the noise model, see e.g. \cite{CialencoLototsky09, Cialenco10}. This is beyond the scope of the present work. 
}. 
Now, the growth of the denominator can be quantified as follows:

\begin{lemma}\label{lem:DenominatorRate}
	Let $\alpha>\gamma-d/4-1/2$, let further $\eta,s_0>0$ such that $(A_s)$ and $(F_{s,\eta})$ are true for $s_0\leq s < 2\gamma+1-d/2$. Then 
	\begin{align}
		\int_0^T|(-\Delta)^{1+\alpha}X^N_t|_H^2\mathrm{d}t
			&\asymp \mathbb{E}\int_0^T|(-\Delta)^{1+\alpha}\overline X^N_t|_H^2\mathrm{d}t \\
			&\asymp C_\alpha N^{\frac{2}{d}(2\alpha-2\gamma+1)+1}
	\end{align}
	in probability, with
	\begin{align}
		C_\alpha = \frac{T\Lambda^{2\alpha-2\gamma+1}d}{2\theta(4\alpha-4\gamma+2+d)}.
	\end{align}
\end{lemma}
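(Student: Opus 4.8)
The plan is to prove the two asymptotic equivalences separately, starting with the right-hand one, which is purely Gaussian. Writing $\overline X^N_t = \sum_{k=1}^N \overline x_k(t)\Phi_k$, each coordinate $\overline x_k$ is an Ornstein--Uhlenbeck process $\mathrm{d}\overline x_k = -\theta_0\lambda_k \overline x_k\mathrm{d}t + \sigma\lambda_k^{-\gamma}\mathrm{d}\beta_k$ with independent scalar Brownian motions $\beta_k$. Hence $\mathbb{E}\int_0^T|(-\Delta)^{1+\alpha}\overline X^N_t|_H^2\mathrm{d}t = \sum_{k=1}^N \lambda_k^{2+2\alpha}\int_0^T\mathbb{E}[\overline x_k(t)^2]\mathrm{d}t$. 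Using $\mathbb{E}[\overline x_k(t)^2] = \frac{\sigma^2\lambda_k^{-2\gamma}}{2\theta_0\lambda_k}(1-e^{-2\theta_0\lambda_k t})$ and integrating in $t$, the $k$-th summand is $\frac{\sigma^2}{2\theta_0}\lambda_k^{1+2\alpha-2\gamma}\big(T - \frac{1-e^{-2\theta_0\lambda_k T}}{2\theta_0\lambda_k}\big)$, and since $\lambda_k\to\infty$ the correction term is summably smaller, so the sum behaves like $\frac{T\sigma^2}{2\theta_0}\sum_{k=1}^N\lambda_k^{1+2\alpha-2\gamma}$. With $\lambda_k\asymp\Lambda k^{2/d}$ and the exponent condition $\alpha>\gamma-d/4-1/2$ guaranteeing $1+2\alpha-2\gamma > -d/2$, i.e. $\frac{2}{d}(1+2\alpha-2\gamma) > -1$, the series diverges and $\sum_{k=1}^N\lambda_k^{1+2\alpha-2\gamma}\sim \Lambda^{1+2\alpha-2\gamma}\sum_{k=1}^N k^{\frac{2}{d}(1+2\alpha-2\gamma)} \sim \Lambda^{1+2\alpha-2\gamma}\frac{N^{\frac{2}{d}(2\alpha-2\gamma+1)+1}}{\frac{2}{d}(2\alpha-2\gamma+1)+1}$ by comparison with $\int x^{\frac{2}{d}(1+2\alpha-2\gamma)}\mathrm{d}x$. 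Simplifying the constant gives exactly $C_\alpha$ (note the paper writes $\theta$ for $\theta_0$ and has absorbed $\sigma^2$ into the normalization, or $\sigma=1$; I would match their convention). This settles the second $\asymp$ in expectation; I would then upgrade it to convergence in probability by a variance estimate — the summands are independent-ish Gaussian squares, so $\mathrm{Var}$ of the sum is of order $\sum_k \lambda_k^{2(1+2\alpha-2\gamma)}$ times bounded factors, which is of strictly smaller polynomial order than the square of the mean, so Chebyshev gives concentration and hence $\int_0^T|(-\Delta)^{1+\alpha}\overline X^N_t|^2\mathrm{d}t / (C_\alpha N^{\cdots}) \to 1$ in probability.

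For the first $\asymp$, I would use the splitting $X = \overline X + \widetilde X$ and show the $\widetilde X$ contribution is negligible. By Proposition~\ref{prop:AdditionalRegularity}(1), $\widetilde X$ enjoys $(A_{s+\eta})$-type bounds, so in particular $\sup_{t\le T}|(-\Delta)^{(s+\eta)/2}\widetilde X_t|$ has finite moments for $s+\eta$ up to $2\gamma-d/2+1$; choosing $s$ in the allowed window and $\eta$ from $(F_{s,\eta})$, $\widetilde X$ lives in $H^{s+\eta}$ with $s+\eta$ strictly larger than $2+2\alpha$ would be too much to ask, so instead I argue at the level of the relevant Sobolev exponent: the point is that $\mathbb{E}\int_0^T|(-\Delta)^{1+\alpha}\widetilde X^N_t|^2\mathrm{d}t \le N^{\frac{2}{d}(2+2\alpha - (s+\eta))_+ }\cdot(\text{finite})$ whenever $1+\alpha \ge (s+\eta)/2$, and one checks this exponent is strictly smaller than $\frac{2}{d}(2\alpha-2\gamma+1)+1$ precisely because $\alpha > \gamma - d/4 - 1/2$ gives slack (this is the standard gain: $\widetilde X$ is $\eta$-smoother than the regularity needed, while $\overline X$ only just fails to be in $H^{2\gamma-d/2+1}$). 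Then by Cauchy--Schwarz the cross term $\int_0^T\langle(-\Delta)^{1+\alpha}\overline X^N_t,(-\Delta)^{1+\alpha}\widetilde X^N_t\rangle\mathrm{d}t$ is bounded by the geometric mean of the two and is therefore also of lower order, so $\int_0^T|(-\Delta)^{1+\alpha}X^N_t|^2\mathrm{d}t = \int_0^T|(-\Delta)^{1+\alpha}\overline X^N_t|^2\mathrm{d}t + o(N^{\frac{2}{d}(2\alpha-2\gamma+1)+1})$ in probability, which combined with the first step yields the claim.

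The main obstacle is the bookkeeping in the second paragraph: one must verify that there genuinely exist admissible $s, \eta$ (satisfying $(A_s)$, $(F_{s,\eta})$, $s_0 \le s < 2\gamma+1-d/2$) for which the $\widetilde X$-exponent $\frac{2}{d}(2+2\alpha-(s+\eta))_+$ is strictly below the target exponent — and to handle cleanly the case $1+\alpha < (s+\eta)/2$, where $\widetilde X^N$ is in fact bounded in $H^{1+\alpha}$ uniformly in $N$ and contributes only $O(1)$, which is trivially negligible. A minor additional care is that the hypotheses are stated for a range of $s$, so one should pick the largest usable $s$ to maximize the smoothing gain from $\eta$; the inequality $\alpha > \gamma - d/4 - 1/2$ is exactly what makes the target exponent exceed $1$ (resp. exceed the $\widetilde X$-exponent) with room to spare. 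Everything else — the OU computations, the Weyl asymptotics summation, the Chebyshev concentration — is routine.
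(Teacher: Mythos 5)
Your proposal is correct and follows essentially the same route as the paper, whose proof consists of invoking Proposition \ref{prop:AdditionalRegularity}(i) together with the argument of \cite[Proposition 4.6]{PasemannStannat20} --- i.e. exactly the splitting $X=\overline X+\widetilde X$, the Ornstein--Uhlenbeck/Weyl computation with Chebyshev concentration for the linear part, and the extra $\eta$ of smoothness to make $\widetilde X^N$ (and, via Cauchy--Schwarz, the cross term) of lower order. The bookkeeping you flag does close: choosing $s$ near $2\gamma+1-d/2$ gives $s+\eta>2\gamma+1-d/2$, so the $\widetilde X$-exponent $\tfrac{2}{d}(2+2\alpha-s-\eta)_+$ falls strictly below $\tfrac{2}{d}(2\alpha-2\gamma+1)+1$, which is positive precisely because $\alpha>\gamma-d/4-1/2$.
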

\begin{proof}
	Using Proposition \ref{prop:AdditionalRegularity} (i), the proof is exactly as in \cite[Proposition 4.6] {PasemannStannat20}. 
\end{proof}

\begin{theorem}\label{thm:DiffusivityEstimatorProperties}
	Assume that the likelihood equations are solvable for $N\geq N_0$, assume that $(\hat\theta^N_i)_{N\geq N_0}$ is bounded in probability for $i=1,\dots,K$. Let $\alpha>\gamma-d/4-1/2$ and $\eta, s_0>0$ such that $(A_s)$ and $(F_{s, \eta})$ hold
	for any $s_0\leq s < 2\gamma+1-d/2$. 
	Then the following is true:
	\begin{enumerate}
		\item $\hat\theta^N_0$ is a consistent estimator for $\theta_0$, i.e. $\hat\theta^N_0\xrightarrow{\mathbb{P}}\theta_0$. 
		\item If $\eta\leq 1 + d/2$, then $N^{r}(\hat\theta^N_0-\theta_0)\xrightarrow{\mathbb{P}}0$ for any $r<\eta/d$. 
		\item If $\eta > 1+d/2$, then
		\begin{align}\label{eq:AsymptoticNormality}
			N^{\frac{1}{2}+\frac{1}{d}}(\hat\theta^N_0-\theta_0)\xrightarrow{d}\mathcal{N}(0, V),
		\end{align}
		with $V = 2\theta_0(4\alpha-4\gamma+d+2)^2 / (Td\Lambda^{2\alpha-2\gamma+1}(8\alpha-8\gamma+d+2))$.
	\end{enumerate}
\end{theorem}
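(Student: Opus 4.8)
The plan is to start from the decomposition \eqref{eq:EstimatorDecompositionWithBias} of $\hat\theta^N_0-\theta_0$ into three fractions that share the common denominator $D_N:=\int_0^T|(-\Delta)^{1+\alpha}X^N_t|_H^2\mathrm{d}t$, and to show that the numerator of each fraction is of strictly smaller order in $N$ than $D_N$. By Lemma \ref{lem:DenominatorRate}, $D_N\asymp C_\alpha N^{\frac{2}{d}(2\alpha-2\gamma+1)+1}$ in probability, so the whole argument reduces to rate estimates for the three numerators, and the asymptotic law in (3) reduces to a limit theorem for the martingale numerator divided by $D_N$.

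For the first two fractions, combine them into the single ``reaction bias'' term $D_N^{-1}\int_0^T\langle(-\Delta)^{1+2\alpha}X^N_t,P_N(F_{\hat\theta^N_{1:K}}-F_{\theta_{1:K}})(X)(t)\rangle\mathrm{d}t$. I would bound the bracket by Cauchy--Schwarz, distributing the $2(1+2\alpha)$ powers of $(-\Delta)$ between $X^N_t$ and $P_N F(X)(t)$ and then applying Cauchy--Schwarz in $t$: the factor carrying $X^N_t$ is controlled by a fractional-exponent analogue of $D_N$ (again via Lemma \ref{lem:DenominatorRate}, or directly by $(A_s)$ when the relevant exponent stays below $2\gamma+1-d/2$), while the factor carrying $P_N F(X)(t)$ is controlled by the smoothness of the nonlinearity: by $(F_{s,\eta})$ and Proposition \ref{prop:AdditionalRegularity}(ii), $F_{\theta_{1:K}}(X)(t)\in H^{s+\eta-2}$ for all $s<2\gamma+1-d/2$, and the spectral cutoff $P_N$ then costs at most a positive power of $\lambda_N\asymp\Lambda N^{2/d}$. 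Optimizing the split yields that this term is $O_{\mathbb P}(N^{-\eta/d+\delta})$ for every $\delta>0$ (for $\alpha$ in the admissible range). The random nuisance estimates enter only through the constant $c(\hat\theta^N_{1:K})$ from $(F_{s,\eta})$; since $(\hat\theta^N_{1:K})_N$ is bounded in probability and $c$ is continuous, the bound holds off an event of arbitrarily small probability, which is all one needs.

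For the third, stochastic, fraction with numerator $M^N_T:=\int_0^T\langle(-\Delta)^{1+2\alpha}X^N_t,B\mathrm{d}W^N_t\rangle$, I would compute the quadratic variation $\langle M^N\rangle_T=\sigma^2\int_0^T|(-\Delta)^{1+2\alpha-\gamma}X^N_t|_H^2\mathrm{d}t$ and apply Lemma \ref{lem:DenominatorRate} with $\alpha$ replaced by $2\alpha-\gamma$ (which is legitimate precisely in the parameter regime in which $V>0$), obtaining $\langle M^N\rangle_T\asymp c'N^{\frac{2}{d}(4\alpha-4\gamma+1)+1}$ in probability. Hence $M^N_T/D_N=O_{\mathbb P}(N^{-1/2-1/d})$. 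At this point (1) follows, since all three fractions tend to $0$, and (2) follows when $\eta\le 1+d/2$: then any $r<\eta/d$ also satisfies $r<1/2+1/d$, so both the reaction bias of order $N^{-\eta/d+\delta}$ and the stochastic fraction of order $N^{-1/2-1/d}$ vanish after multiplication by $N^r$. For (3), where $\eta>1+d/2$, the reaction bias is $o_{\mathbb P}(N^{-1/2-1/d})$, so $N^{1/2+1/d}(\hat\theta^N_0-\theta_0)=-N^{1/2+1/d}M^N_T/D_N+o_{\mathbb P}(1)$; representing $M^N$ through its Dambis--Dubins--Schwarz Brownian motion and using $\langle M^N\rangle_T/(c'N^{\frac{2}{d}(4\alpha-4\gamma+1)+1})\xrightarrow{\mathbb P}1$ together with $D_N/(C_\alpha N^{\frac{2}{d}(2\alpha-2\gamma+1)+1})\xrightarrow{\mathbb P}1$, Slutsky's lemma yields $N^{1/2+1/d}M^N_T/D_N\xrightarrow{d}\mathcal N(0,V)$ with $V$ the ratio of the two limiting normalizing constants; this is where the explicit formula for $V$ comes out, the exponents conspiring so that the net power of $N$ is exactly $0$.

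The step I expect to be most delicate is the reaction-bias estimate: one must choose the Sobolev split — and verify that $\alpha$ lies in the range where this is possible — so that the $X^N$-factor and the $F(X)$-factor are controlled simultaneously, carefully balancing the loss from the cutoff $P_N$ against the regularity gain $\eta$ supplied by $(F_{s,\eta})$, and doing so uniformly over the tight sequence $\hat\theta^N_{1:K}$. By comparison, once Lemma \ref{lem:DenominatorRate} is in hand the martingale central limit theorem is essentially routine, the only subtlety being that the limiting bracket is deterministic, so a standard CLT for continuous local martingales applies without any stable-convergence refinement.
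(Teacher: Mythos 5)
Your proposal is correct and follows essentially the same route as the paper: the decomposition \eqref{eq:EstimatorDecompositionWithBias}, Lemma \ref{lem:DenominatorRate} for the denominator and (with $\alpha$ replaced by $2\alpha-\gamma$) for the martingale bracket, a Cauchy--Schwarz plus spectral-cutoff bound of order $N^{-(\eta-\delta)/d}$ for the reaction bias using $(F_{s,\eta})$ and tightness of $\hat\theta^N_{1:K}$, and a CLT for continuous local martingales for the stochastic term. The only cosmetic differences are that the paper treats the two bias fractions separately (via the auxiliary estimator $\thetafull$) rather than combined, and normalizes the martingale by $\sqrt{\langle M^N\rangle_T}$ with a prefactor $c_N$ instead of invoking Dambis--Dubins--Schwarz, which is equivalent.
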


\begin{proof}
	By means of 
	the decomposition \eqref{eq:EstimatorDecompositionWithBias}, 
	we proceed as in 
	\cite{PasemannStannat20}. Denote by $\hat\theta^{\mathrm{full},N}_0$ the estimator which is given by \eqref{eq:EstimatorClassicalRepresentation} if the $\hat\theta^N_1,\dots,\hat\theta^N_K$ are substituted by the true values $\theta_1,\dots,\theta_K$. 
	In this case, the estimation error simplifies to 
	\begin{align}
		\hat\theta^{\mathrm{full},N}_0 - \theta_0 = -c_N \frac{\int_0^T\langle(-\Delta)^{1+2\alpha-\gamma}X^N_t,\mathrm{d}W^N_t\rangle}{\sqrt{\int_0^T|(-\Delta)^{1+2\alpha-\gamma}X^N_t|_H^2\mathrm{d}t}}
	\end{align}
	with   
	$$ 
	c_N = \frac{\sqrt{\int_0^T |(-\Delta)^{1+2\alpha-\gamma} 
	X^N_t|_H^2\mathrm{d}t}} 
	{\int_0^T|(-\Delta)^{1+\alpha}X^N_t|_H^2\mathrm{d}t}. 
	$$ 
	By Lemma \ref{lem:DenominatorRate}, the rescaled 
	prefactor $c_N N^{1/2+1/d}$ converges in probability to 
	$\sqrt{C_{2\alpha-\gamma}} / C_\alpha$. The second 
	factor converges in distribution to a standard normal 
	distribution $\mathcal{N}(0,1)$ by the central limit 
	theorem for local martingales  
	(see \cite[Theorem 5.5.4 (I)] {LiptserShiryayev89}, 
	\cite[Theorem VIII.4.17]{JacodShiryayev03}). This proves \eqref{eq:AsymptoticNormality} for $\hat\theta^{\mathrm{full},N}_0$. To conclude, we bound the bias term depending on $\hat\theta^N_1,\dots,\hat\theta^N_K$ as follows, using $|P_NY|_{s_2}\leq \lambda_N^{(s_2-s_1)/2}|P_NY|_{s_1}$ for $s_1<s_2$: Let $\delta>0$. Then
	\begin{align*}
		\int_0^T\langle(-\Delta)^{1+2\alpha}X^N_t,P_NF_{\hat\theta_{1:K}^N}(X)\rangle\mathrm{d}t \hspace{-5cm} & \\
			&\leq \left(\int_0^T|(-\Delta)^{1+\alpha}X^N_t|_H^2\mathrm{d}t\right)^\frac{1}{2}\left(\int_0^T|(-\Delta)^{\alpha}P_NF_{\hat\theta_{1:K}^N}(X)|_H^2\mathrm{d}t\right)^\frac{1}{2} \\
			&\lesssim N^{\frac{1}{d}(2\alpha-2\gamma+1)+\frac{1}{2}}\left(\int_0^T|(-\Delta)^{\alpha}P_NF_{\hat\theta_{1:K}^N}(X)|_H^2\mathrm{d}t\right)^\frac{1}{2} \\
			&\lesssim N^{\frac{2}{d}(2\alpha-2\gamma+1)+1 - \frac{\eta-\delta}{d}}\left(\int_0^T|(-\Delta)^{\gamma + \frac{1}{2}-\frac{d}{4}-1+\frac{\eta-\delta}{2}}P_NF_{\hat\theta_{1:K}^N}(X)|_H^2\mathrm{d}t\right)^\frac{1}{2},
	\end{align*}
	so using $(F_{2\gamma+1-d/2-\delta,\eta})$, 
	\begin{align*}
		\frac{\int_0^T\langle(-\Delta)^{1+2\alpha}X^N_t,P_NF_{\hat\theta_{1:K}^N}(X)\rangle\mathrm{d}t}{\int_0^T|(-\Delta)^{1+\alpha}X^N_t|_H^2\mathrm{d}t} \lesssim c(\hat\theta^N_{1:K})N^{-(\eta-\delta) / d}
	\end{align*}
	As $c(\hat\theta^N_{1:K})$ is bounded in probability and $\delta>0$ is arbitrarily small, the claim follows. The remaining term involving the true parameters $\theta_1,\dots,\theta_K$ is similar. This concludes the proof. 
\end{proof}

It is clear that a Lipschitz condition on $F$ with respect to $\theta_{1:K}$ allows to bound $\hat\theta^N_0-\thetafull$ in terms of $\lvert\hat\theta^N_{1:K}-\theta_{1:K}\rvert N^{-(\eta-\delta)/d}$ for $\delta>0$, using the notation from the previous proof. In this case, consistency of $\hat\theta^N_i$, $i=1,\dots,K$, may improve the rate of convergence of $\hat\theta^N_0$. However, as noted before, in general we cannot expect $\hat\theta^N_i$, $i=1,\dots,K$, to be consistent as $N\rightarrow\infty$. 

\subsection{Statistical Inference: The Linear Model}\label{sec:StatInfLinearModel}

We put particular emphasis on the case that the nonlinearity $F$ depends linearly on its parameters:
\begin{align}\label{eq:SPDEmodelLinear}
	\mathrm{d}X_t = \left(\theta_0\Delta X_t + \sum_{i=1}^K\theta_iF_i(X) + \overline F(X)\right)\mathrm{d}t + B\mathrm{d}W_t.
\end{align}
This model includes the FitzHugh--Nagumo system in the form \eqref{eq:SPDEmodelLinearFN}.
We state an additional verifiable condition, depending on the contrast parameter $\alpha\in\R$, which guarantees that the likelihood equations are well-posed, amongst others.

\begin{itemize}
	\item[$(L_\alpha)$] The terms $F_1(Y),\dots,F_K(Y)$ are well-defined as well as linearly independent in $L^2(0,T;H^{2\alpha})$ for every non-constant $Y\in C(0,T;C(\mathcal{D}))$.
\end{itemize}
In particular, condition $(L_\alpha)$ implies for $i=1,\dots,K$ that
\begin{align}
	\int_0^T|(-\Delta)^\alpha F_i(X)|_H^2\mathrm{d}t > 0.
\end{align}

For linear SPDEs, similar considerations have been made first in \cite[Chapter 3]{Huebner93}.
The maximum likelihood equations for the linear model \eqref{eq:SPDEmodelLinear} simplify to 

\begin{align}
	A_N(X)\hat\theta^N(X) = b_N(X),
\end{align}
where
\begin{align*}
	A_N(X)_{0,0} &= \int_0^T|(-\Delta)^{1+\alpha}X^N_t|_H^2\mathrm{d}t, \\
	A_N(X)_{0,i} = A_N(X)_{i,0} &= -\int_0^T\langle(-\Delta)^{1+2\alpha}X^N_t,P_NF_i(X)\rangle\mathrm{d}t, \\
	A_N(X)_{i,j} &= \int_0^T\langle (-\Delta)^{2\alpha}P_NF_i(X),P_NF_j(X)\rangle\mathrm{d}t
\end{align*}
for $i,j=1,\dots,K$, and
\begin{align*}
	b_N(X)_0 &= -\int_0^T\langle(-\Delta)^{1+2\alpha}X^N_t,\mathrm{d}X^N_t\rangle \\
	         &\quad + \int_0^T\langle(-\Delta)^{1+2\alpha}X^N_t,P_N\overline F(X)\rangle\mathrm{d}t, \\
	b_N(X)_i &= \int_0^T\langle(-\Delta)^{2\alpha}P_NF_i(X),\mathrm{d}X^N_t\rangle \\
	         &\quad - \int_0^T\langle(-\Delta)^{2\alpha}P_NF_i(X),P_N\overline F(X)\rangle\mathrm{d}t
\end{align*}
for $i=1,\dots,K$. \\

In order to apply Theorem \ref{thm:DiffusivityEstimatorProperties}, we 
need that the estimators $\hat\theta^N_1,\dots,\hat\theta^N_K$ are bounded in probability. 
\begin{proposition}\label{prop:LinearModelEstimatorBounded}
	In the setting of this section, let $\eta,s_0>0$ such that $(A_s)$ and $(F_{s,\eta})$ are true for $s_0\leq s < 2\gamma+1-d/2$. For $\gamma-d/4-1/2<\alpha\leq \gamma\wedge(\gamma-d/4-1/2+\eta/2)\wedge(\gamma-d/8-1/4+\eta/4)$, let $(L_\alpha)$ be true. Then the $\hat\theta^N_i$, $i=0,\dots,K$, are bounded in probability.
\end{proposition}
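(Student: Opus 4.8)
The plan is to solve the linear system $A_N(X)\hat\theta^N(X)=b_N(X)$ by Cramer's rule and show that each component $\hat\theta^N_i$ stays bounded in probability. The key idea is to renormalise the matrix $A_N(X)$ so that its entries have nontrivial limits. Observe from Lemma \ref{lem:DenominatorRate} that $A_N(X)_{0,0}\asymp C_\alpha N^{\frac{2}{d}(2\alpha-2\gamma+1)+1}$ in probability, while the off-diagonal entries $A_N(X)_{0,i}$ involve only the product of $(-\Delta)^{1+2\alpha}X^N_t$ with $P_NF_i(X)$. First I would bound $A_N(X)_{0,i}$ exactly as in the proof of Theorem \ref{thm:DiffusivityEstimatorProperties}: by Cauchy--Schwarz, $|A_N(X)_{0,i}|\lesssim (\int_0^T|(-\Delta)^{1+\alpha}X^N_t|_H^2\mathrm{d}t)^{1/2}(\int_0^T|(-\Delta)^\alpha P_NF_i(X)|_H^2\mathrm{d}t)^{1/2}$, and then use the spectral inequality $|P_NY|_{s_2}\leq\lambda_N^{(s_2-s_1)/2}|P_NY|_{s_1}$ together with $(F_{2\gamma+1-d/2-\delta,\eta})$ to trade powers of $N$ for regularity, giving $|A_N(X)_{0,i}|\lesssim N^{\frac{1}{d}(2\alpha-2\gamma+1)+\frac{1}{2}}\cdot N^{\frac{1}{d}(\alpha-\gamma-1/2+d/4-\eta/2+\delta)}$ up to a factor $c(\theta_{1:K})$. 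Similarly, $A_N(X)_{i,j}=\int_0^T\langle(-\Delta)^{2\alpha}P_NF_i(X),P_NF_j(X)\rangle\mathrm{d}t$ is bounded above by $\int_0^T|(-\Delta)^{\alpha}P_NF_i(X)|_H|(-\Delta)^{\alpha}P_NF_j(X)|_H\mathrm{d}t$, which under $(F_{s,\eta})$ and the assumption $\alpha\leq\gamma-d/4-1/2+\eta/2$ stays bounded in $N$ (the exponent on $N$ is $\frac{2}{d}(\alpha-\gamma-1/2+d/4-\eta/2+\delta)\leq 0$ for small $\delta$).

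Next I would handle the right-hand side $b_N(X)$. The term $b_N(X)_0$ is the usual numerator: $-\int_0^T\langle(-\Delta)^{1+2\alpha}X^N_t,\mathrm{d}X^N_t\rangle$ plus a lower-order $\overline F$ contribution; plugging in the dynamics \eqref{eq:SPDEmodelGeneral}, its dominant part is $\theta_0\int_0^T|(-\Delta)^{1+\alpha}X^N_t|_H^2\mathrm{d}t$, which is $\asymp N^{\frac{2}{d}(2\alpha-2\gamma+1)+1}$, together with a martingale term $\int_0^T\langle(-\Delta)^{1+2\alpha-\gamma}X^N_t,\mathrm{d}W^N_t\rangle$ of strictly smaller order and an $F$-term controlled as in Theorem \ref{thm:DiffusivityEstimatorProperties}. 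For $b_N(X)_i=\int_0^T\langle(-\Delta)^{2\alpha}P_NF_i(X),\mathrm{d}X^N_t\rangle$ minus an $\overline F$ term, I substitute $\mathrm{d}X^N_t=(\theta_0\Delta X^N_t+P_NF_{\theta_{1:K}}(X))\mathrm{d}t+B\mathrm{d}W^N_t$; the drift contributions are estimated by Cauchy--Schwarz and $(F_{s,\eta})$ exactly as the matrix entries, while the stochastic integral $\int_0^T\langle(-\Delta)^{2\alpha-\gamma}P_NF_i(X),\mathrm{d}W^N_t\rangle$ has quadratic variation $\int_0^T|(-\Delta)^{2\alpha-\gamma}P_NF_i(X)|_H^2\mathrm{d}t$, which is bounded in probability precisely when $2\alpha-\gamma\leq\gamma-d/4-1/2+\eta/2$, i.e. $\alpha\leq\gamma-d/8-1/4+\eta/4$ — this is the third term in the stated upper bound for $\alpha$. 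Thus with the renormalisation $D_N=\mathrm{diag}(N^{\frac{2}{d}(2\alpha-2\gamma+1)+1},1,\dots,1)$, one has $D_N^{-1/2}A_N(X)D_N^{-1/2}\to A_\infty$ and $D_N^{-1/2}b_N(X)\to b_\infty$ in a suitable sense, with $A_\infty$ block-diagonal-dominant.

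The main obstacle will be showing that the renormalised matrix $D_N^{-1/2}A_N(X)D_N^{-1/2}$ is \emph{invertible in the limit with probability one}, equivalently that $\det A_N(X)$ does not degenerate too fast. This is exactly where condition $(L_\alpha)$ enters: linear independence of $F_1(X),\dots,F_K(X)$ in $L^2(0,T;H^{2\alpha})$ guarantees that the Gram-type $K\times K$ block $(\int_0^T\langle(-\Delta)^\alpha P_NF_i(X),(-\Delta)^\alpha P_NF_j(X)\rangle\mathrm{d}t)_{i,j}$ is, for $N$ large, a positive definite matrix bounded below away from $0$ — since $X\in C(0,T;C(\mathcal D))$ a.s. by Proposition \ref{prop:AdditionalRegularity} and the Sobolev embedding, and $X$ is non-constant a.s. (it is driven by nondegenerate noise). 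Combining this lower bound on the $(i,j)$-block with the upper bounds on the coupling entries $A_N(X)_{0,i}$ and the $N^{\frac{2}{d}(2\alpha-2\gamma+1)+1}$-growth of $A_N(X)_{0,0}$ via a Schur-complement argument shows $\hat\theta^N=A_N(X)^{-1}b_N(X)$ is bounded in probability in every coordinate, and in particular the likelihood equations are solvable for $N$ large, so Theorem \ref{thm:DiffusivityEstimatorProperties} applies. I would close by remarking that the three constraints on $\alpha$ collected above are exactly those in the statement, the condition $\alpha\leq\gamma$ ensuring that $2\alpha-\gamma\geq\gamma-d/4-1/2$ is compatible with the earlier hypothesis $\alpha>\gamma-d/4-1/2$.
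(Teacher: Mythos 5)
Your proposal follows essentially the same route as the paper: invert $A_N(X)$ via Cramer's rule, bound $\det A_N(X)$ from below by the product of its diagonal entries using $(L_\alpha)$ together with the asymptotic orthogonality of $P_N\Delta X$ to the $P_NF_i(X)$ (the paper's Lemma \ref{lem:DeterminantRate}, which you phrase as a Schur-complement statement), and control the quadratic variations of the stochastic integrals, which is exactly where the three constraints on $\alpha$ arise. One bookkeeping slip: as written, $D_N^{-1/2}b_N(X)$ does \emph{not} converge --- its zeroth component is $\approx\theta_0\sqrt{A_N(X)_{0,0}}\rightarrow\infty$ --- so you should first center the system as the paper does, writing $b_N(X)=\bar b_N(X)+A_N(X)\theta$ with $\bar b_N(X)_i=\int_0^T\langle(-\Delta)^{2\alpha-\gamma}P_NF_i(X),\mathrm{d}W^N_t\rangle$, so that $\hat\theta^N-\theta=A_N(X)^{-1}\bar b_N(X)$ and only the martingale terms need to be compared against $\sqrt{a^N_ia^N_j}$; this is also where $\alpha\leq\gamma$ is actually used (it makes $\mathbb{E}\bigl[\bar b_N(X)_0^2\bigr]/A_N(X)_{0,0}\asymp N^{\frac{4}{d}(\alpha-\gamma)}$ bounded), rather than the compatibility reason you give at the end.
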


The proof of Proposition \ref{prop:LinearModelEstimatorBounded} is given in Appendix \ref{app:LinearModelEstimatorBounded}. 
We note that the upper bound on $\alpha$ can be relaxed in general, depending on the exact asymptotic behaviour of $A_N(X)_{i,i}$, $i=1,\dots,K$. Proposition \ref{prop:LinearModelEstimatorBounded} together with Theorem \ref{thm:DiffusivityEstimatorProperties} gives conditions for $\hat\theta^N_0$ to be consistent and asymptotically normal in the linear model \eqref{eq:SPDEmodelLinear}. 
In particular, we immediately get for the activator-inhibitor model \eqref{eq:Activator}, \eqref{eq:Inhibitor}, as the linear independency condition $(L_\alpha)$ is trivially satisfied and $\eta$ can be chosen arbitrarily close to $2$:
\begin{theorem}
	Let $\gamma>d/4$. Then $\hat\theta^N_0$ has the following properties in the activator-inhibitor model \eqref{eq:Activator}, \eqref{eq:Inhibitor}:
	\begin{enumerate}
		\item In $d=1$, let 
		$\gamma-3/4<\alpha \leq\gamma$. 
		Then $\hat\theta^N_0$ is a consistent estimator for $\theta_0$, which is asymptotically normal as in \eqref{eq:AsymptoticNormality}.
		\item In $d=2$, let $\gamma-1<\alpha < \gamma$. Then $\hat\theta^N_0$ is a consistent estimator for $\theta_0$ with optimal convergence rate, i.e. $N^r(\hat\theta^N_0-\theta_0)\xrightarrow{\mathbb{P}}0$ for any $r<1$. 
	\end{enumerate}
\end{theorem}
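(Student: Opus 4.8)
The plan is to deduce this theorem as a direct corollary of Proposition~\ref{prop:LinearModelEstimatorBounded} and Theorem~\ref{thm:DiffusivityEstimatorProperties}, so the proof is mostly a matter of checking that the FitzHugh--Nagumo system in the form \eqref{eq:SPDEmodelLinearFN} satisfies all the hypotheses of those two results, with the correct value of $\eta$. First I would record the regularity input: Proposition~\ref{prop:FNWellPosed} gives that $U$ satisfies $(A_s)$ for every $s<2\gamma-d/2+1$, which is exactly the range of $s$ appearing in the hypotheses of both Proposition~\ref{prop:LinearModelEstimatorBounded} and Theorem~\ref{thm:DiffusivityEstimatorProperties}. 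So the only remaining structural check is the nonlinearity condition $(F_{s,\eta})$ for the three terms $F_1,F_2,F_3$ defined in \eqref{eq:FNmodelF1}--\eqref{eq:FNmodelF3}, and identifying the best exponent $\eta$.

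Next I would verify $(F_{s,\eta})$ with $\eta$ arbitrarily close to $2$. The point is that $F_1$ and $F_2$ are pointwise polynomial maps (of degrees $2$ and $3$) with bounded smooth coefficients, and on a domain in dimension $d\le 2$ with $s$ in the stated range one has $H^s\hookrightarrow C(\mathcal{D})$, so multiplication is bounded on $H^s$; hence $F_i(Y)$ lands in $H^s$, i.e. $F_i$ maps $H^s\to H^{s}=H^{s-2+\eta}$ with $\eta=2$ (losing an arbitrarily small $\epsilon$ only to accommodate the $\epsilon$ in the definition of $(F_{s,\eta})$, which is why $\eta$ is ``arbitrarily close to $2$'' rather than equal to $2$). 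For $F_3$, the convolution with the analytic semigroup $e^{t(D_V\Delta-\epsilon I)}$ is smoothing, so $F_3(U)(t)$ is at least as regular as $U_t$ in space — in fact more regular — and the same $\eta$ works; one checks the $C(0,T;H)$-valued bound of the form in $(F_{s,\eta})$ directly from $\lvert e^{t(D_V\Delta-\epsilon I)}\rvert_{H^s\to H^s}\le e^{-\epsilon t}$. The linear-independence condition $(L_\alpha)$ is immediate: $F_1,F_2,F_3$ are visibly non-proportional functionals of a non-constant path $Y$ (e.g. $F_2(Y)=Y^2(u_0-Y)$ is genuinely cubic while $F_1$ is quadratic and $F_3$ is a time-integral), so they are linearly independent in $L^2(0,T;H^{2\alpha})$; this is the content of the phrase ``trivially satisfied''.

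With $\eta$ any number in $(0,2)$ and $d\le 2$, I would then just feed these into the two cited results. In $d=1$: the range for $\alpha$ in Proposition~\ref{prop:LinearModelEstimatorBounded} is $\gamma-3/4<\alpha\le\gamma\wedge(\gamma-3/4+\eta/2)\wedge(\gamma-3/8+\eta/4)$, and since $\eta$ can be taken close to $2$ the two $\eta$-dependent upper bounds exceed $\gamma$, so the admissible set is exactly $\gamma-3/4<\alpha\le\gamma$; this gives boundedness in probability of $\hat\theta^N_{1:K}$, hence Theorem~\ref{thm:DiffusivityEstimatorProperties} applies, and since $\eta>1+d/2=3/2$ can be arranged, part (3) of that theorem yields consistency and asymptotic normality as in \eqref{eq:AsymptoticNormality}. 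In $d=2$: taking $\gamma-1<\alpha<\gamma$, the analogous intersection in Proposition~\ref{prop:LinearModelEstimatorBounded} again reduces to this interval for $\eta$ close to $2$, so $\hat\theta^N_{1:K}$ is bounded in probability; here $\eta\le 1+d/2=2$ always, so part (2) of Theorem~\ref{thm:DiffusivityEstimatorProperties} gives $N^r(\hat\theta^N_0-\theta_0)\to 0$ in probability for every $r<\eta/d$, and letting $\eta\uparrow 2$ gives every $r<1$, which is the optimal rate.

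The only genuine obstacle is the verification of $(F_{s,\eta})$ with the sharp exponent $\eta\approx 2$ — in particular making sure the Sobolev multiplication estimates really give back the full $s$ derivatives for the polynomial terms on the relevant range of $s$ (which needs $H^s\hookrightarrow L^\infty$, valid here because $s>d/2$ can be arranged within $s<2\gamma-d/2+1$ precisely because $\gamma>d/4$), and handling the functional dependence of $a$ on $\lvert U\rvert_{L^2}$ in $F_1$ using $a\in C^1_b$. Everything else is bookkeeping: matching the parameter ranges in the two quoted results and observing that the two $\eta$-dependent upper constraints on $\alpha$ become inactive as $\eta\to 2$.
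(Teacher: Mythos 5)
Your proposal is correct and follows essentially the same route as the paper: the theorem is obtained as an immediate corollary of Proposition~\ref{prop:LinearModelEstimatorBounded} and Theorem~\ref{thm:DiffusivityEstimatorProperties}, using that $(L_\alpha)$ holds trivially and that $(F_{s,\eta})$ holds for $F_1,F_2,F_3$ with $\eta$ arbitrarily close to $2$ (the paper verifies this via the algebra property of $H^s$ and the smoothing of $e^{t(D_V\Delta-\epsilon I)}$ in the proof of Proposition~\ref{prop:FNWellPosed}). Your bookkeeping of the $\alpha$-ranges, including why the bound is $\alpha\le\gamma$ in $d=1$ but strictly $\alpha<\gamma$ in $d=2$, matches the paper's intent.
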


\section{Application to Activator-Inhibitor Models of Actin Dynamics}
\label{sec:data}

The actin cytoskeleton is a dense polymer meshwork at the inner face of the plasma membrane that determines the shape and mechanical stability of a cell.
Due to the continuous polymerization and depolymerization of the actin filaments, it displays a dynamic network structure that generates complex spatiotemporal patterns.
These patterns are the basis of many essential cellular functions, such as endocytic processes, cell shape changes, and cell motility~\cite{blanchoin_actin_2014}.
The dynamics of the actin cytoskeleton is controlled and guided by upstream signaling pathways, which are known to display typical features of nonequilibrium systems, such as oscillatory instabilities and the emergence of traveling wave patterns~\cite{devreotes_excitable_2017,beta_intracellular_2017}.
Here we use giant cells of the social amoeba {\it D.~discoideum} that allow us to observe these cytoskeletal patterns over larger spatial domains~\cite{Gerhardt_actin_2014}.
Depending on the genetic background and the developmental state of the cells, different types of patterns emerge in the cell cortex.
In particular, pronounced actin wave formation is observed as the consequence of a mutation in the upstream signaling pathway --- a knockout of the RasG-inactivating RasGAP NF1 --- which is present for instance in the commonly used laboratory strain AX2~\cite{veltman_plasma_2016}.
\begin{figure}[t]
	\includegraphics[width=0.98\textwidth]{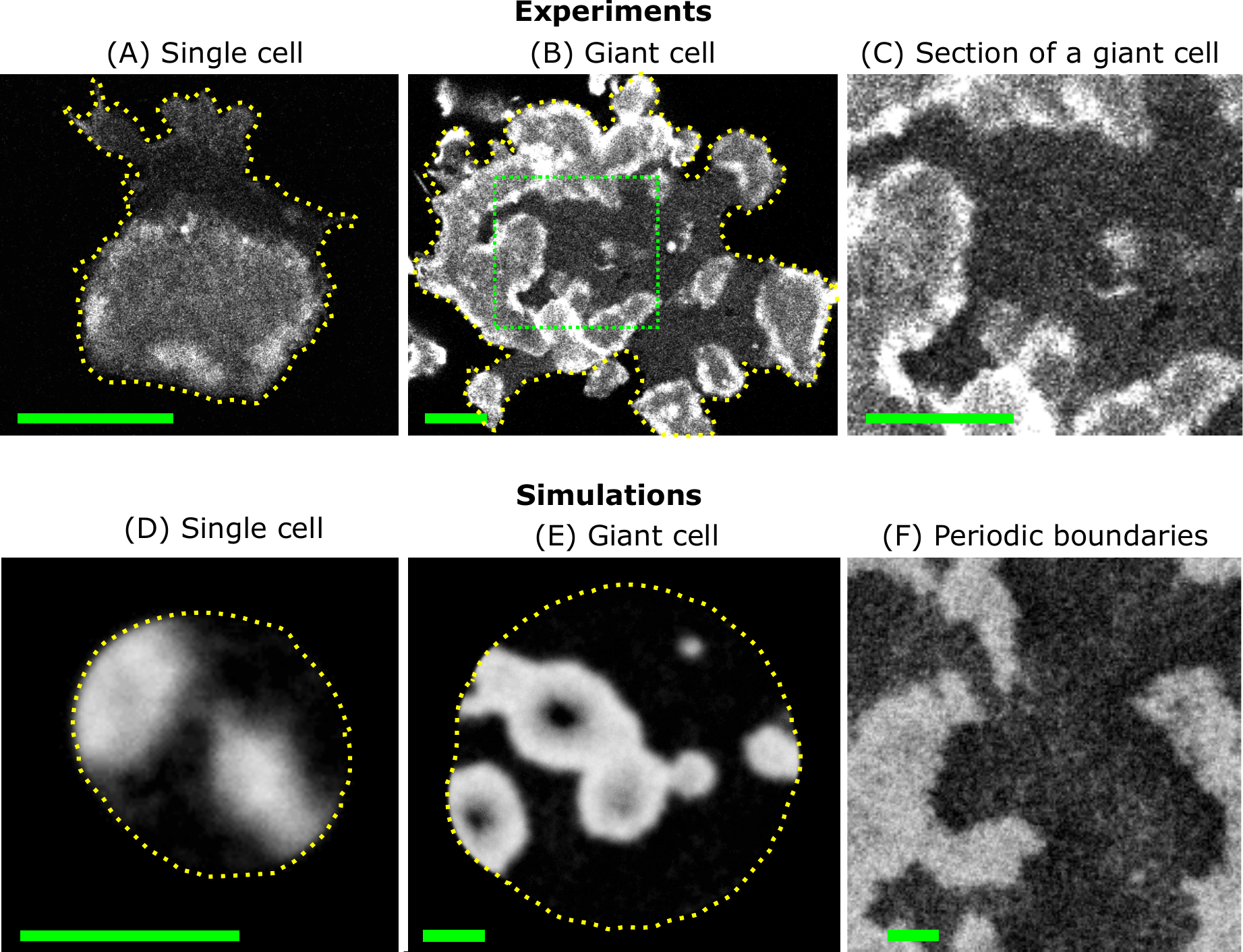}
	\caption{Actin waves in experiments (top) and model simulations (bottom).
	(A)~Normal-sized cell with a circular actin wave.
	(B)~Giant cell with several fragmented actin waves.
	(C)~Subsection of the cortical area of the giant cell shown in (B), indicated as dotted green rectangle.
	Experimental images are confocal microscopy recordings of mRFP-LimE$\Delta$ expressing {\it D.~discoideum} AX2 cells, see~\cite{Gerhardt_actin_2014}.
	(Bottom) Simulations of the stochastic reaction-diffusion model \eqref{eq:IntroActivator}, \eqref{eq:IntroInhibitor} in a (D)~small and (E)~large domain, defined by a dynamically evolving phase field and (F)~with periodic boundary conditions.
	For details on the phase field simulations, see~\cite{FlemmingFontAlonsoBeta20}.
	(Scale bars, 10~\si{\micro}m)
	Details on the numerical implementation can be found in Appendix \ref{app:methods}.
	\label{fig:Experiment_and_Simulations}}
\end{figure}
Giant cells of NF1-deficient strains thus provide a well-suited setting to study the dynamics of actin waves and their impact on cell shape and division~\cite{FlemmingFontAlonsoBeta20}.

In Figure~\ref{fig:Experiment_and_Simulations}A and B, we show a normal-sized and a giant {\it D.~discoideum} cell in the wave-forming regime for comparison.
Images were recorded by confocal laser scanning microscopy and display the distribution of mRFP-LimE$\Delta$, a fluorescent marker for filamentous actin, in the cortex at the substrate-attached bottom membrane of the cell.
As individual actin filaments are not resolved by this method, the intensity of the fluorescence signal reflects the local cortical density of filamentous actin.
Rectangular subsections of the inner part of the cortex of giant cells as displayed in panel~(C) were used for data analysis in Section~\ref{sec:data_analysis}. 

Many aspects of subcellular dynamical patterns have been addressed by reaction-diffusion models.
While some models rely on detailed modular approaches~\cite{beta_bistable_2008,devreotes_excitable_2017}, others have focused on specific parts of the upstream signaling pathways, such as the phos\-pha\-ti\-dyl\-i\-no\-si\-tol lipid signaling system~\cite{arai_self-organization_2010} or Ras signaling \cite{fukushima_excitable_2019}.
To describe wave patterns in the actin cortex of giant {\it D.~discoideum} cells, the noisy FitzHugh--Nagumo type reaction-diffusion system \eqref{eq:IntroActivator}, \eqref{eq:IntroInhibitor}, combined with a dynamic phase field, has been recently proposed~\cite{FlemmingFontAlonsoBeta20}.
In contrast to the more detailed biochemical models, the structure of this model is rather simple.
Waves are generated by noisy bistable/exci\-ta\-ble kinetics with an additional control of the total amount of activator $U$.
This constraint dynamically regulates the amount of $U$ around a constant level in agreement with the corresponding biological restrictions.
Elevated levels of the activator represent typical cell front markers, such as active Ras, PIP3, Arp2/3, and freshly polymerized actin that are also concentrated in the inner part of actin waves.
On the other hand, markers of the cell back, such as PIP2, myosin II, and cortexillin, correspond to low values of $U$ and are found outside the wave area~\cite{schroth-diez_propagating_2009}.
Tuning of the parameter $b$ allows to continuously shift from bistable to excitable dynamics, both of which are observed in experiments with {\it D.~discoideum} cells.
In Figure~\ref{fig:Experiment_and_Simulations}D-F, the results of numerical simulations of this model 
displaying excitable dynamics are shown. 
Examples for bounded domains that correspond to normal-sized and giant cells are shown, as well as results with periodic boundary conditions that were used in the subsequent analysis.

Model parameters, such as the diffusivities, are typically chosen in an {\it ad hoc} fashion to match the speed of intracellular waves with the experimental observations.
The approach introduced in Section~\ref{sec:Theory} now allows us to estimate diffusivities from data in a more rigorous manner.
On the one hand, we may test the validity of our method on {\it in silico} data of model simulations, where all parameters are predefined.
On the other hand, we can apply our method to experimental data, such as the recordings of cortical actin waves displayed in Figure~\ref{fig:Experiment_and_Simulations}C.
This will yield an estimate of the diffusivity of the activator $U$, as dense areas of filamentous actin reflect increased concentrations of activatory signaling components.
Note, however, that the estimated value of $D_U$ should not be confused with the molecular diffusivity of a specific signaling molecule.
It rather reflects an effective value that includes the diffusivities of many activatory species of the signaling network and is furthermore affected by the specific two-dimensional setting of the model that neither includes the kinetics of membrane attachment/detachment nor the three-dimensional cytosolic volume.

\section{Diffusivity Estimation on \\ Simulated and Real Data}
\label{sec:data_analysis}

In this section we apply the methods from Section \ref{sec:Theory} to synthetic data obtained from a numerical simulation and to cell data stemming from experiments as described in Section~\ref{sec:data}.
We follow the formalism from Theorem \ref{thm:DiffusivityEstimatorProperties} and perform a Fourier decomposition on each data set. Set $\phi_k(x) = \cos(2\pi k x)$ for $k\leq 0$ and $\phi_k(x)=\sin(2\pi k x)$ for $k>0$, then  $\Phi_{k,l}(x, y)=\phi_k(x/L_1)\phi_l(y/L_2)$, $k,l\in\Z$, form an eigenbasis for $-\Delta$ on the rectangular domain $\mathcal{D}=[0,L_1]\times[0,L_2]$. The corresponding eigenvalues are given by $\lambda_{k,l}=4\pi^2((k/L_1)^2+(l/L_2)^2)$. As before, we choose an ordering $((k_N,l_N))_{N\in\N}$ of the eigenvalues (excluding $\lambda_{0,0}=0$) such that $\lambda_N = \lambda_{k_N,l_N}$ is increasing. \\

In the sequel, we will use different versions of $\hat\theta^N_0$ which correspond to different model assumptions on the reaction term $F$, concerning both the effects included in the model and a priori knowledge on the parametrization. While all of these estimators enjoy the same asymptotic properties as $N\rightarrow\infty$, it is reasonable to expect that they exhibit huge qualitative differences for fixed $N\in\N$, depending on how much knowledge on the generating dynamics is incorporated. In order to describe the model nonlinearities that we presume, we use the notation $F_1,F_2,F_3$ as in \eqref{eq:FNmodelF1}, \eqref{eq:FNmodelF2}, \eqref{eq:FNmodelF3}. As a first simplification, we substitute $F_1$ by $\widetilde F_1$ given by
\begin{align}
	\widetilde F_1(U) = -U(u_0-U)
\end{align}
in all estimators below. This corresponds to an approximation of the function $a$ by an effective average value $\bar a>0$. While this clearly does not match the full model, we will see that it does not pose a severe restriction as $a(U)$ tends to stabilize in the simulation. 
Recall the explicit representation \eqref{eq:EstimatorClassicalRepresentation} of $\hat\theta^N_0$. As before, $K$ is the number of nuisance parameters appearing in the nonlinear term $F$. We construct the following estimators which capture qualitatively different model assumptions:
\begin{enumerate}
	\item The {\it linear estimator} $\thetalin$ results from presuming $K=0$ and $F=0$.
	\item The {\it polynomial} or {\it Schl\"ogl estimator} $\thetapol$, where $K=0$ and 
	\begin{align*}
		F(u)&=k_1u(u_0-u)(u-\bar au_0)\\
		    &=-k_1\bar au_0u(u_0-u)+k_1u^2(u_0-u) \\
		    &= \theta_1\widetilde F_1(u)+\theta_2F_2(u)
	\end{align*}
	for known constants $k_1,u_0,\bar a>0$, $\theta_1 = k_1u_0\bar a$, $\theta_2=k_1$. The corresponding SPDE \eqref{eq:SPDEmodelGeneral} is called stochastic Nagumo equation or stochastic Schl\"ogl equation and arises as the limiting case $\epsilon\rightarrow 0$ of the stochastic FitzHugh--Nagumo system. 
	\item The {\it full} or {\it FitzHugh--Nagumo estimator} $\thetafull$, where $K=0$ and 
	\begin{align}\label{eq:FullEstimatorNonlinearity}
	\begin{matrix*}[l]
		F(u)&=k_1u(u_0-u)(u-\bar a u_0)-k_2v\\
		    &= \theta_1\widetilde F_1(u)+\theta_2F_2(u)+\theta_2F_3(u)
	\end{matrix*}
	\end{align}
	with $\theta_1 = k_1u_0\bar a$, $\theta_2=k_1$ and $\theta_3 = k_2\epsilon b$, where $v$ is given by $v_t=\int_0^te^{(t-r)(D_V\Delta-\epsilon I)}u_r\mathrm{d}r$. As before, $k_1,k_2,u_0,\bar a,D_v,\epsilon>0$ are known. 
\end{enumerate}
Furthermore, we modify $\thetafull$ in order to estimate different subsets of model parameters at the same time. We use the notation $\hat\theta^{i,N}_0$, where $i$ is the number of simultaneously estimated parameters. More precisely, we set $\hat\theta^{1,N}_0=\thetafull$, and additionally:
\begin{enumerate}
	\item The estimator $\thetatwo$ results from $K=1$ and $F_{\theta_1}$ given by \eqref{eq:FullEstimatorNonlinearity} for known $\theta_2,\theta_3>0$. This corresponds to an unknown $\bar a$.
	\item The estimator $\thetathree$ results from $K=2$ and $F_{\theta_1,\theta_2}$ given by \eqref{eq:FullEstimatorNonlinearity}. Only $\theta_3$ is known.
	\item The estimator $\thetafour$ results from $K=3$ and $F_{\theta_1,\theta_2,\theta_3}$ given by \eqref{eq:FullEstimatorNonlinearity}. All three parameters $\theta_1,\theta_2,\theta_3$ are unknown. 
\end{enumerate}

In all estimators in this section we set 
the regularity adjustment 
$\alpha=0$. This is a reasonable choice if the driving noise in \eqref{eq:Activator}, \eqref{eq:Inhibitor} is close to white noise. \\

It is worthwhile to note that $\thetalin$ is invariant under rescaling the intensity of the data, i.e. substituting $X$ by $cX$, $c>0$. This has the advantage that we do not need to know the physical units of the data. 
In fact, the intensity of fluorescence microscopy data may vary due to different expression levels of reporter proteins within a cell population, or fluctuations in the illumination. 
While invariance under intensity rescaling 
is a desirable property, the fact that nonlinear reaction terms are not taken into account may outweigh this advantage, especially if the SPDE model is close to the true generating process of the data. This is the case for synthetic data. The discussion in Section \ref{sec:PerformanceSyntheticData} shows that even if the model specific correction terms in \eqref{eq:EstimatorClassicalRepresentation} vanish asymptotically, their effect on the estimator may be huge in the non-asymptotic regime, especially at low resolution level. However, real data may behave differently, and a detailed nonlinear model may not reveal additional information on the underlying diffusivity, see Section \ref{sec:RealData}.

\subsection{Performance on Synthetic Data}\label{sec:PerformanceSyntheticData}

\begin{figure}[t]
	\includegraphics[width=0.49\textwidth]{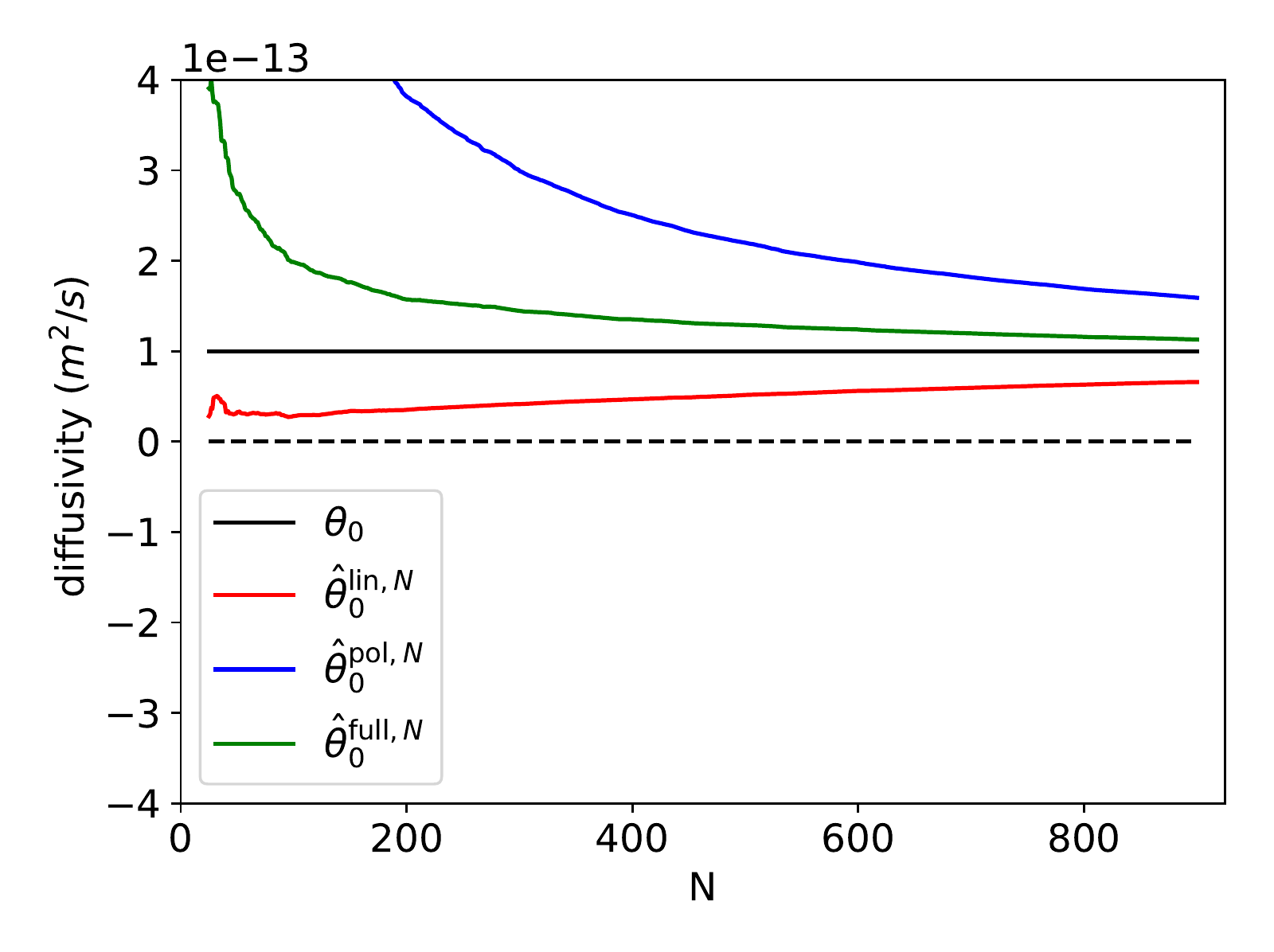}
	\includegraphics[width=0.49\textwidth]{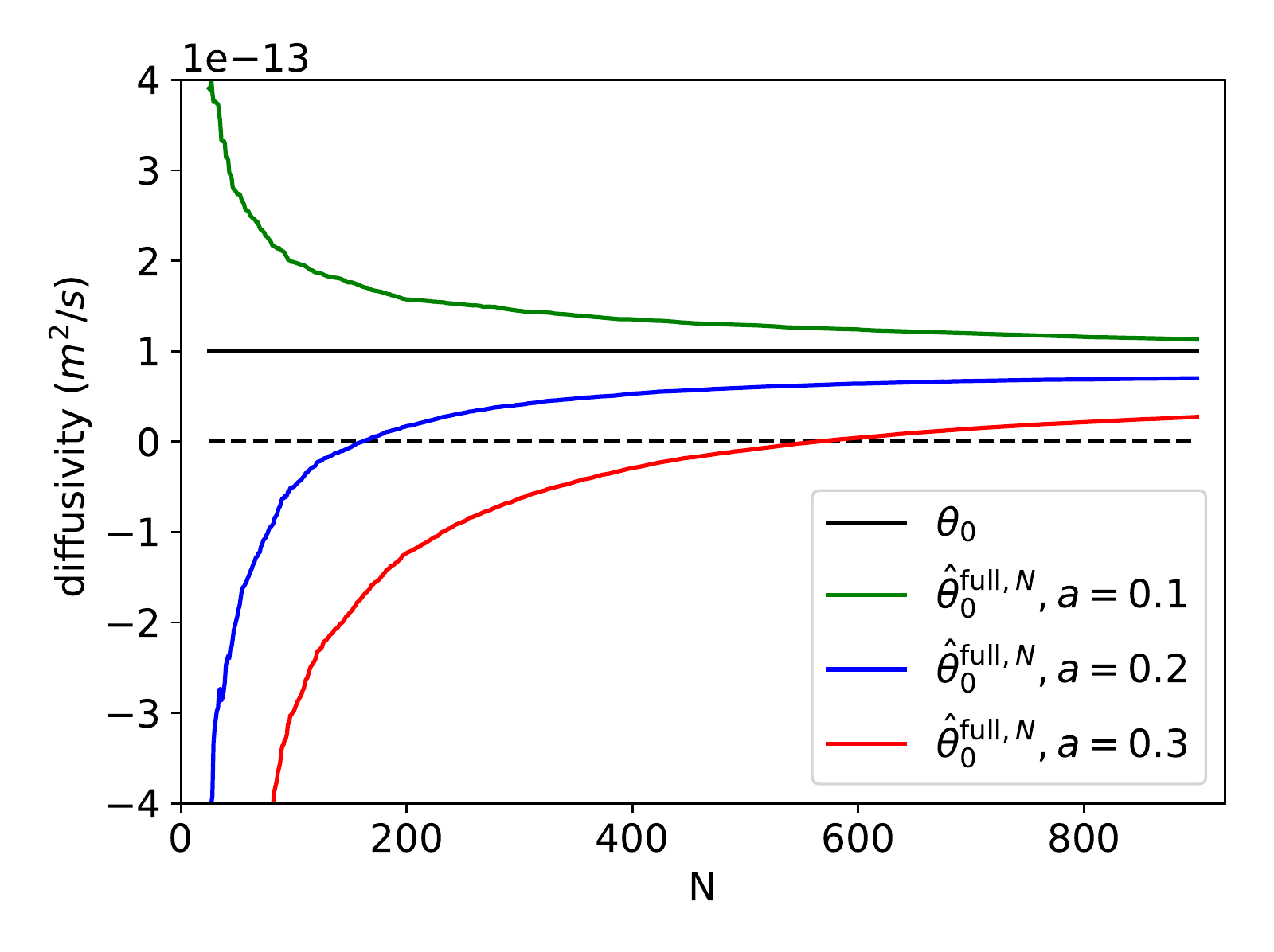}
	\includegraphics[width=0.49\textwidth]{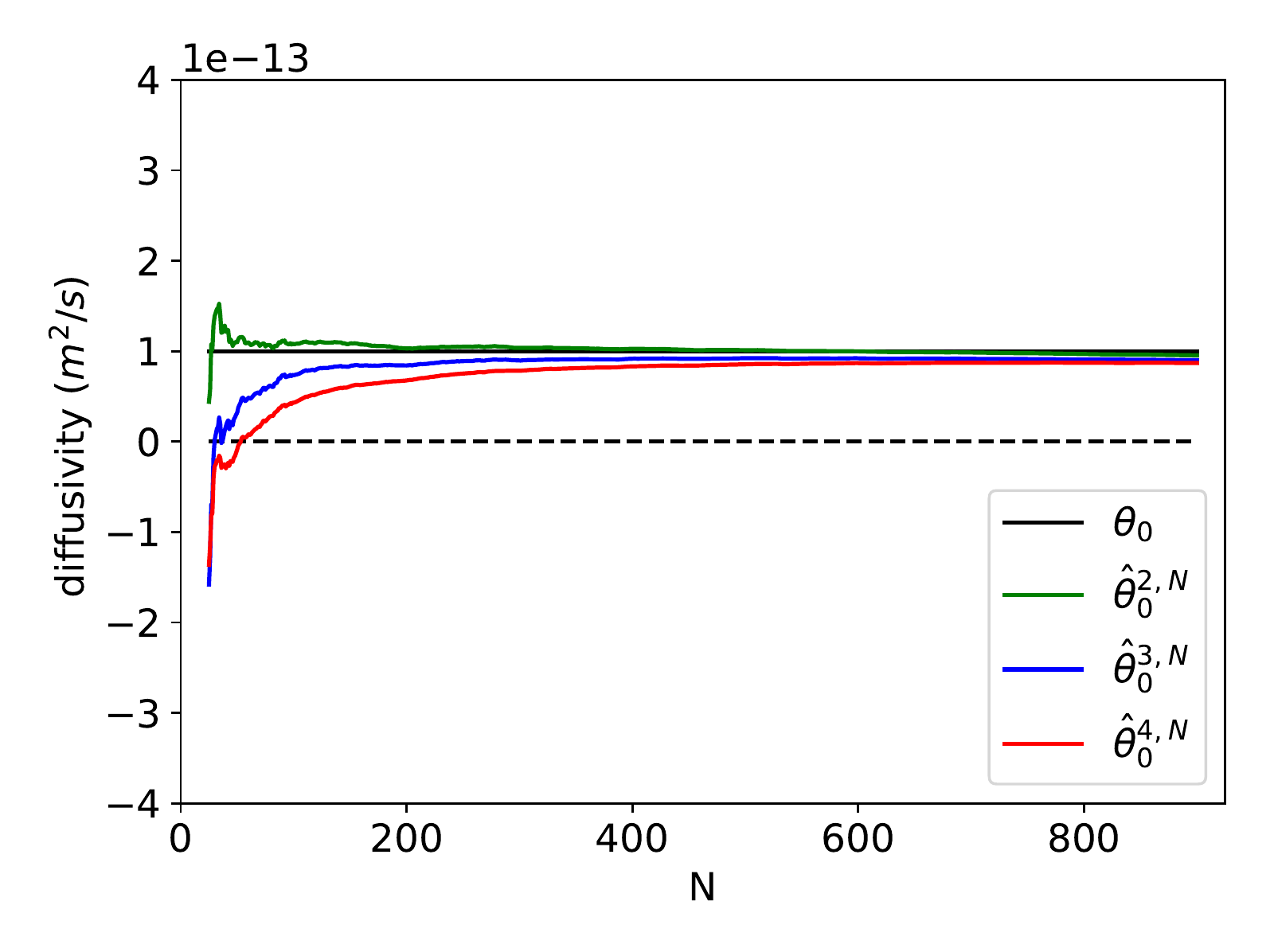}
	\includegraphics[width=0.49\textwidth]{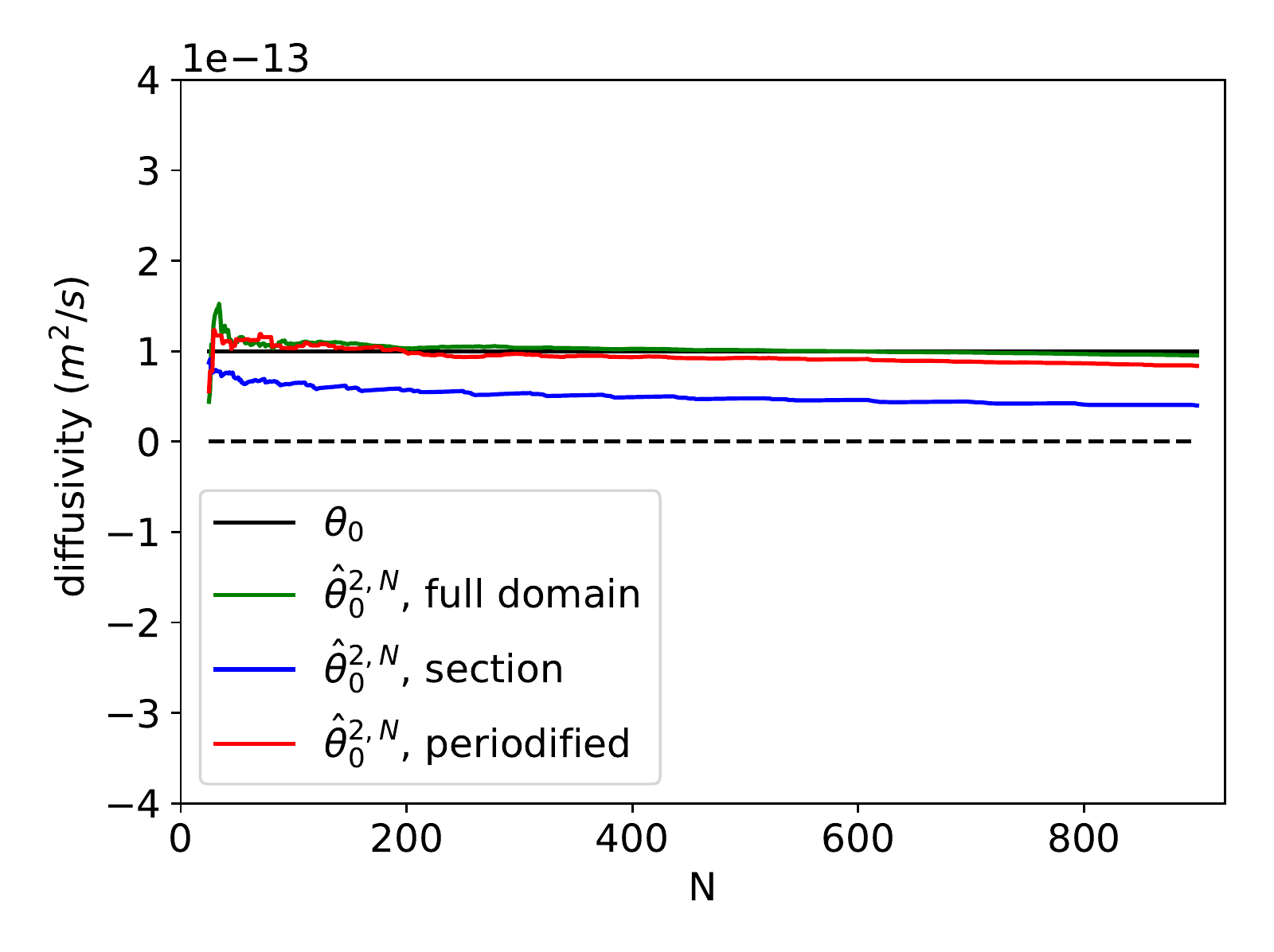}
	\caption{Performance of diffusivity estimators on simulated data under different model assumptions in the spatial high-frequency regime. Solid black line is plotted at the true parameter $\theta_0=1\times10^{-13}$, dashed black line is plotted at zero. In all displays, we restrict to $N\geq 25$ in order to avoid artefacts stemming from low resolution.}
    \label{fig:SimulationOne}
\end{figure}

First, we study the performance of the mentioned estimators on simulations. 
The numerical scheme is specified in Appendix \ref{app:methods}. 
While we have perfect knowledge on the dynamical system which generates the data in this setting, it is revelatory to compare the different versions of $\hat\theta^N_0$ which correspond to varying levels of model misspecification. 
The simulation shows that $a(|U_t|_{L^2})$ 
fluctuates around a value 
slightly larger than 
$0.15$.
We demonstrate the effect of qualitatively different model assumptions on our method in Figure \ref{fig:SimulationOne} (top left) by comparing the performance of $\thetalin$, $\thetapol$ and $\thetafull$. The result can be interpreted as follows: As $\thetalin$ does not see any information on the wave fronts, the steep gradient at the transition phase leads to a low diffusivity estimate. On the other hand, $\thetapol$ incorporates knowledge on the wave fronts as they appear in the Schl\"ogl model, but the decay in concentration due to the presence of the inhibitor is mistaken as additional diffusion. Finally, $\thetafull$ contains sufficient information on the dynamics to give a precise estimate. In Figure \ref{fig:SimulationOne} (top right), we show the effect of wrong a priori assumptions on $\bar a$ in $\thetafull$. Even for $N=800$, the precision of $\thetafull$ clearly depends on the choice of $\bar a$. 
Remember that there is no true $\bar a$ in the underlying model, rather, $\bar a$ serves as an approximation for $a(|U_t|_{L^2})$. 
Better results can be achieved with $\thetatwo, \thetathree$ and $\thetafour$, see Figure \ref{fig:SimulationOne} (bottom left): $\thetatwo$ has no knowledge on $\bar a$ and recovers the diffusivity precisely, and even $\thetafour$ performs better than the misspecified $\thetafull$ from the top right panel of Figure \ref{fig:SimulationOne}. 

\subsection{Discussion of the periodic boundary}\label{sec:PerformancePeriodic}
In Figure \ref{fig:SimulationOne} (bottom right) we sketch how the assumption of periodic boundary conditions influences the estimate. While $\thetatwo$ works very well on the full domain of $200\times 200$ pixels with periodic boundary conditions, it decays rapidly if we just use a square section of $75\times 75$ pixels. In fact, the boundary conditions are not satisfied on that square section. This leads to the presence of discontinuities at the boundary. These discontinuities, if interpreted as steep gradients, lower the observed diffusivity. Hence, a first guess to improve the quality is to mirror the square section along each axis and glue the results together. In this manner we construct a domain with $150\times 150$ pixels, on which $\thetatwo$ performs well. We emphasize that, while this periodification procedure is a natural approach, its performance will depend on the specific situation, because the dynamics at the transition edges will still not obey the true underlying dynamics. Furthermore, by modifying the data set as explained, we change its resolution, and consequently, a different amount of spectral information may be included into $\thetatwo$ for interpretable results. 

\subsection{Effect of the Noise Intensity}\label{sec:SimulationSigma}
\begin{figure}[t]
    \includegraphics[width=0.49\textwidth]{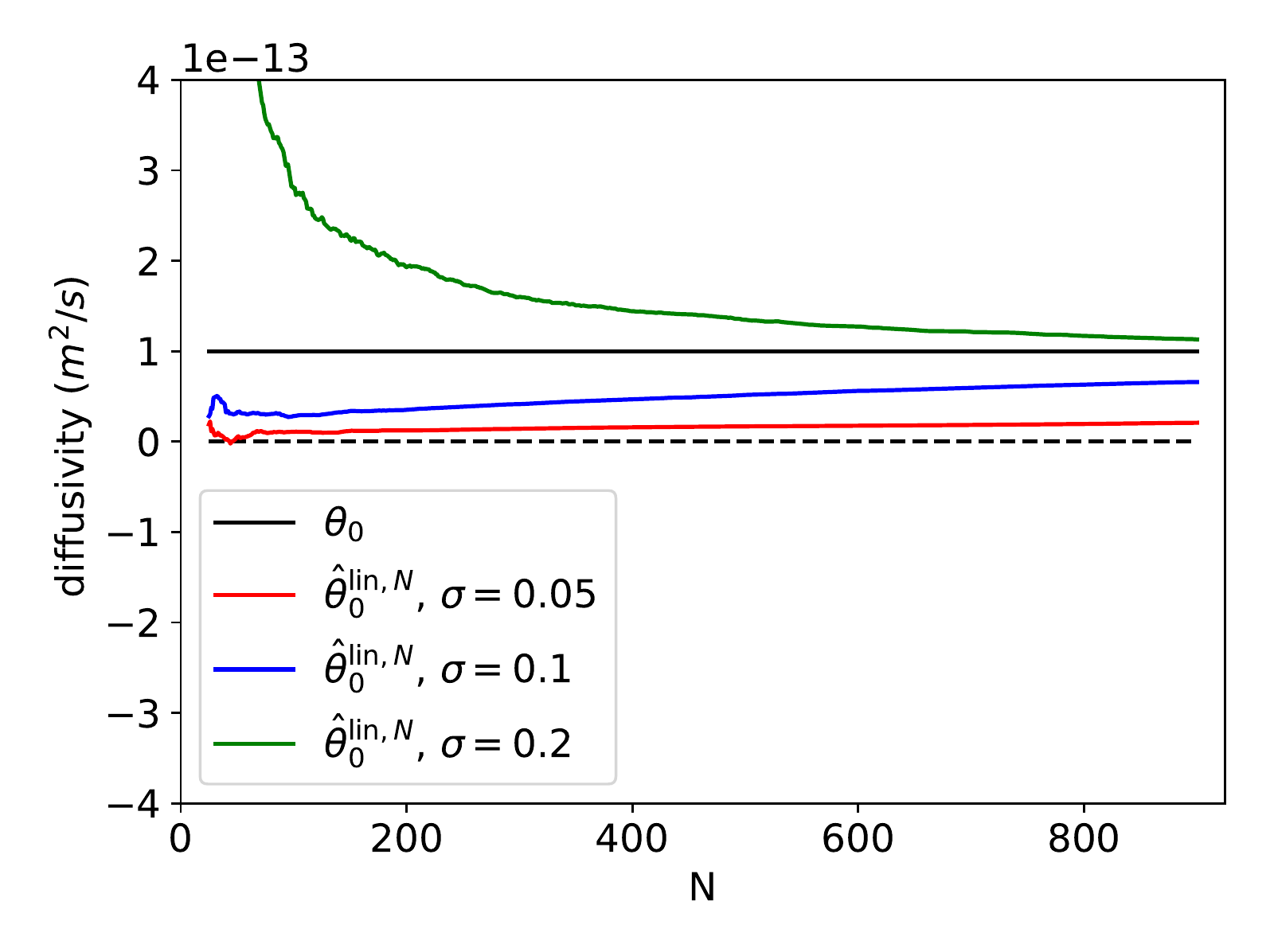}
    \includegraphics[width=0.49\textwidth]{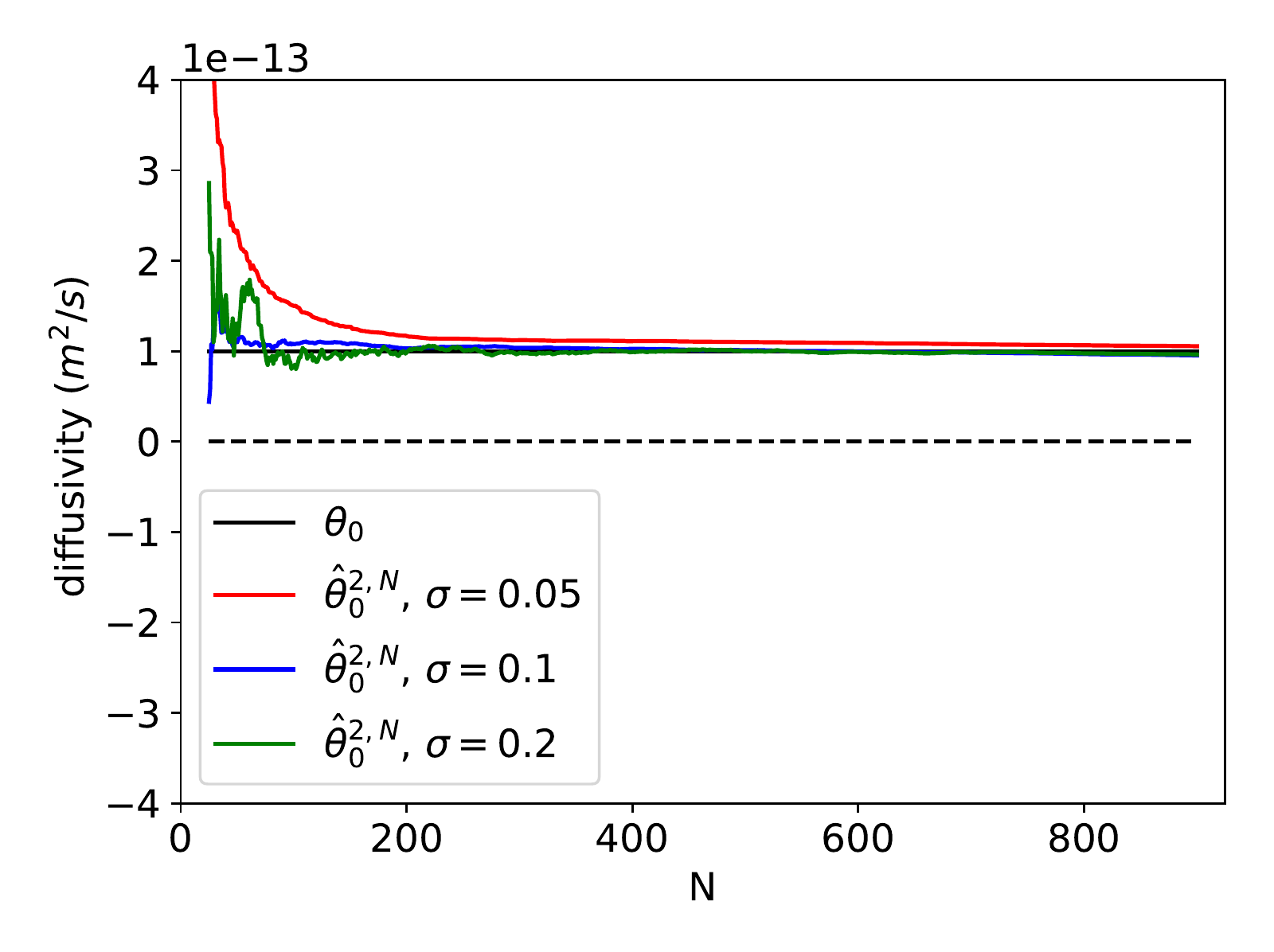}
    \caption{
    Sensitivity of {\bf (left)} $\thetalin$ and {\bf (right)} $\thetatwo$ to different noise levels. Solid black line is plotted at $\theta_0=1\times 10^{-13}$, dashed black line is plotted at zero. As before, we restrict to $N\geq 25$ in the plots.
    \label{fig:SimulationSigma}
    }
\end{figure}
In Figure \ref{fig:SimulationSigma}, we study the effect of varying the noise level in the simulation. We compare $\thetalin$, which is agnostic to the reaction model, to $\thetatwo$, which incorporates a detailed reaction model. While $\thetatwo$ performs well regardless of the noise level, the quality of $\thetalin$ tends to improve for larger $\sigma$. In this sense, a large noise amplitude hides the effect of the nonlinearity. This is in line with the observations made in \cite[Section 3]{PasemannStannat20}. We note that the dynamical features of the process change for $\sigma=0.2$: It is no longer capable of generating travelling waves in that case.

\subsection{Performance on Real Data}\label{sec:RealData}

\begin{figure}[t]
    \includegraphics[width=0.49\textwidth]{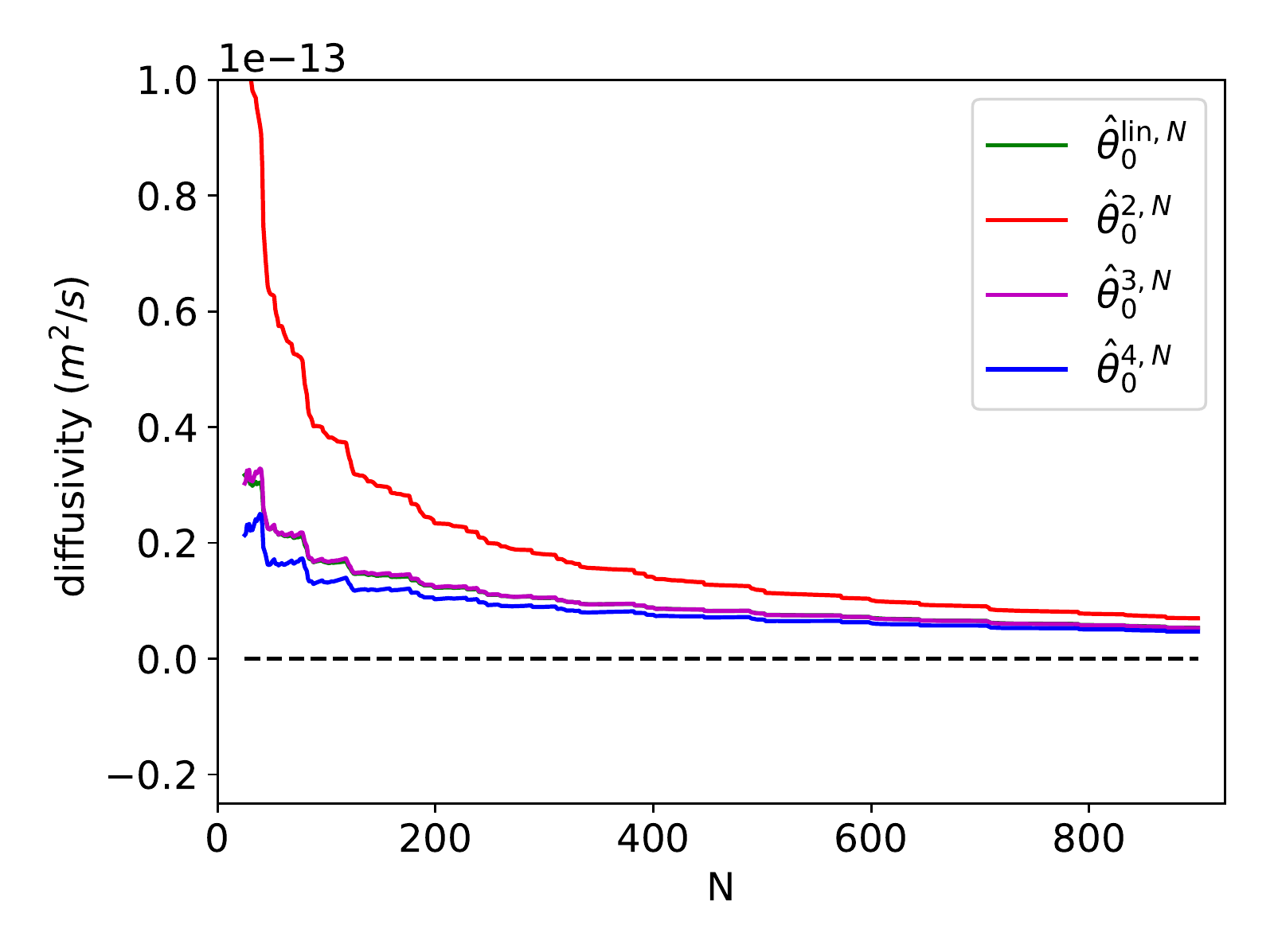}
	\includegraphics[width=0.49\textwidth]{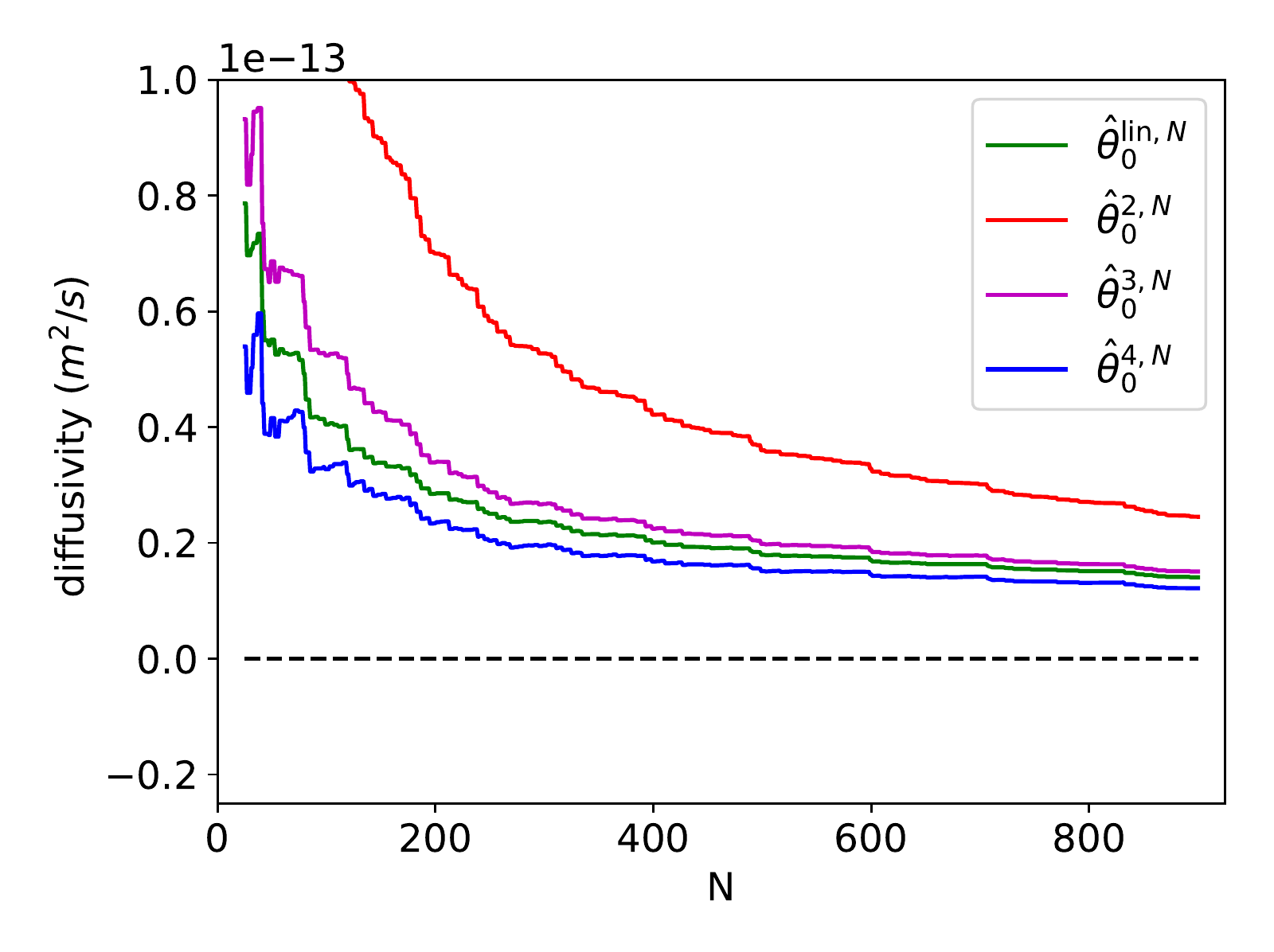}
	\includegraphics[width=0.49\textwidth]{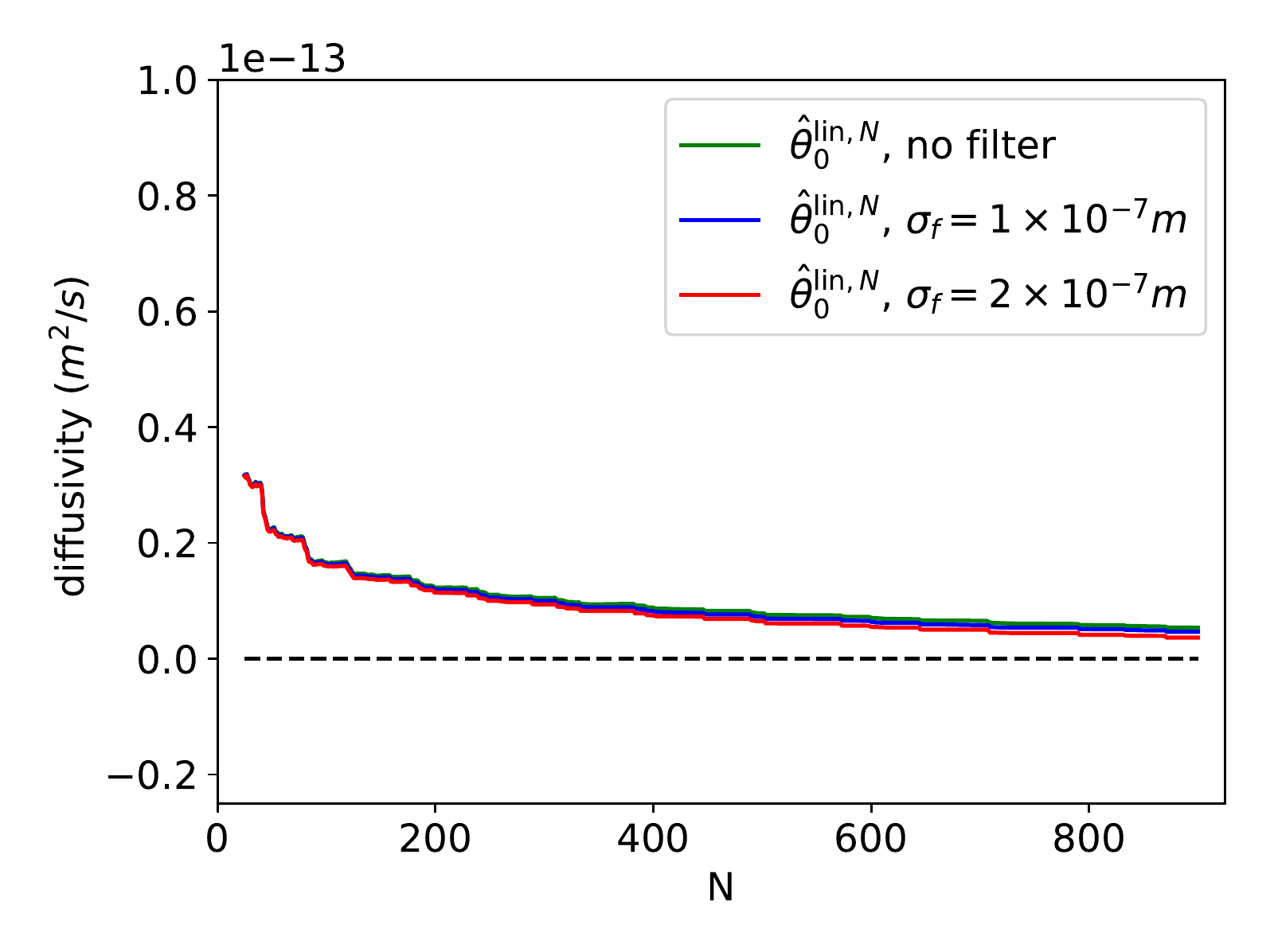}
    \includegraphics[width=0.49\textwidth]{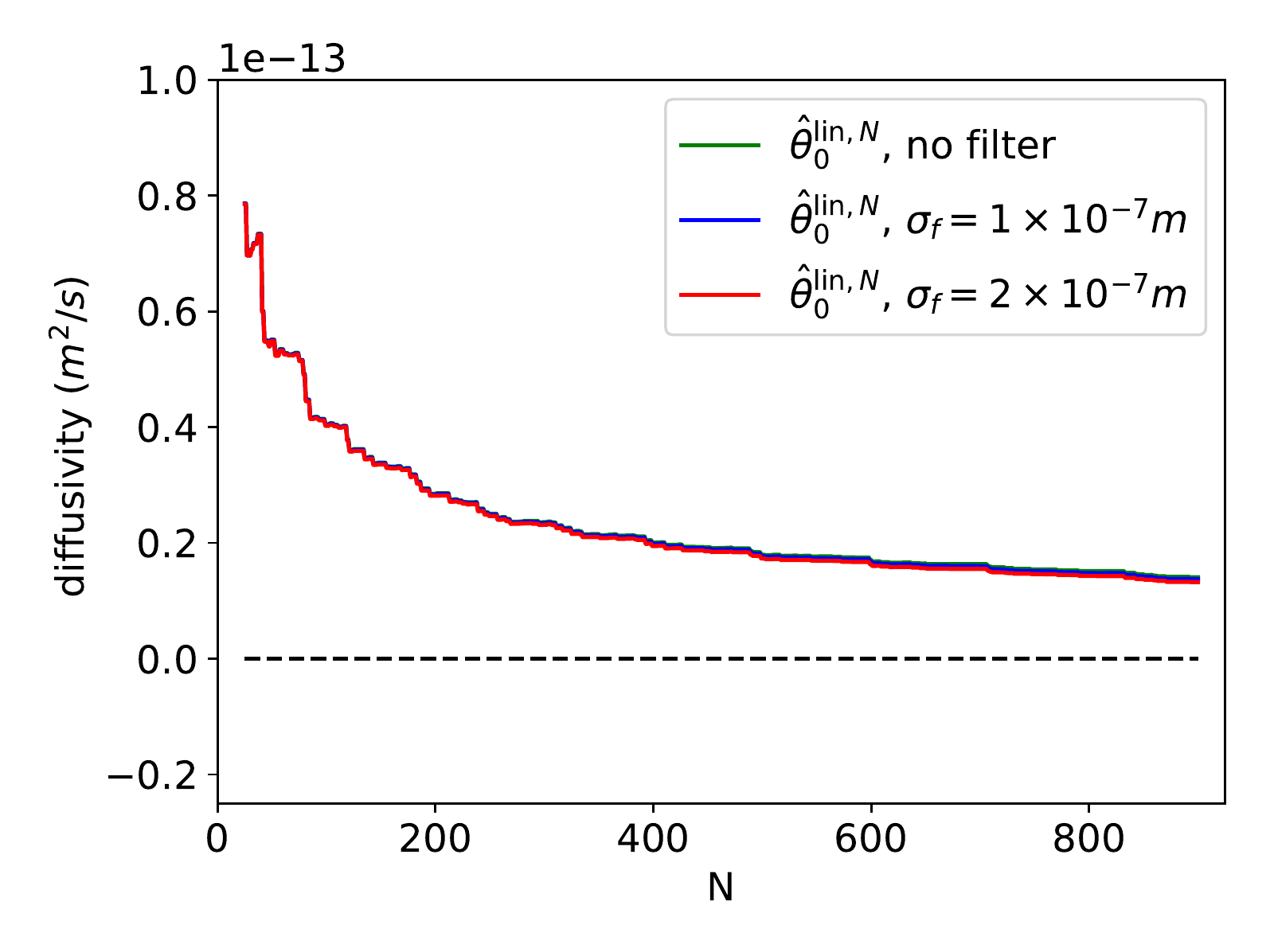}
    \caption{In all displays, we restrict to $N\geq 25$ in order to avoid artefacts stemming from low resolution. Dashed black line is plotted at zero. {\bf (top)} Performance of different diffusivity estimators on (top left) cell data and (top right) periodified cell data. {\bf (bottom)} The effects of applying a kernel with bandwidth $\bar\sigma$ is shown for (bottom left) not periodified and (bottom right) periodified data.
	\label{fig:CelldataOne}
}
\end{figure}

A description of the experimental setup can be found in Appendix \ref{app:methods}. 
The concentration in the data is represented by grey values ranging from $0$ to $255$ at every pixel. 
This range is standardized to the unit interval $[0,1]$, in order to match the stable fixed points of the bistable polynomial $f$ in the reference case $u_0=1$.
Note that this is necessary for all estimators except $\thetalin$. 
We compare $\thetalin$ with $\thetatwo$, $\thetathree$ and $\thetafour$, which are more flexible than $\thetapol$ and $\thetafull$.
In Figure \ref{fig:CelldataOne} (top left) the behaviour of 
these four estimators on a sample cell is shown. 
Interestingly, 
the model-free linear estimator $\thetalin$ is close to $\thetathree$ and $\thetafour$, which impose very specific model assumptions. This pattern can be observed across different cell data sets. In particular, this is notably different from the performance of these three estimators on synthetic data. 
This discrepancy seems to indicate that the lower order reaction terms in the activator-inhibitor model are not fully consistent with the information contained in the experimental data. This can have several reasons, for example, it is possible that a more detailed model reduction of the known signalling pathway inside the cell is needed. On the contrary, $\thetatwo$ seems to be comparatively rigid due to its a priori choices for $\theta_2$ and $\theta_3$, but it eventually approaches the other estimators. 
Variations in the value of $u_0$ have an impact on the results for small $N$ but not on the asymptotic behaviour. 
In Figure \ref{fig:CelldataOne} (top right), the cell from Figure \ref{fig:CelldataOne} (top left) is periodified before evaluating the estimators. As expected from the discussion in Section \ref{sec:PerformancePeriodic}, the estimates rise, but the order of magnitude does not change drastically.

\subsection{Invariance under Convolution}
Given a function $k\in L^1(\mathcal{D})$, define $T_k:H^s(\mathcal{D})\rightarrow H^s(\mathcal{D})$, $s\in\R$, via $u\mapsto k*u=\int_\mathcal{D}k(\cdot-x)u(x)\mathrm{d}x$, where $k$ and $u$ are identified with their periodic continuation. It is well-known that $T_k$ commutes with $\Delta$, i.e. $T_k\circ\Delta=\Delta\circ T_k$. Thus, if $X$ is a solution to a semilinear stochastic PDE with diffusivity $\theta$, the same is true for $T_kX$: While the nonlinearity and the dispersion operator may be changed by $T_k$, the diffusive part is left invariant, in particular the diffusivity of $X$ and $T_kX$ is the same. Based on this observation, a comparison between the effective diffusivity of $X$ and $T_kX$ for different choices of $k$ may serve as an indicator if the assumption that a data set $X$ is generated by a semilinear SPDE \eqref{eq:SPDEmodelGeneral} is reasonable in the first place, and if the diffusion indeed can be considered to be homogeneous and anisotropic. 
We use a family of periodic kernels $k=k_{\bar \sigma}$, $\bar\sigma>0$, which are normed in $L^1(\mathcal{D})$ and coincide on the reference rectangle $[-L_1/2,L_1/2]\times[-L_2/2,L_2/2]$ with a Gaussian density with standard deviation $\bar\sigma$. In Figure \ref{fig:CelldataOne} (bottom), the effects of applying $T_{k_{\bar\sigma}}$ for different bandwidths $\bar\sigma$ are shown for one data set and its periodification. While the diffusivity of the data without periodification on the left-hand panel is slightly affected by the kernel, more precisely, its tendency to fall is enlarged, 
the graphs for the effective diffusivity of the periodified data are virtually indistinguishable. 
Periodification seems to be compatible with the expected invariance under convolution, 
even if the periodified data is not generated by a semilinear SPDE but instead by joining smaller patches of that form. In total, these observations are in accordance with the previous sections and suggest that the statistical analysis of the data based on a semilinear SPDE model is reasonable. 

\subsection{The Effective Diffusivity of a Cell Population}\label{sec:CellPopulation}

\begin{figure}[t]
    \includegraphics[width=0.49\textwidth]{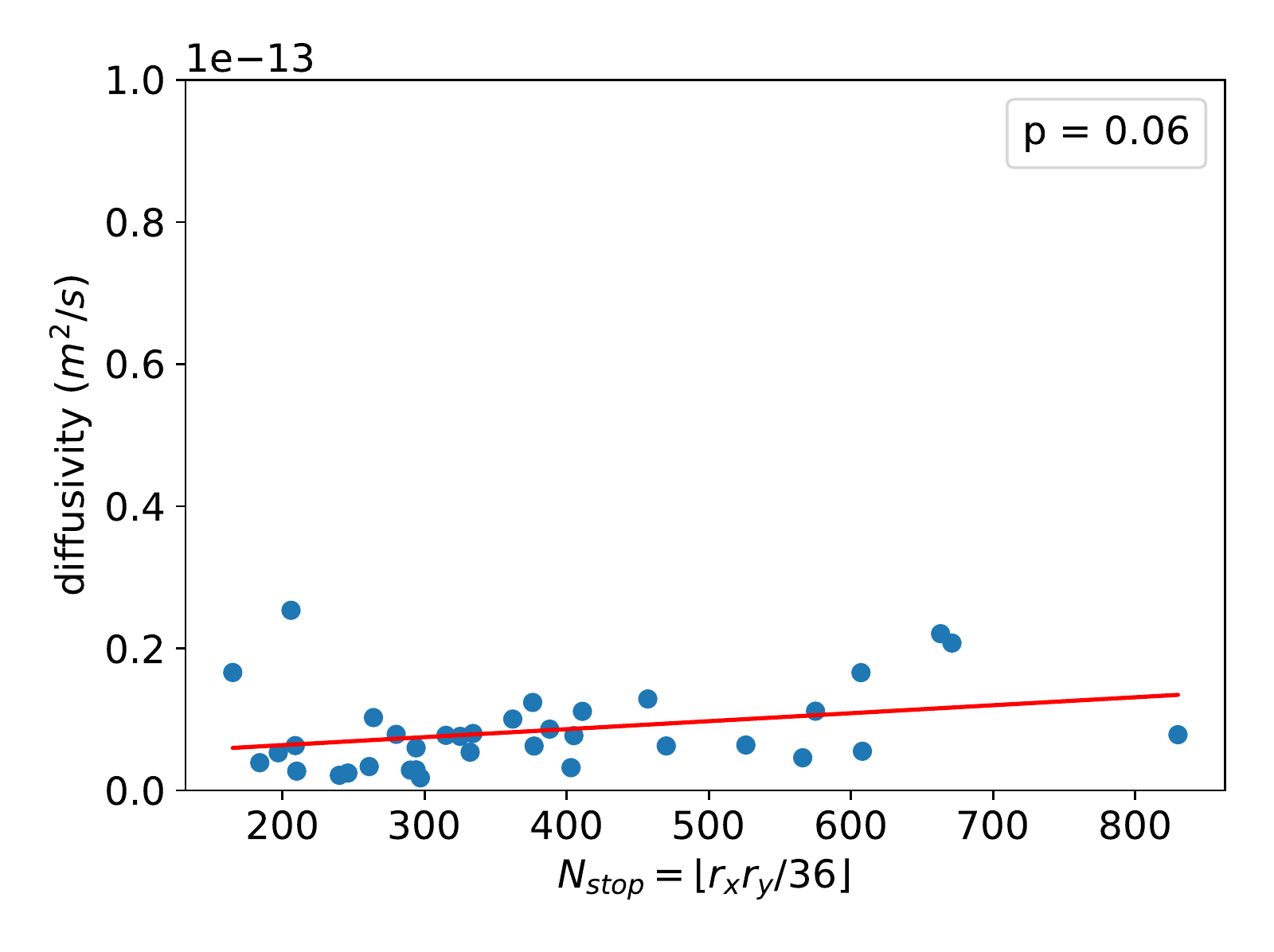}
    \includegraphics[width=0.49\textwidth]{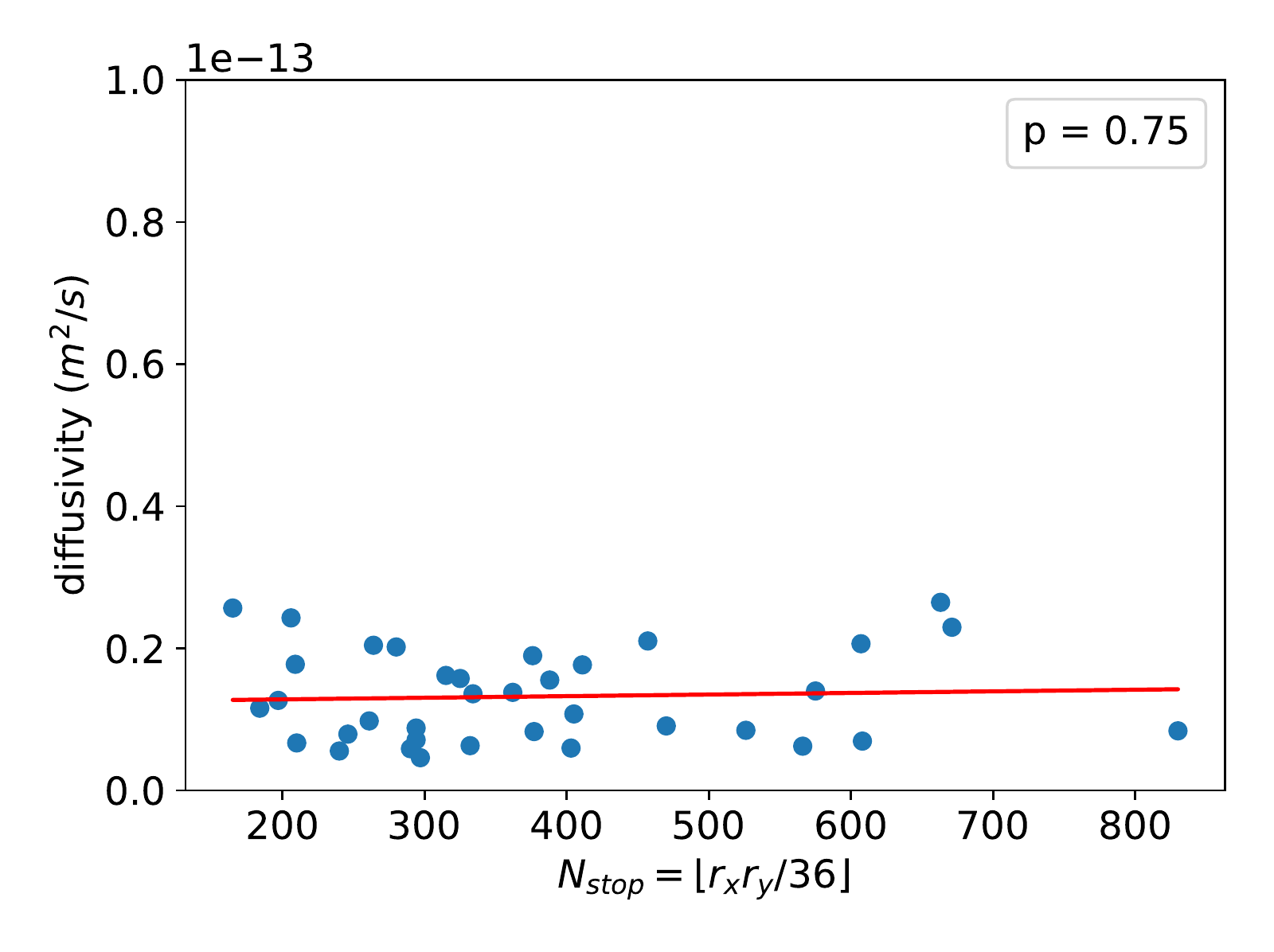}
    \includegraphics[width=0.49\textwidth]{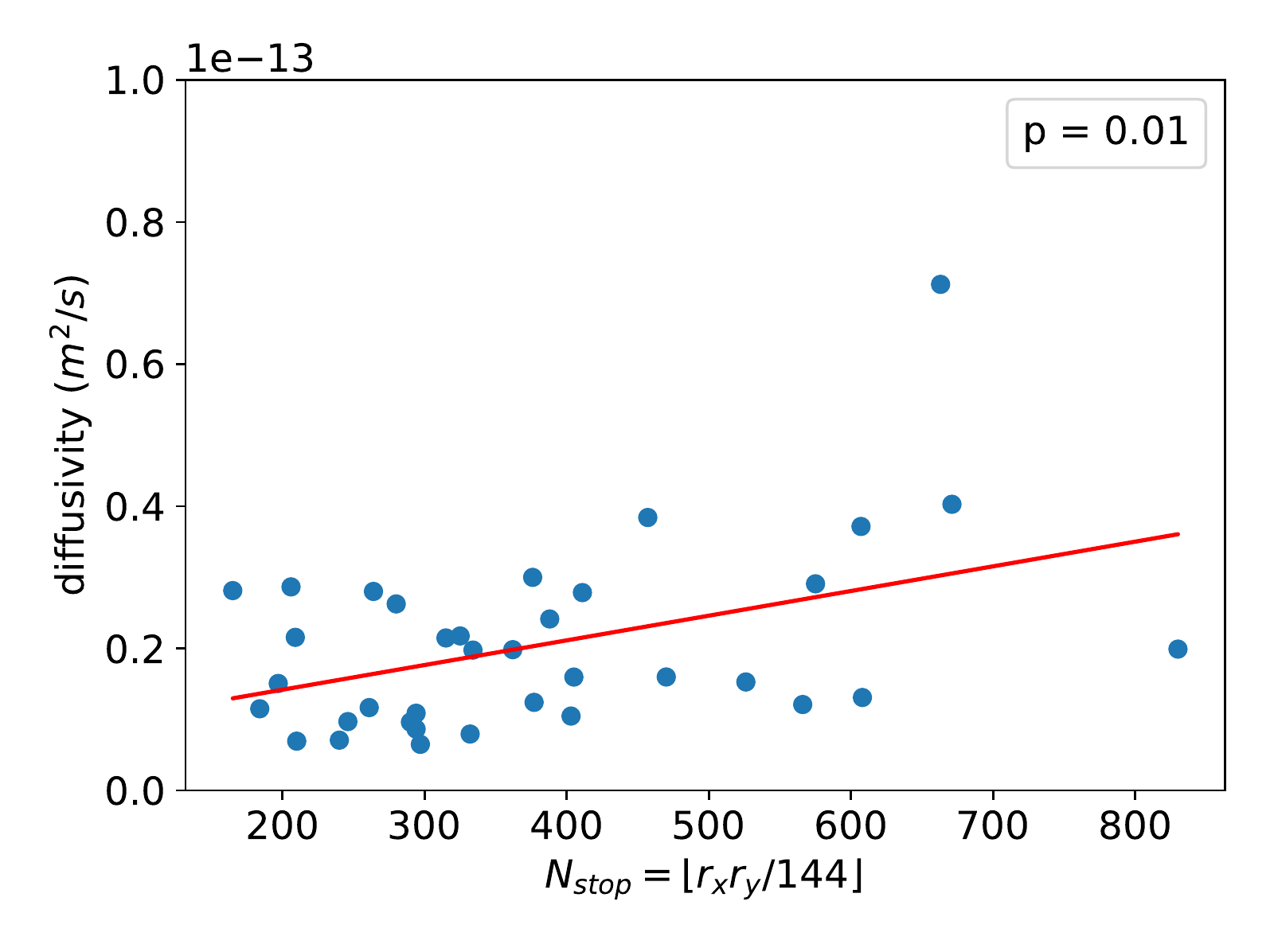}
    \includegraphics[width=0.49\textwidth]{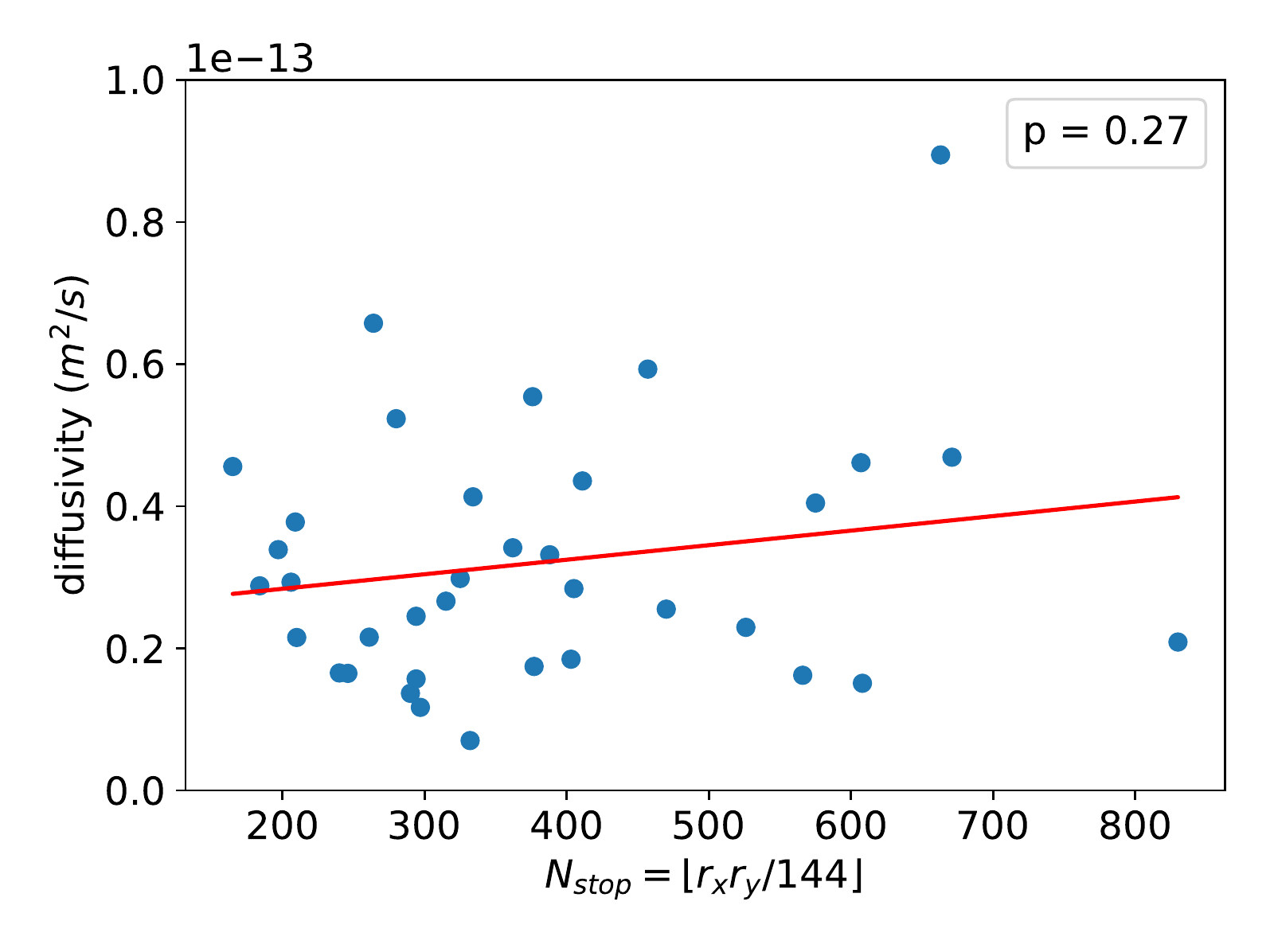}
    \caption{The samples are evaluated {\bf (top)} without or {\bf (bottom)} with periodification. 
    $\hat\theta^{3, N}_0$ 
    with {\bf (left)} $N=N_\mathrm{const}$ or {\bf (right)} $N=N_\mathrm{stop}$ 
    is plotted against $N_\mathrm{stop}$. 
    The least squares fit is shown in red. The $p$-value in each plot corresponds to a t-test whose null hypothesis states that the slope of the regression line is zero. Clearly, the slope is more notable in the case $N=N_{\mathrm{const}}$.
    }
    \label{fig:CelldataPopulation}
\end{figure}

We compare the estimated diffusivity for a cell population consisting of $36$ cells. The boundaries in space and time of all samples are selected in order to capture only the interior dynamics within a cell and, consequently, the data sets differ in their 
size. 
On the one hand, the estimated diffusivity tends to stabilize in time, i.e. the number of frames in a sample, corresponding to the final time $T$, does not affect the result much. On the other hand, the size of each frame, measured in pixels, determines the number of eigenfrequencies that carry meaningful information on the process. Thus, we expect that $N$ can be chosen larger for samples with high spatial resolution. We formalize this intuition with the following heuristic: Let $r_x$ and $r_y$ denote the number of pixels in each row and column, resp., of every frame in a sample. Let $N_\mathrm{stop}=\lfloor 4r_xr_y/M^2\rfloor$, where $M\in\N$ is a parameter representing the number of pixels needed for a sine or cosine to extract meaningful information. That is, a square of dimensions $r_x\times r_y = M\times M$ leads to $N_\mathrm{stop}=4$, so in this case only the eigenfunctions $\Phi_{k,l}$, $k,l\in\{-1,1\}$ are taken into account, whose period length is $M$ in both dimensions. In our evaluations, we choose $M=12$ for the data without periodification and $M=24$ for the periodified data sets. We mention that the cells are also heterogeneous with respect to their characteristic length $\overline{dx}$ and time $\overline{dt}$, given in meters per pixel and seconds per frame for each data set.\footnote{Further tests indicate that the estimated diffusivity correlates with the characteristic diffusivity $\overline{dx}^2/\overline{dt}$. A more detailed examination of the discretization effects and other dependencies which are not directly related to the underlying process is left for future research. 
Also, it would be interesting to understand the extent to which the effective diffusivity in the data is scale dependent.
} However, a detailed quantitative analysis of the resulting discretization error is beyond the scope of our work. 
In Figure \ref{fig:CelldataPopulation}, we compare the results for $\hat\theta^{3,N_\mathrm{stop}}_0$ with $\hat\theta^{3,N_\mathrm{const}}_0$ within the cell population, where $N_\mathrm{const}=899$ is independent of the sample resolution. Evaluating the estimator at $N_\mathrm{stop}$ decorrelates the estimated diffusivity from the spatial extension of the sample. The results for different cells have the same order of magnitude, which indicates that the effective diffusivity can be used for statistical inference for cells within a population or between populations in future research.

\subsection{Estimating $\theta_1$.} 

\begin{figure}[t]
	\includegraphics[width=0.49\textwidth]{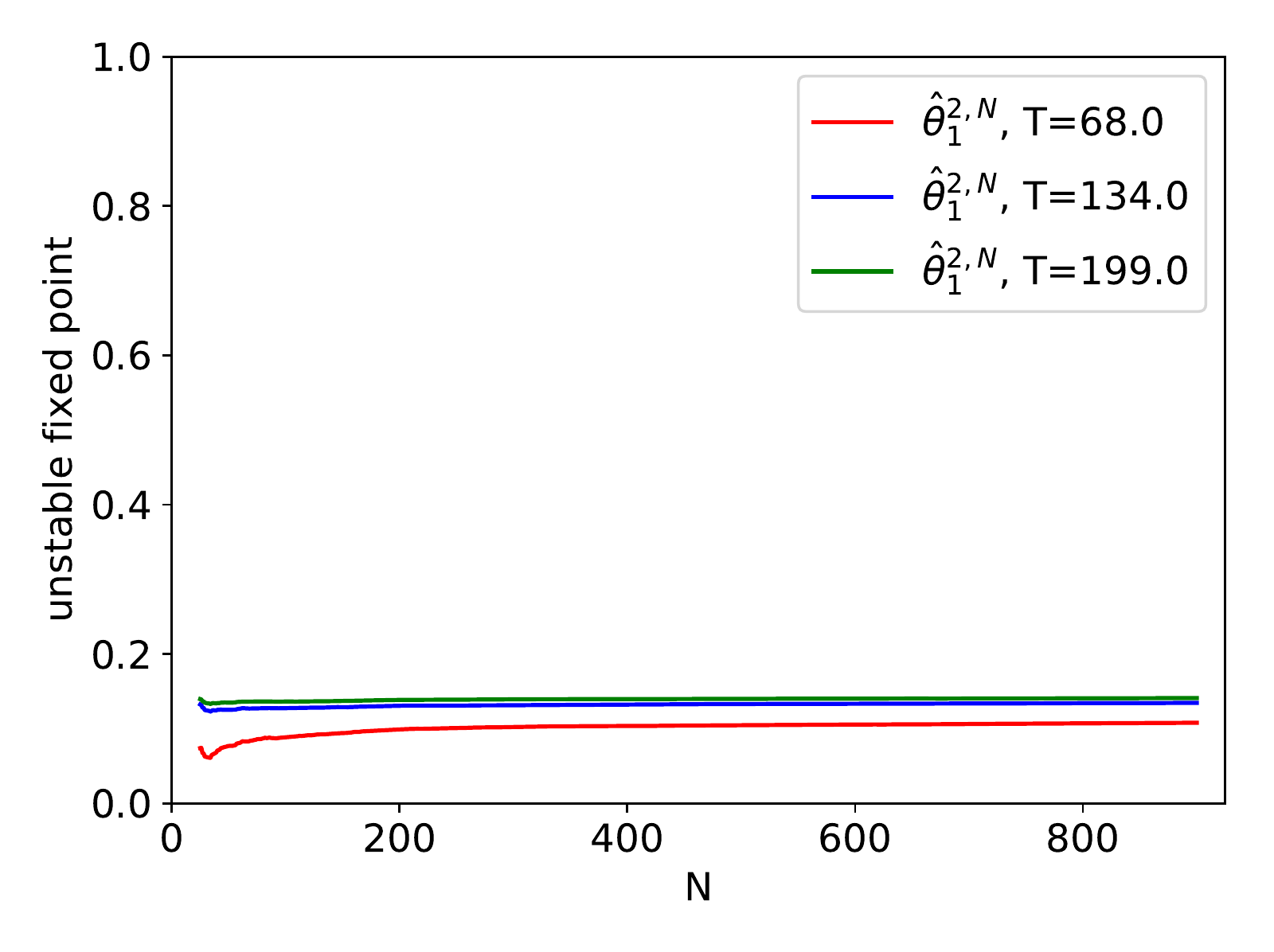}
	\includegraphics[width=0.49\textwidth]{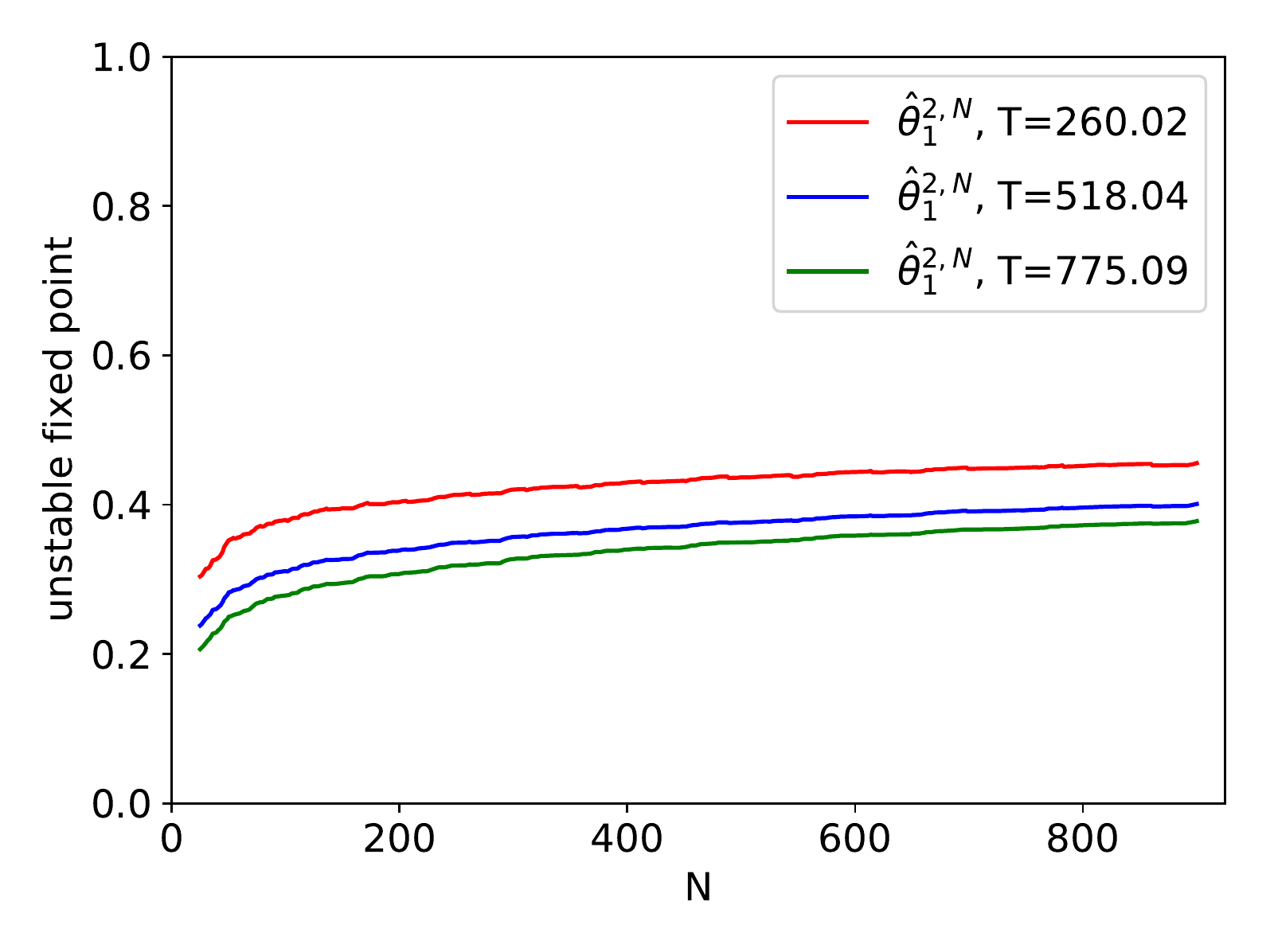}
	\caption{Estimation of the reaction parameter $\theta_1$ on {\bf (left)} simulated data and {\bf (right)} cell data. All times are given in seconds, relative to the first frame. As before, we restrict to $N\geq 25$.
	\label{fig:EstimateThetaOne}
}
\end{figure}

When solving the linear MLE equations in order to obtain $\thetatwo$, we simultaneously get an estimate $\hat\theta_1^{2,N}$ for $\theta_1=k_1u_0\bar a$. Note that $u_0=1$ by convention, and $\theta_2=k_1=1$ is treated as known quantity in this case. Thus, $\theta_1$ can be identified with $\bar a$. 
In Figure~\ref{fig:EstimateThetaOne} we show the results for $\hat\theta_1^{1,N}$ on simulated data (left) and on a cell data sample (right). 
Note that in general, we cannot expect 
to observe increased precision for $\hat\theta_1^{2,N}$ as $N$ grows, because the reaction term is of order zero.\footnote{According to \cite{HuebnerRozovskii95}, the maximum likelihood estimator for the coefficient of a linear order zero perturbation to a heat equation with known diffusivity converges only with logarithmic rate in $d=2$.}
However, it is informative to consider also the large time regime $T\rightarrow\infty$, i.e. to include more and more frames to the estimation procedure. 
In the case of simulated data, $a$ oscillates around an average value 
close to $0.15$, 
which should be considered to be the ground truth for $\bar a$. This effective value $\bar a$ is recovered well, 
even for small $T$, with increasing precision as $T$ grows. 
Clearly, this depends heavily on the model assumptions. In the case of cell data, the results are rather stable. 
This indicates that it may be reasonable to use the concept of an ``effective unstable fixed point $\bar a$ of the reaction dynamics, conditioned on the model assumptions included in $\thetatwo$'', when evaluating cell data statistically.

\subsection{The Case of Pure Noise Outside the Cell}

\begin{figure}
	\includegraphics[width=0.49\textwidth]{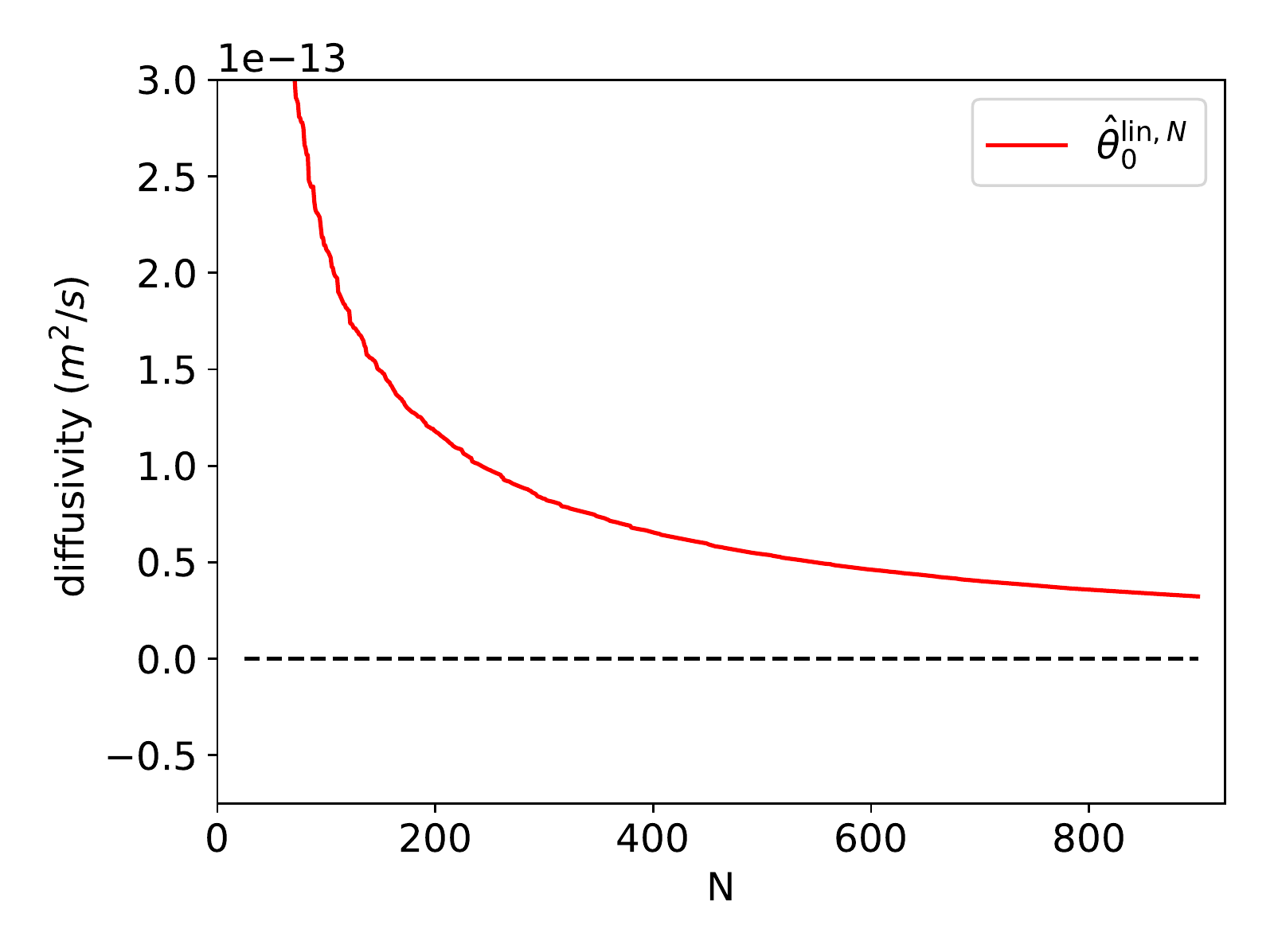}
	\includegraphics[width=0.49\textwidth]{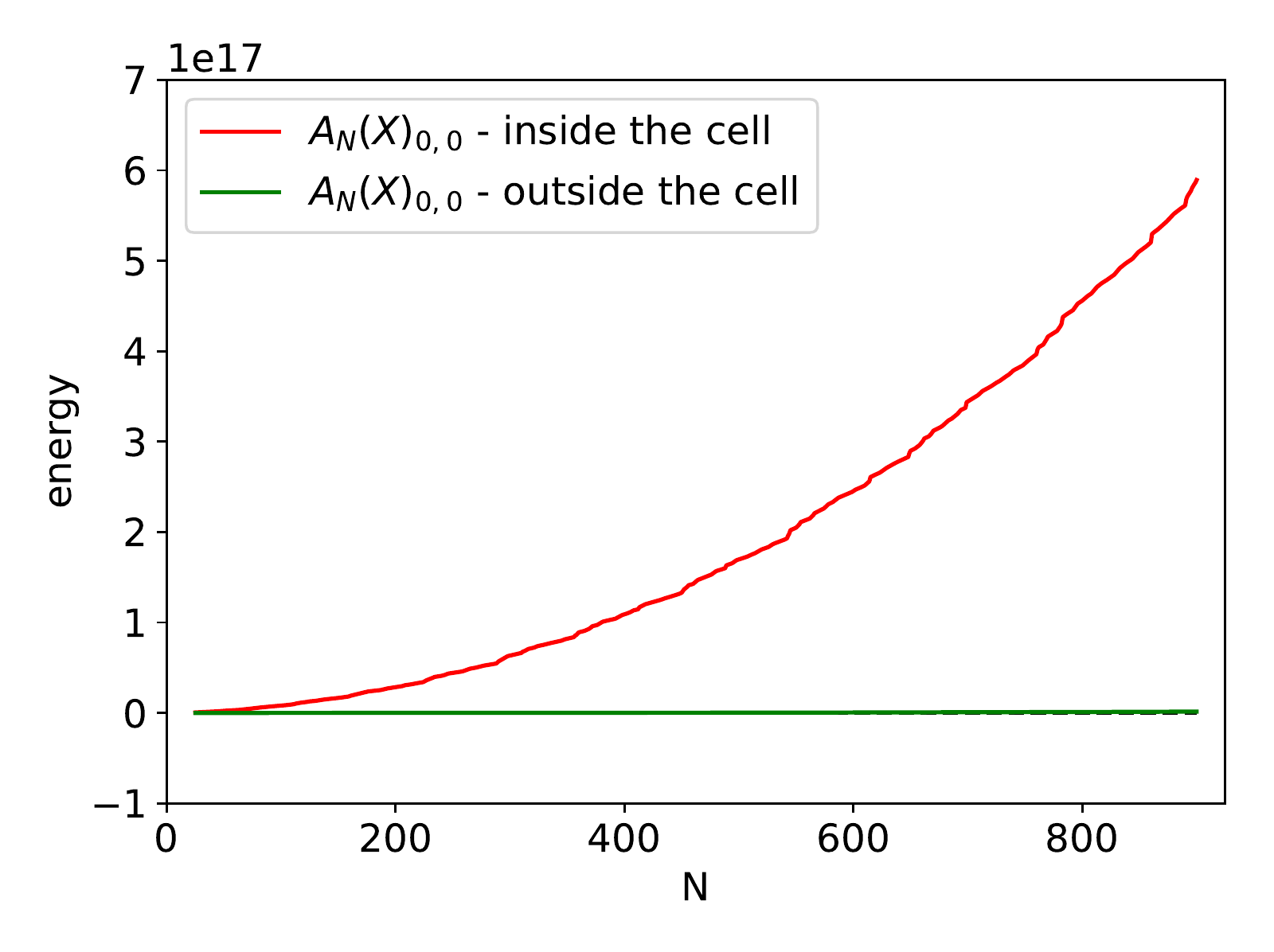}
	\caption{As before, we restrict to $N\geq 25$. {\bf (left)} Effective diffusivity outside the cell, plot for one data set. Dashed line is plotted at zero. {\bf (right)} Comparison of the energy inside and outside the cell. Both data sets have the same spatial and temporal extensions.}
	\label{fig:CelldataTwo}
\end{figure}

If the data set does not contain parts of the cell but rather mere noise, the estimation procedure still returns a value. This ``observed diffusivity'' (see Figure \ref{fig:CelldataTwo}) originates completely from white measurement noise. More precisely, the appearance and vanishing of singular pixels is interpreted as instantaneous (i.e. within the time between two frames) diffusion to the steady state. Thus, the observed diffusivity in this case can be expected to be even larger than the diffusivity inside the cell. In this section, we give a heuristical explanation for the order of magnitude of the effective diffusivity outside the cell. \\

We work in dimension $d=2$. Assume that a pixel has width $x>0$. This value is determined by the spatial resolution of the data. For simplicity, we approximate it by a Gaussian density $\phi_0(y)$ with standard deviation $\sigma_0 = \frac{x}{2}$. This way, the inflection points of the (one-dimensional marginal) density match the sharp edges of the pixel. Now, using $\phi_0$ as an initial condition for the heat equation on the whole space $\R^2$, the density $\phi_t$ after time $t$ is also a Gaussian density, with standard deviation $\sigma_t=\sqrt{\sigma_0^2+2\theta t}$, obtained by convolution with the heat kernel. The maximal value $f_\mathrm{max}^t$ of $\phi_t$ is attained at $y=0$ with $f_\mathrm{max}^t = (2\pi\sigma_t^2)^{-1} = (2\pi(\sigma_0^2 + 2\theta t))^{-1}$. Now, if we observe after time $t>0$ at the given pixel an intensity decay by a factor $b>0$, i.e.
\begin{equation}
	bf_\mathrm{max}^t\leq f_\mathrm{max}^0,
\end{equation}
this leads to an estimate for the diffusivity of the form
\begin{equation}
	\theta\geq (b-1)\frac{\sigma_0^2}{2t}. 
\end{equation}
For example, set $t = 0.97s$ and $x=2.08\times 10^{-7}m$, as in the data set from Figure \ref{fig:CelldataTwo} (left). The intensity decay factor varies between different pixels in the data set, reasonable values are given for $b\leq 30$. If $b=30$, we get $\theta\geq 1.6\times 10^{-13}m^2/s$, for $b = 20$, we get $\theta\geq 1\times 10^{-13}m^2/s$, and for $b=15$, we have $\theta\geq 7.8\times 10^{-14}m^2/s$. This matches the observed diffusivity outside the cell from Figure \ref{fig:CelldataTwo}, which is indeed of order $1\times 10^{-13}m^2/s$: For example, with $N=N_\mathrm{stop}=\lfloor 4r_xr_y/M^2\rfloor$ and $M=12$, as in Section \ref{sec:CellPopulation}, we get $N_\mathrm{stop}=165$ and $\hat\theta^{\mathrm{lin}, N_\mathrm{stop}}_0=1.36\times 10^{-13}m^2/s$ for this data set consisting of pure noise. In total, this gives a heuristical explanation for the larger effective diffusivity outside the cell compared to the estimated values inside the cell. \\

It is important to note that even if the effective diffusivity outside the cell is larger, this has almost no effect on the estimation procedure inside the cell. This is because the total energy $A_N(X)_{0, 0}$ of the noise outside the cell is several orders of magnitude smaller than the total energy of the signal inside the cell, see Figure \ref{fig:CelldataTwo} (right). 

\section{Discussion and Further Research}

In this paper we have extended the mathematical theory of parameter estimation of stochastic reaction-diffusion system to the joint estimation problem of diffusivity and  parametrized reaction terms within the variational theory of stochastic partial differential equations. We have in particular applied our theory to the estimation of effective diffusivity of intracellular actin cytoskeleton dynamics. 

Traditionally, biochemical signaling pathways were studied in a purely temporal manner, focusing on the reaction kinetics of the individual components and the sequential order of the pathway, possibly including feedback loops.
Relying on well-established biochemical methods, many of these temporal interaction networks could be characterized.
However, with the recent progress in the {\it in vivo} expression of fluorescent probes and the development of advanced live cell imaging techniques, the research focus has increasingly shifted to studying the full spatiotemporal dynamics of signaling processes at the subcellular scale.
To complement these experiments with modeling studies, stochastic reaction-diffusion systems are the natural candidate class of models that incorporate the relevant degrees of freedom of intracellular signaling processes. 
Many variants of this reaction-diffusion framework have been proposed in an empirical manner to account for the rich plethora of spatiotemporal signaling patterns that are observed in cells.
However, the model parameters in such studies are oftentimes chosen in an {\it ad hoc} fashion and tuned based on visual inspection, so that the patterns produced in model simulations agree with the experimental observations.
A rigorous framework that allows to estimate the parameters of stochastic reaction-diffusion systems from experimental data will provide an indispensable basis to refine existing models, to test how well they perform, and to eventually establish a new generation of more quantitative mathematical models of intracellular signaling patterns.

The question of robustness of the parameter estimation problem w.r.t. specific modelling assumptions of the underlying stochastic evolution equation is an important problem in applications that needs to be further investigated in future research. 
In particular, this applies to the dependence of diffusivity estimation on the domain and its boundary. In this work, we based our analysis on a Fourier decomposition on a rectangular domain with periodic boundary conditions. A natural, boundary-free approach is using 
local estimation techniques 
as they have been developed and used in \cite{AltmeyerReiss19, AltmeyerCialencoPasemann20, AltmeyerBretschneiderJanakReiss20}. 
An additional approach aiming in the same direction is the application of a wavelet transform.

It is a crucial task to gather further information on the reaction term from the data. Principally, this cannot be achieved in a satisfactory way on a finite time horizon, so the long-time behaviour of maximum likelihood-based estimators needs to be studied in the context of stochastic reaction-diffusion systems. We will address this issue in detail in future work. 

\appendix

\section{Additional Proofs}

\subsection{Proof of Proposition \ref{prop:FNWellPosed}}\label{app:FNWellPosed}

We prove Proposition \ref{prop:FNWellPosed} by a series of lemmas. First, we show that \eqref{eq:Activator}, \eqref{eq:Inhibitor} is well-posed in $\mathcal H = L^2\times L^2$. First note that for any $u,v\in L^2(\mathcal{D})$ and $x\in\R$:
\begin{align}\label{eq:fBoundedAbove}
	\langle \partial_uf(x,u)v,v\rangle_{L^2} \lesssim |v|_{L^2}^2
\end{align}
as $\partial f(x, u)$ is bounded from above uniformly in $x$ and $u$.

\begin{lemma}\label{lem:FNWellPosed}
	There is a unique $\mathcal{H}$-valued solution $(U,V)$ to \eqref{eq:Activator}, \eqref{eq:Inhibitor}, and 
	\begin{align}
		\mathbb{E}\left[\sup_{0\leq t\leq T}|(U_t, V_t)|_{\mathcal{H}}^p\right] < \infty
	\end{align}
	for any $p\geq 1$. 
\end{lemma}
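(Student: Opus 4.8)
The plan is to establish Lemma~\ref{lem:FNWellPosed} by viewing \eqref{eq:Activator}, \eqref{eq:Inhibitor} as a single semilinear SPDE on the product space $\mathcal H = L^2(\mathcal D)\times L^2(\mathcal D)$, driven by the degenerate noise $(B\,\mathrm dW,0)$, and invoking a standard variational or mild-solution existence theorem. The drift operator splits as the sum of the linear part $\mathcal A(u,v) = (D_U\Delta u, D_V\Delta v - \epsilon v)$, which generates an analytic semigroup on $\mathcal H$ with the correct dissipativity, and the nonlinear part $\mathcal N(u,v) = (k_1 f(|u|_{L^2},u) - k_2 v,\ \epsilon b u)$. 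The linear-in-$v$ couplings $-k_2 v$ and $\epsilon b u$ are globally Lipschitz on $\mathcal H$, so the only delicate term is the cubic $u\mapsto k_1 f(|u|_{L^2},u)$. The key structural input is the one-sided bound \eqref{eq:fBoundedAbove}, together with the fact that $a$ is bounded, which gives a one-sided Lipschitz (monotonicity) estimate $\langle k_1(f(|u_1|_{L^2},u_1) - f(|u_2|_{L^2},u_2)), u_1-u_2\rangle_{L^2} \lesssim |u_1 - u_2|_{L^2}^2$ after controlling the extra term coming from the dependence of $a$ on $|u|_{L^2}$ via the mean value theorem and the bound on $a'$; note the cubic itself has the right sign because $-u^3$ is dissipative. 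Hence the full drift $\mathcal A + \mathcal N$ satisfies a (generalized) monotonicity/coercivity condition of the type in \cite[Theorem 3.1.5]{LiuRockner15}, after embedding $\mathcal H$ into the Gelfand triple with pivot $H^1\times H^1$ (or $H^1\times L^2$), where the coercivity $\langle \mathcal A(u,v),(u,v)\rangle \lesssim -c|\nabla u|_{L^2}^2 - c|\nabla v|_{L^2}^2 + \dots$ absorbs everything.

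Concretely I would proceed in these steps. First, record the abstract setup: $V = H^1\times H^1 \hookrightarrow \mathcal H \hookrightarrow V^*$, the operator $\mathcal A$, and verify hemicontinuity, which is immediate since all nonlinearities are continuous and at most cubic. Second, verify the (local) monotonicity condition: combine the coercivity of $-\Delta$, the sign of the cubic, the one-sided estimate \eqref{eq:fBoundedAbove} upgraded to a difference estimate, and Lipschitz continuity of the off-diagonal couplings; the nonlocal factor $a(|u|_{L^2})$ requires splitting $a(|u_1|_{L^2})u_1(u_0-u_1) - a(|u_2|_{L^2})u_2(u_0 - u_2)$ into a term with fixed $a$ (handled by \eqref{eq:fBoundedAbove}) and a term $(a(|u_1|_{L^2}) - a(|u_2|_{L^2}))u_2(u_0-u_2)$, bounded using $|a(|u_1|_{L^2}) - a(|u_2|_{L^2})| \le \|a'\|_\infty\,|u_1 - u_2|_{L^2}$ and an interpolation/Sobolev bound on the remaining $L^2$-pairing — here the embedding $H^1\hookrightarrow L^4$ in $d\le 2$ is what makes the cubic term live in $V^*$. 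Third, verify coercivity and the growth bound on $\|\mathcal A(u,v) + \mathcal N(u,v)\|_{V^*}$, again using $H^1\hookrightarrow L^6$ for $d\le 2$ to control the cubic, and the trace-class property $\gamma>d/4$ of $BB^T$ for the noise term. Fourth, apply the existence-uniqueness theorem to get a unique $\mathcal H$-valued solution with continuous paths, and invoke the accompanying a priori moment estimate (It\^o's formula for $|(U_t,V_t)|_{\mathcal H}^2$, Burkholder–Davis–Gundy, Gronwall) to obtain $\mathbb E[\sup_{t\le T}|(U_t,V_t)|_{\mathcal H}^p]<\infty$ for all $p\ge 1$.

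The main obstacle I expect is cleanly handling the nonlocal coefficient $a(|u|_{L^2})$ inside the monotonicity estimate: one must be careful that the map $u\mapsto a(|u|_{L^2})u(u_0-u)$ is not monotone by itself, and that the failure of monotonicity is controlled purely by a multiple of $|u_1-u_2|_{L^2}^2$ (a ``local monotonicity'' constant that does not blow up), which is exactly what \cite{LiuRockner15} permits. A secondary technical point is the degeneracy of the noise (no noise in the $V$-equation), which is harmless for existence and uniqueness but means the a priori estimate for $V$ is obtained deterministically from the estimate for $U$ via the variation-of-constants representation $V_t = \epsilon b\int_0^t e^{(t-r)(D_V\Delta - \epsilon I)}U_r\,\mathrm dr$ and the contractivity of that semigroup on $L^2$; in fact this representation also furnishes an alternative, arguably cleaner route — solve the scalar non-Markovian equation \eqref{eq:SPDEmodelLinearFN} for $U$ first, then define $V$ by the formula — but the product-space monotonicity argument is the most robust and is what I would present.
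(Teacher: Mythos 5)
Your proposal is correct and follows essentially the same route as the paper: both cast \eqref{eq:Activator}, \eqref{eq:Inhibitor} as a system on $\mathcal H=L^2\times L^2$ with the Gelfand triple $H^1\times H^1\subset\mathcal H\subset (H^1\times H^1)^*$, verify hemicontinuity, local monotonicity, coercivity and growth in the Liu--R\"ockner variational framework (the paper invokes the locally monotone version, \cite[Theorem 5.1.3]{LiuRockner15}), handle the nonlocal coefficient $a(\lvert u\rvert_{L^2})$ by exactly the splitting and mean-value argument you describe together with the one-sided bound \eqref{eq:fBoundedAbove}, and control the cubic via Sobolev embeddings in $d\leq 2$. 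The only cosmetic differences are your citation of the globally monotone Theorem 3.1.5 where the locally monotone variant is what is actually needed (as you yourself note), and your mention of the alternative variation-of-constants route, which the paper uses only later for the higher regularity of $V$.
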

In particular,
\begin{align}\label{eq:FNBasicRegularity}
	\mathbb{E}\left[\sup_{0\leq t\leq T}|U_t|_{L^2}^p\right] < \infty, \quad \mathbb{E}\left[\sup_{0\leq t\leq T}|V_t|_{L^2}^p\right] < \infty
\end{align}
for any $p\geq 1$. 

\begin{proof}[Proof of Lemma \ref{lem:FNWellPosed}]
	This follows from \cite[Theorem 5.1.3]{LiuRockner15}. In order to apply this result, we have to test the conditions $(H1)$, $(H2')$, $(H3)$, $(H4')$ (i.e. hemicontinuity, local monotonicity, coercivity and growth) therein. Set $\mathcal V = H^1\times H^1$. Define $A(U, V) = (A_1(U, V), A_2(U, V))$ with
\begin{align}
	A_1(U, V) &= D_U\Delta U + k_1f(\lvert U\rvert_{L^2}, U) - k_2V, \\
	A_2(U, V) &= D_V\Delta V + \epsilon(bU - V).
\end{align}
As $B=(-\Delta)^{-\gamma}$ is constant and of Hilbert-Schmidt type, we can neglect it in the estimates. 
In order to check the conditions , we have to look separately at $A_1$ and $A_2$. The statements for $A_2$ is trivial by linearity, so we test the parts of the conditions corresponding to $A_1$. $(H1)$ is clear as $f(x,u)$ is a polynomial in $u$ and continuous in $x$. 
$(H2)$ follows from \eqref{eq:fBoundedAbove} via
\begin{align*}
	{}_{H^{-1}}\langle A_1(U_1, V_1) - A_1(U_2, V_2), U_1 - U_2\rangle_{H^1} \hspace{-5cm} & \\
		&\lesssim k_1{}_{H^{-1}}\langle f(\lvert U_1\rvert_{L^2}, U_1) - f(\lvert U_2\rvert_{L^2}, U_2),U_1-U_2\rangle_{H^1} \\
		&\quad + k_2|V_1-V_2|_{L^2}|U_1-U_2|_{L^2} \\
		&\lesssim k_1{}_{H^{-1}}\langle \partial_uf(\lvert U_1\rvert_{L^2}\widetilde V)(U_1-U_2),U_1-U_2\rangle_{H^1} \\
		&\quad + k_1{}_{H^{-1}}\langle \partial_x f(\tilde x, U_2)(\lvert U_1\rvert_{L^2}-\lvert U_2\rvert_{L^2}),U_1-U_2\rangle_{H^1} \\
		&\quad + C\lVert(U_1,V_1) - (U_2,V_2)\rVert_{\mathcal{H}}^2 \\
		&\lesssim k_1|\partial_x f(\tilde x, U_2)|_{L^2}\left|\lVert U_1\rVert_{L^2}-\lVert U_2\rVert_{L^2}\right||U_1-U_2|_{L^2} \\
		&\quad + C\lVert(U_1,V_1) - (U_2,V_2)\rVert_{\mathcal{H}}^2 \\
		&\lesssim (1 + |\partial_x f(\tilde x, U_2)|_{L^2})\lVert(U_1,V_1) - (U_2,V_2)\rVert_{\mathcal{H}}^2
\end{align*}
for some $\tilde x\in\R$ and $\widetilde V:\mathcal{D}\rightarrow\R$, 
and $|\partial_x f(\tilde x, U_2)|_{L^2} = |u_0a'(\tilde x)U_2(u_0-U_2)|_{L^2}\lesssim |U_2|_{L^2}+|U_2^2|_{L^2}$. With $|U_2^2|_{L^2} = |U_2|_{L^4}^2\lesssim |U_2|_{H^1}^2$, we see that local monotonicity as in $(H2)$ is satisfied by taking $\rho((u,v)) = c|u|_{H^1}^2\leq c|(u,v)|_{\mathcal{V}}^2$. Now, for $(H3)$,
\begin{align*}
	{}_{H^{-1}}\langle A_1(U,V),U\rangle_{H^1} &\leq -D_U|U|_{H^1}^2 + k_1{}_{H^{-1}}\langle f(\lvert U\rvert_{L^2}, U),U\rangle_{H^1} \\
	    &\quad + k_2|U|_{L^2}|V|_{L^2} \\
		&\lesssim -D_U|U|_{H^1}^2 + k_1\langle \partial_uf(\lvert U\rvert_{L^2}, \widetilde V)U,U\rangle_{L^2} \\
		&\quad + C\lVert(U,V)\rVert_{\mathcal{H}}^2 \\
		&\lesssim -D_U|U|_{H^1}^2 + C|U|_{L^2}^2 + C\lVert(U,V)\rVert_{\mathcal{H}}^2
\end{align*}
for some $\widetilde V:\mathcal{D}\rightarrow\R$, again using \eqref{eq:fBoundedAbove}. 
Finally,
\begin{align*}
	\lvert A_1(U)\rvert_{H^{-1}}^2 &\lesssim D_U|U|_{H^1}^2 + k_1|f(\lvert U\rvert_{L^2}, U)|_{H^{-1}}^2 + k_2|V|_{L^2}^2,
\end{align*}
so it remains to control $|f(\lvert U\rvert_{L^2},U)|_{H^{-1}}^2\lesssim |f(\lvert U\rvert_{L^2},U)|_{L^1}^2\lesssim |U|_{L^1}^2+|U^2|_{L^1}^2+|U^3|_{L^1}^2$, and we have $|U|_{L^1}^2\lesssim|U|_{L^2}^2$ as well as $|U^2|_{L^1}^2=|U|_{L^2}^4$ and $|U^3|_{L^1}^2 = |U|_{L^3}^6\lesssim |U|_{H^{1/3}}^6\lesssim |U|_{H^1}^2|U|_{L^2}^4$. Thus $(H4)$ is true. Putting things together, we get that \cite{LiuRockner15} is applicable for sufficiently large $p\geq 1$, and the claim follows. 
\end{proof}

In order to improve \eqref{eq:FNBasicRegularity} to the $H^1$-norm, it suffices to prove coercivity, i.e. $(H3)$ from \cite{LiuRockner15}, for the Gelfand triple $L^2\subset H^1\subset H^2$. 
\begin{lemma}\label{lem:FNRegularityH1}
	The solution $(U,V)$ to \eqref{eq:Activator}, \eqref{eq:Inhibitor} satisfies $(A_1)$, i.e. for any $p\geq 1$:
	\begin{align}\label{eq:FNRegularityH1}
		\mathbb{E}\left[\sup_{0\leq t\leq T}|U_t|_{H^1}^p\right] < \infty, \quad \mathbb{E}\left[\sup_{0\leq t\leq T}|V_t|_{H^1}^p\right] < \infty.
	\end{align}
\end{lemma}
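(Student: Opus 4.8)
The plan is to upgrade the regularity from $L^2$ to $H^1$ by applying the variational existence-and-uniqueness theory from \cite{LiuRockner15} a second time, now on the Gelfand triple $H^1\subset H^2 \subset H^3$ (equivalently, testing against $(-\Delta)U$ instead of $U$). Since hemicontinuity, local monotonicity and the growth condition transfer essentially verbatim from the proof of Lemma \ref{lem:FNWellPosed} once we have the a priori bound \eqref{eq:FNBasicRegularity} at hand, the only genuinely new estimate to establish is coercivity $(H3)$ in the shifted triple. Concretely, I would write, for $A_1(U,V) = D_U\Delta U + k_1 f(|U|_{L^2}, U) - k_2 V$,
\begin{align*}
    \langle A_1(U,V), -\Delta U\rangle_{L^2} &= -D_U|\Delta U|_{L^2}^2 - k_1\langle \partial_u f(|U|_{L^2}, U)\nabla U, \nabla U\rangle_{L^2} + k_2\langle \nabla V, \nabla U\rangle_{L^2},
\end{align*}
using the chain rule $\nabla f(|U|_{L^2}, U) = \partial_u f(|U|_{L^2}, U)\nabla U$ (the $|U|_{L^2}$-argument is spatially constant, so it does not contribute a gradient term) together with integration by parts. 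The key point is that the leading coefficient of $f$ is negative: writing $f(x,u) = -u^3 + (\text{lower order in } u)$, we have $\partial_u f(x,u) = -3u^2 + (\text{order} \leq 1 \text{ in } u)$, hence $-k_1\langle \partial_u f(|U|_{L^2}, U)\nabla U, \nabla U\rangle_{L^2} \leq k_1 C \langle (1+|U|) |\nabla U|, |\nabla U|\rangle_{L^2}$ after discarding the favourable $-3u^2$ term.

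The term to control is therefore $\langle |U| |\nabla U|, |\nabla U|\rangle_{L^2} \leq |U|_{L^4} |\nabla U|_{L^4} |\nabla U|_{L^2}$ by Hölder. In dimension $d\leq 2$ one has the Gagliardo--Nirenberg interpolation $|\nabla U|_{L^4} \lesssim |U|_{H^2}^{1/2+d/8} |U|_{L^2}^{1/2-d/8}$ (or, more crudely, $|\nabla U|_{L^4}\lesssim |U|_{H^2}^{\theta}|U|_{H^1}^{1-\theta}$ with $\theta<1$) and $|U|_{L^4}\lesssim|U|_{H^1}$; combining these and using Young's inequality with a small weight on the highest-order factor $|U|_{H^2}$, one absorbs a fraction of $D_U|\Delta U|_{L^2}^2$ on the left and is left with a bound of the form $\frac{D_U}{2}|U|_{H^2}^2 \leq \langle -A_1(U,V), -\Delta U\rangle_{L^2} + C(1+|U|_{H^1}^2)(1+ \text{powers of } |U|_{L^2}) + C|V|_{H^1}^2$. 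This is exactly the local/generalized coercivity condition $(H3)$ of \cite{LiuRockner15}, where the ``constant'' multiplying $|U|_{H^1}^2$ is allowed to be an adapted process with finite moments — which it is, thanks to \eqref{eq:FNBasicRegularity}. The estimate for $A_2$ is linear and immediate: $\langle A_2(U,V), -\Delta V\rangle_{L^2} = -D_V|\Delta V|_{L^2}^2 - \epsilon|\nabla V|_{L^2}^2 + \epsilon b\langle \nabla U, \nabla V\rangle_{L^2} \leq -\frac{D_V}{2}|V|_{H^2}^2 + C|U|_{H^1}^2$.

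With coercivity in hand, \cite[Theorem 5.1.3]{LiuRockner15} (or its local-monotonicity variant, exactly as invoked in Lemma \ref{lem:FNWellPosed}) yields a unique variational solution in the triple $H^1\subset H^2\subset H^3$ together with the bound $\mathbb{E}\big[\sup_{0\leq t\leq T}|(U_t,V_t)|_{H^1\times H^1}^p\big]<\infty$ for all $p\geq 1$; by uniqueness this solution coincides with the one from Lemma \ref{lem:FNWellPosed}, which gives \eqref{eq:FNRegularityH1}. (Here one should note that the $H^1$-norm of the initial datum enters, and recall the standing assumption that $X_0$, hence $(U_0,V_0)$, has finite moments of every $H^s$-norm.) The main obstacle is purely the coercivity computation: one has to be careful that the nonlinear contribution $\langle \partial_u f(|U|_{L^2},U)\nabla U,\nabla U\rangle$, which is genuinely cubic in $U$ and quadratic in $\nabla U$, can be dominated by the dissipation $D_U|\Delta U|_{L^2}^2$ only because $d\leq 2$ makes the relevant Sobolev/Gagliardo--Nirenberg exponent strictly subcritical; in $d\geq 3$ this step would fail. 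Everything else — hemicontinuity, the structure of the growth bound $(H4')$, measurability — is a routine transcription of the corresponding parts of the proof of Lemma \ref{lem:FNWellPosed}.
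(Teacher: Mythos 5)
Your overall strategy coincides with the paper's: both arguments upgrade the regularity by verifying coercivity for the shifted Gelfand triple (pivot space $H^1$, i.e.\ testing $A_1$ against $-\Delta U$) and then invoking the a priori estimate of \cite[Lemma 5.1.5]{LiuRockner15}. The difference lies in how the term $\langle \partial_u f(\lvert U\rvert_{L^2},U)\nabla U,\nabla U\rangle_{L^2}$ is handled, and here your detour is both unnecessary and the one soft spot of the argument. By discarding the $-3u^2$ part of $\partial_u f$ you are forced to control $\int \lvert U\rvert\,\lvert\nabla U\rvert^2$ via H\"older, Gagliardo--Nirenberg and Young, which (i) introduces a spurious restriction to $d\leq 2$ at this particular step, and (ii) leaves a coercivity bound whose ``constant'' in front of $\lvert U\rvert_{H^1}^2$ depends on the solution itself, so it does not literally satisfy $(H3)$ of \cite{LiuRockner15}; closing the argument would require interpolating the excess powers of $\lvert U\rvert_{H^1}$ back against the dissipation (e.g.\ via $\lvert U\rvert_{H^1}^2\leq\lvert U\rvert_{L^2}\lvert U\rvert_{H^2}$) plus a localization or Gronwall step. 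The paper instead keeps the quadratic part: since $\partial_u f(x,u)=-3u^2+2(1+a(x))u_0u-a(x)u_0^2$ is a downward parabola with bounded coefficients, it is bounded \emph{above} uniformly in $x$ and $u$ --- this is exactly \eqref{eq:fBoundedAbove}, already recorded for the $L^2$ step --- and applying it componentwise with $v=\partial_{x_j}U$ gives $\langle\partial_u f(\lvert U\rvert_{L^2},U)\nabla U,\nabla U\rangle_{L^2}\lesssim\lvert\nabla U\rvert_{L^2}^2$ in one line, in any dimension, yielding a coercivity estimate of precisely the form required by the cited lemma. Separately, note a sign slip in your displayed identity: $\langle k_1f,-\Delta U\rangle_{L^2}=+k_1\langle\partial_uf\,\nabla U,\nabla U\rangle_{L^2}$ and $\langle -k_2V,-\Delta U\rangle_{L^2}=-k_2\langle\nabla V,\nabla U\rangle_{L^2}$; your subsequent estimates are consistent with the correct signs, so this is cosmetic, but with the signs as written the ``$-3u^2$ term'' would point the wrong way and could not be discarded.
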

\begin{proof}
	This is done by
	\begin{align*}
		{}_{L^2}\langle A_1(U,V),U\rangle_{H^2} &\leq -D_U|U|_{H^2}^2 + k_1\langle \partial_uf(\lvert U\rvert_{L^2}, U)\nabla U,\nabla U\rangle_{L^2} \\
		    &\quad + k_2|U|_{H^1}|V|_{H^1} \\
			&\lesssim -D_U|U|_{H^2}^2 + C|U|_{H^1}^2 + C\lVert(U,V)\rVert_{H^1\times H^1}^2,
	\end{align*}
	where \eqref{eq:fBoundedAbove} has been used componentwise. By \cite[Lemma 5.1.5]{LiuRockner15} we immediately obtain \eqref{eq:FNRegularityH1}. 
\end{proof}

Remember that by integrating \eqref{eq:Inhibitor}, we can write \eqref{eq:Activator} as
\begin{align}
	\mathrm{d}U_t = (D_U\Delta U_t + \theta_1F_1(U_t) + \theta_2F_2(U_t) + \theta_3F_3(U)(t))\mathrm{d}t + B\mathrm{d}W_t,
\end{align}
where we adopted the notation from \eqref{eq:FNmodelF1}, \eqref{eq:FNmodelF2}, \eqref{eq:FNmodelF3}.

\begin{lemma}
	The process $(U, V)$ satisfies $(A_s)$ for some $s>1$. 
\end{lemma}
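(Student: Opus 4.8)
The plan is to bootstrap exactly once, starting from $(A_1)$ --- which was just established in Lemma \ref{lem:FNRegularityH1} for both $U$ and $V$ --- by means of the splitting argument of Section \ref{sec:BasicRegularityResults} together with the gain-of-regularity statement in Proposition \ref{prop:AdditionalRegularity}(i). Writing the activator equation in the non-Markovian form \eqref{eq:SPDEmodelLinearFN}, with nonlinearity $F = \theta_1 F_1 + \theta_2 F_2 + \theta_3 F_3$ as in \eqref{eq:FNmodelF1}--\eqref{eq:FNmodelF3}, it suffices to verify $(F_{1,\eta})$ for some small $\eta > 0$ and then to invoke Proposition \ref{prop:AdditionalRegularity}(i) with $s = 1$. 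Since $\gamma > d/4$ we have $2\gamma - d/2 + 1 > 1$, so we may fix $\eta \in (0, (2\gamma - d/2)\wedge 1)$; then $1 + \eta < 2\gamma - d/2 + 1$, and the proposition yields $(A_{1+\eta})$ for $U$ with $s := 1 + \eta > 1$.

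The only genuine work is checking $(F_{1,\eta})$. For the memoryless terms $F_1, F_2$ this amounts to a pointwise-in-time bound of the form $|F_i(Y_t)|_{-1+\eta+\epsilon} \lesssim (1 + |Y_t|_1)^b$. In spatial dimension $d \le 2$ one has the Sobolev embeddings $H^1(\mathcal{D}) \hookrightarrow L^p(\mathcal{D})$ for every $p < \infty$; hence, for $Y_t \in H^1$, the powers $Y_t^2$ and $Y_t^3$ lie in $L^2 = H^0$ with $|Y_t^2|_{L^2} = |Y_t|_{L^4}^2 \lesssim |Y_t|_1^2$ and $|Y_t^3|_{L^2} = |Y_t|_{L^6}^3 \lesssim |Y_t|_1^3$, while the prefactor $a(|Y_t|_{L^2})/\bar a$ is bounded because $a \in C^1_b$. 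This gives $|F_1(Y_t)|_{L^2} + |F_2(Y_t)|_{L^2} \lesssim (1 + |Y_t|_1)^3$, and since $H^0 \hookrightarrow H^{-1+\eta+\epsilon}$ whenever $\eta + \epsilon \le 1$ --- which we arrange by taking $\epsilon$ small --- condition $(F_{1,\eta})$ follows for $F_1$ and $F_2$. For $F_3$ we use that $e^{t(D_V\Delta - \epsilon I)}$ is a contraction on each $H^\sigma$, so $|F_3(Y)(t)|_1 \le \int_0^t |Y_r|_1\,\mathrm{d}r \le T\,|Y|_{C(0,T;H^1)}$; thus the full $C(0,T;H)$-version of $(F_{1,\eta})$ holds for $F_3$ with $b = 1$, and with room to spare. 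In $d = 1$ this verification is immediate, since $H^1$ is then a Banach algebra.

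It then remains to transfer the conclusion to the inhibitor. From $V_t = e^{t(D_V\Delta - \epsilon I)}V_0 + \epsilon b\int_0^t e^{(t-r)(D_V\Delta - \epsilon I)}U_r\,\mathrm{d}r$, the contraction property of the semigroup, the smoothness of $V_0$, and $(A_s)$ for $U$, we obtain $\sup_{0\le t\le T}|V_t|_s \le |V_0|_s + \epsilon b T \sup_{0\le t\le T}|U_t|_s$; taking $p$-th moments yields $(A_s)$ for $V$ as well (and, exploiting the smoothing of the semigroup, one could even reach $(A_{s+2})$ for $V$, but this is not needed here). The main obstacle --- such as it is --- is bookkeeping: keeping track of the admissible Sobolev exponents in $d = 2$, where $H^1$ just fails to embed into $L^\infty$, and checking that the resulting $\eta > 0$ respects the ceiling $2\gamma - d/2 + 1$ imposed by the noise regularity. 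There is no analytic difficulty beyond what is already packaged in Proposition \ref{prop:AdditionalRegularity}.
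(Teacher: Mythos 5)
Your proof is correct and follows essentially the same route as the paper's: both rest on the splitting $U=\overline U+\widetilde U$, the observation that the cubic nonlinearity of an $H^1$-bounded process lies in $L^2$ by Sobolev embedding in $d\le 2$, and the almost-two-derivative parabolic smoothing of the semigroup, with the ceiling $s<2\gamma-d/2+1>1$ coming from the linear part $\overline U$. The only cosmetic difference is that the paper runs the smoothing estimate in $W^{\eta,q}$ for general $q\ge 1$ (using only $q=2$ for this claim) and defers the inhibitor to the proof of Proposition \ref{prop:FNWellPosed} via $V=\epsilon b F_3(U)$, whereas you stay in the Hilbert scale by verifying $(F_{1,\eta})$ and invoking Proposition \ref{prop:AdditionalRegularity}(i) directly, and treat $V$ on the spot.
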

\begin{proof}
	Let $U=\overline U+\widetilde U$ be the decomposition of $U$ into its linear and nonlinear part as in \eqref{eq:SplittingEquationLinear},\eqref{eq:SplittingEquationNonlinear}. We will prove
	\begin{align}\label{eq:LpRegularity}
		\mathbb{E}\left[\sup_{0\leq t\leq T}|\widetilde U_t|_{W^{s, q}}^p\right] < \infty
	\end{align}
	for any $p,q\geq 1$ and $s<2$. As $\gamma>d/4$ in $d\leq 2$, there is $s>1$ such that \eqref{eq:LpRegularity} is true for $\overline U$ with $q=2$, and the claim follows. Let us now prove \eqref{eq:LpRegularity}. 
	First, note that by Lemma \ref{lem:FNRegularityH1} and the Sobolev embedding theorem in dimension $d\leq 2$, \eqref{eq:LpRegularity} is true for $U_t$ with any $p,q\geq 1$ and $s=0$. 
	For $k\in\N$, by $|U_t^k|_{L^q}^p = |U_t|_{L^{kq}}^{kp}$, we see that polynomials in $U$ satisfy \eqref{eq:LpRegularity} with $s=0$, too. In particular, using that $a$ is bounded,
	\begin{align}
		\mathbb{E}\left[\sup_{0\leq t\leq T}|F_i(U_t)|_{L^q}^p\right] < \infty
	\end{align}
	for $i=1,2$ and $p,q\geq 1$. A simple calculation 
	shows that the same is true for $F_3$. As in the proof of Proposition \ref{prop:AdditionalRegularity}, we see that
	\begin{align*}
		\sup_{0\leq t\leq T}|\widetilde U_t|_{\eta, q} \leq |U_0|_{\eta, q} + \frac{2}{\epsilon}T^\frac{\epsilon}{2}\sup_{0\leq t\leq T}|\theta_1F_1(U)+\theta_2F_2(U)+\theta_3F_3(U)|_{L^{q}}
	\end{align*}
	with $\eta<2$, $q\geq 1$ and $\epsilon=2-\eta$, and the proof is complete. 
\end{proof}

\begin{proof}[Proof of Proposition \ref{prop:FNWellPosed}]
	In $d\leq 2$, $H^s$ is an algebra for $s>1$, i.e. $uv\in H^s$ for $u,v\in H^s$ \cite{AdamsFournier03}. Together with the assumption that $a$ is bounded, it follows immediately that $(F_{s, \eta})$ holds for $F_1$, $F_2$ separately (with $K=0$) for any $s>1$ and $\eta < 2$. For $F_3$, we have for any $\delta>0$:
	\begin{align}
		\sup_{0\leq t\leq T}|F_3(U_t)|_{s+2-\delta}
			&\leq \int_0^T|e^{(t-r)(D_V\Delta-\epsilon I)}U_r|_{s+2-\delta}\mathrm{d}r \\
			&\leq \int_0^T(t-r)^{-1+\delta/2}|U_r|_{s}\mathrm{d}r \\
			&\leq \frac{2}{\delta}T^\frac{\delta}{2}\sup_{0\leq t\leq T}|U_t|_s,
	\end{align}
	so $F_3$ satisfies $(F_{s, \eta})$ even with $s\in\R$, $\eta<4$. It is now clear that the $(F_{s, \eta})$ holds for $F = \theta_1F_1+\theta_2F_2+\theta_3F_3$ for any $s>1$ and $\eta < 2$ (with $K=3$). 
	The claim follows from Proposition \ref{prop:AdditionalRegularity} (i) for $U$ and (ii) for $V$, noting that $V = \epsilon b F_3(U)$. 
\end{proof}

\subsection{Proof of Proposition \ref{prop:LinearModelEstimatorBounded}}\label{app:LinearModelEstimatorBounded}

This section is devoted to proving Proposition \ref{prop:LinearModelEstimatorBounded}. 
A similar argument can be found in \cite[Chapter 3]{Huebner93} for linear SPDEs. 
We start with an auxiliary statement, which characterizes the rate of $\det(A_N(X))$. 
For simplicity of notation, we abbreviate $a^N_i:=A_N(X)_{i,i}$ for $i=0,\dots, K$. 
While there is a trival upper bound 
$\det(A_N(X))\lesssim a^N_0\dots a^N_K$ 
obtained by the Cauchy-Schwarz inequality, the corresponding lower bound requires more work. Our argument is geometric in nature: A Gramian matrix measures the (squared) volume of a parallelepiped spanned by the vectors defining the matrix. The argument takes place in the separable Hilbert space $L^2(0,T;H^{2\alpha})$, and we find a (uniform in $N$) lower bound on the $(K+1)$-dimensional volume of the parallelepiped spanned by the vectors $P_N(-\Delta)X, P_NF_1(X), \dots, P_NF_K(X)$. While the latter $K$ components (and thus their $K$-dimensional volume) converge, the first component gets eventually orthogonal to the others as $N$ grows. This is formalized as follows: 

\begin{lemma}\label{lem:DeterminantRate}
	Let $\eta,s_0>0$ such that $(A_s)$ and $(F_{s,\eta})$ are true for $s_0\leq s < 2\gamma+1-d/2$. 
	Let $\gamma-d/4-1/2<\alpha\leq\gamma-d/4-1/2+\eta/2$. Under assumption $(L_\alpha)$, we have
	\begin{align}
		\det(A_N(X))\gtrsim a^N_0\dots a^N_K.
	\end{align}
	In particular, $A_N(X)$ is invertible if $N$ is large enough. 
\end{lemma}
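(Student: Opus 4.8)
The plan is to interpret $\det(A_N(X))$ as a Gram determinant in the Hilbert space $\mathcal{H}_\alpha := L^2(0,T;H^{2\alpha})$, spanned by the $K+1$ vectors $v_0^N := P_N(-\Delta)X$ and $v_i^N := P_N F_i(X)$ for $i=1,\dots,K$. Writing $G_N$ for the Gram matrix of these vectors, one has $\det A_N(X) = \det G_N$, and the elementary geometric identity
\begin{align*}
	\det G_N = |v_0^N - \Pi_N v_0^N|_{\mathcal{H}_\alpha}^2 \cdot \det G_N',
\end{align*}
where $G_N'$ is the Gram matrix of $v_1^N,\dots,v_K^N$ and $\Pi_N$ is the orthogonal projection in $\mathcal{H}_\alpha$ onto their span. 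Since $a_0^N = |v_0^N|^2_{\mathcal{H}_\alpha}$ and $a_i^N = |v_i^N|^2_{\mathcal{H}_\alpha}$, it suffices to prove two things: (a) $\det G_N' \gtrsim a_1^N \cdots a_K^N$, i.e. the angles between the $v_i^N$, $i\geq 1$, stay bounded away from degeneracy as $N\to\infty$; and (b) $|v_0^N - \Pi_N v_0^N|^2_{\mathcal{H}_\alpha} \gtrsim a_0^N$, i.e. the large vector $v_0^N$ does not align with the span of the others.

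For (a), the key point is convergence: by Proposition \ref{prop:AdditionalRegularity}(ii), each $F_i(X)\in L^2(0,T;H^{2\alpha})$ (using $2\alpha \leq 2\gamma-d/2-1+\eta$, which is exactly the stated upper bound on $\alpha$), and since $\alpha\leq\gamma-d/4-1/2+\eta/2$ ensures a little extra regularity, $P_N F_i(X)\to F_i(X)$ in $\mathcal{H}_\alpha$ almost surely. Hence $G_N'\to G_\infty'$, the Gram matrix of $F_1(X),\dots,F_K(X)$ in $\mathcal{H}_\alpha = L^2(0,T;H^{2\alpha})$, which is almost surely invertible: indeed, on the event $\{X$ is non-constant$\}$ (which has full probability, as $X$ is driven by nondegenerate noise), condition $(L_\alpha)$ guarantees linear independence of $F_1(X),\dots,F_K(X)$ in $L^2(0,T;H^{2\alpha})$, so $\det G_\infty' > 0$ a.s. Combined with $a_i^N \to |F_i(X)|^2_{\mathcal{H}_\alpha}$, this yields $\det G_N'/(a_1^N\cdots a_K^N)\to \det G_\infty'/\prod_i |F_i(X)|^2 > 0$ a.s., giving (a).

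For (b), I would write $\rho_N := |v_0^N - \Pi_N v_0^N|^2/a_0^N = 1 - |\Pi_N v_0^N|^2/a_0^N$, the squared sine of the angle between $v_0^N$ and $\mathrm{span}(v_1^N,\dots,v_K^N)$, and show $\liminf_N \rho_N > 0$ a.s. Since the projection onto a finite-dimensional space is controlled by any orthonormal basis, $|\Pi_N v_0^N|^2 \leq C \sum_{i=1}^K |\langle v_0^N, w_i^N\rangle|^2$ where $w_i^N$ is a basis of the span with Gram determinant bounded below by (a); so it is enough to bound $|\langle v_0^N, v_i^N\rangle_{\mathcal{H}_\alpha}| / a_0^N \to 0$. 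Now $\langle v_0^N, v_i^N\rangle_{\mathcal{H}_\alpha} = \int_0^T \langle (-\Delta)^{1+2\alpha} X_t^N, P_N F_i(X)_t\rangle \mathrm{d}t = A_N(X)_{0,i}$ up to sign, and by Cauchy--Schwarz this is $\leq (a_0^N)^{1/2}\big(\int_0^T |(-\Delta)^{2\alpha}P_N F_i(X)_t|_H^2\,\mathrm{d}t\big)^{1/2} = (a_0^N)^{1/2} (a_i^N)^{1/2}$, which is bounded since $a_i^N$ converges. On the other hand $a_0^N = \int_0^T|(-\Delta)^{1+\alpha}X_t^N|_H^2\,\mathrm{d}t \asymp C_\alpha N^{\frac{2}{d}(2\alpha-2\gamma+1)+1}\to\infty$ by Lemma \ref{lem:DenominatorRate} (the exponent is positive precisely because $\alpha > \gamma-d/4-1/2$). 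Therefore $|\langle v_0^N, v_i^N\rangle_{\mathcal{H}_\alpha}|/a_0^N \lesssim (a_i^N)^{1/2}/(a_0^N)^{1/2}\to 0$, so $\rho_N\to 1$ a.s., which is more than enough for (b).

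Putting (a) and (b) together gives $\det A_N(X) = \det G_N = \rho_N a_0^N \cdot \det G_N' \gtrsim a_0^N a_1^N\cdots a_K^N$ almost surely for large $N$; since all these statements are statements about almost-sure asymptotics (equivalently, convergence in probability), the claimed bound $\det(A_N(X))\gtrsim a_0^N\cdots a_K^N$ holds, and in particular $A_N(X)$ is invertible for $N$ large. \textbf{The main obstacle} I anticipate is bookkeeping the ``in probability'' versus ``almost sure'' qualifiers consistently — the cleanest route is to argue pathwise on the full-probability event where $X$ is non-constant, $P_N F_i(X)\to F_i(X)$ in $\mathcal{H}_\alpha$, and $a_0^N$ has the growth from Lemma \ref{lem:DenominatorRate}, and only at the end translate the resulting pathwise lower bound into the $\gtrsim$ (in probability) statement. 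A secondary technical point is verifying that $(L_\alpha)$, which is phrased for non-constant $Y\in C(0,T;C(\mathcal{D}))$, applies to $X$: this needs $X\in C(0,T;C(\mathcal{D}))$ a.s. (which follows from $(A_s)$ for $s$ large enough together with Sobolev embedding, available since $s$ can be taken close to $2\gamma+1-d/2 > 1$ when $d\leq 2$ and $\gamma$ is as assumed) and that $X$ is a.s. non-constant, which is immediate from the presence of the nondegenerate additive noise.
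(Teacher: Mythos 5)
Your overall architecture is the same as the paper's: the paper also reads $\det(A_N(X))$ as a Gram determinant in $L^2(0,T;H^{2\alpha})$, shows the normalized $P_NF_i(X)$, $i\geq 1$, converge so that $(L_\alpha)$ controls their Gram minor, and argues that the component $P_N(-\Delta)X$ becomes asymptotically orthogonal to the span of the others because its norm diverges. Your part (a) is sound, and your remarks on applying $(L_\alpha)$ pathwise (continuity of $X$ via Sobolev embedding, non-constancy via the noise) are a reasonable supplement to what the paper leaves implicit.

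However, part (b) has a genuine gap. What you need is that the \emph{cosine} of the angle between $v_0^N$ and each $v_i^N$ tends to zero, i.e.
\begin{align*}
\frac{\lvert\langle v_0^N, v_i^N\rangle_{\mathcal{H}_\alpha}\rvert}{(a_0^N)^{1/2}(a_i^N)^{1/2}}\longrightarrow 0,
\end{align*}
since the projection bound reads $\lvert\Pi_N v_0^N\rvert^2\leq C\sum_i \lvert\langle v_0^N, v_i^N\rangle\rvert^2/a_i^N$ and hence $\lvert\Pi_N v_0^N\rvert^2/a_0^N\leq C\sum_i\lvert\langle v_0^N,v_i^N\rangle\rvert^2/(a_0^Na_i^N)$. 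You instead reduce to $\lvert\langle v_0^N,v_i^N\rangle\rvert/a_0^N\to 0$ and verify it by Cauchy--Schwarz; but that chain is circular: Cauchy--Schwarz gives $\lvert\langle v_0^N,v_i^N\rangle\rvert\leq (a_0^N a_i^N)^{1/2}$, which when fed back into the correctly normalized quantity yields only the trivial bound $\leq 1$, so it cannot show $\rho_N\to 1$ (it does not even show $\rho_N$ bounded away from $0$ once the constant $C$ and the sum over $K$ terms are accounted for). Cauchy--Schwarz alone can never prove that two unit vectors become orthogonal. The paper closes exactly this hole with a low-/high-frequency splitting: fix $N_0$ so that $P_{N_0}F_i(X)$ already captures most of $F_i(X)$ (possible since $P_NF_i(X)\to F_i(X)$ in $\mathcal{H}_\alpha$), write $\langle v_0^N,v_i^N\rangle=\langle P_{N_0}F_0(X),P_{N_0}F_i(X)\rangle_\alpha+\langle (P_N-P_{N_0})F_0(X),(P_N-P_{N_0})F_i(X)\rangle_\alpha$, bound the first term by $\lVert P_{N_0}F_0(X)\rVert_\alpha\lVert F_i(X)\rVert_\alpha$ (small relative to $\lVert P_NF_0(X)\rVert_\alpha\to\infty$) and the second by $\lVert P_NF_0(X)\rVert_\alpha\,\lVert (I-P_{N_0})F_i(X)\rVert_\alpha$ (small relative to $\lVert F_i(X)\rVert_\alpha$ by the choice of $N_0$). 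Incorporating this estimate would complete your argument.
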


\begin{proof}
	The argument is pathwise, so we fix a realization of $X$. We abbreviate $\langle\cdot,\cdot\rangle_\alpha:=\langle\cdot,\cdot\rangle_{L^2(0,T;H^{2\alpha})}$ and $\lVert\cdot\rVert_\alpha:=|\cdot|_{L^2(0,T;H^{2\alpha})}$, further we write $F_0(X) = \Delta X$ in order to unify notation. 
	By a simple normalization procedure, it suffices to show that 
	\begin{align}\label{eq:DeterminantFull}
		\liminf_{N\rightarrow\infty}\det\left(\left(\left\langle \frac{P_NF_i(X)}{\lVert P_NF_i(X)\rVert_\alpha},\frac{P_NF_j(X)}{\lVert P_NF_j(X)\rVert_\alpha}\right\rangle_\alpha\right)_{i,j=0,\dots K}\right) > 0.
	\end{align}
	Let $\epsilon>0$ and choose $N_0,N_1\in\N$ with $N_0\leq N_1$ such that:
	\begin{itemize}
		\item $\lVert P_NF_i(X)-F_i(X)\rVert_\alpha <\epsilon \lVert F_i(X)\rVert_\alpha$ and $\lVert F_i(X)\rVert_\alpha / \lVert P_NF_i(X)\rVert_\alpha < 2$ for $i=1,\dots,K$ and $N\geq N_0$. This is possible due to $\alpha\leq\gamma-d/4-1/2+\eta/2.$ 
		\item $\lVert P_{N_0}F_0(X) \,/\, \lVert P_{N}F_0(X)\rVert_\alpha\,\rVert_\alpha <\epsilon$ for $N\geq N_1$. This is possible as $\lVert P_NF_0(X)\rVert_\alpha$ diverges due to $\alpha > \gamma-d/4-1/2$. 
	\end{itemize}
	Now one performs a Laplace expansion of \eqref{eq:DeterminantFull} in the first column. Each entry of the matrix in \eqref{eq:DeterminantFull} can be bounded by $1$ trivially, but we can say more:
	\begin{align*}
		\left|\left\langle\frac{P_NF_0(X)}{\lVert P_NF_0(X)\rVert_\alpha},\frac{P_NF_i(X)}{\lVert P_NF_i(X)\rVert_\alpha}\right\rangle_\alpha\right| 
			&\leq \left|\left\langle\frac{P_{N_0}F_0(X)}{\lVert P_{N}F_0(X)\rVert_\alpha},\frac{P_{N_0}F_i(X)}{\lVert P_NF_i(X)\rVert_\alpha}\right\rangle_\alpha\right| \\
			&\hspace{-2cm}\quad + \left|\left\langle\frac{(P_N-P_{N_0})F_0(X)}{\lVert P_{N}F_0(X)\rVert_\alpha},\frac{(P_N-P_{N_0})F_i(X)}{\lVert P_NF_i(X)\rVert_\alpha}\right\rangle_\alpha\right| \\
			&\hspace{-2cm}\leq \left\lVert\frac{P_{N_0}F_0(X)}{\lVert P_{N}F_0(X)\rVert_\alpha}\right\rVert_\alpha + \left\lVert\frac{(I-P_{N_0})F_i(X)}{\lVert P_NF_i(X)\rVert_\alpha}\right\rVert_\alpha \\
			&\hspace{-2cm}\leq \epsilon + 2\epsilon
	\end{align*}
	for $N\geq N_1$ with $i=1,\dots,K$, where we used the Cauchy-Schwarz inequality and the choice of $N_0$, $N_1$. Consequently, in order for \eqref{eq:DeterminantFull} to be true, it suffices to show for the $(0, 0)$-minor:
	\begin{align}\label{eq:DeterminantMinor}
		\liminf_{N\rightarrow\infty}\det\left(\left(\left\langle \frac{P_NF_i(X)}{\lVert P_NF_i(X)\rVert_\alpha},\frac{P_NF_j(X)}{\lVert P_NF_j(X)\rVert_\alpha}\right\rangle\right)_{i,j=1,\dots K}\right) > 0.
	\end{align}
	By $(L_\alpha)$, we have
	\begin{align}\label{eq:DeterminantLimit}
		\det\left(\left(\left\langle \frac{F_i(X)}{\lVert F_i(X)\rVert_\alpha},\frac{F_j(X)}{\lVert F_j(X)\rVert_\alpha}\right\rangle_\alpha\right)_{i,j=1,\dots K}\right) > 0,
	\end{align}
	so \eqref{eq:DeterminantMinor} holds true by continuity 
	for $i=1,\dots,K$.
\end{proof}

\begin{proof}[Proof of Proposition \ref{prop:LinearModelEstimatorBounded}]
	As before, we formally write $F_0(X) = \Delta X$ in order to unify notation. 
	By plugging in the dynamics of $X$, we see that for $i=0,\dots,K$,
	\begin{align}
		b_N(X)_{i} = \int_0^T\langle(-\Delta)^{2\alpha-\gamma}P_NF_i(X),\mathrm{d}W^N_t\rangle + \sum_{k=0}^K\theta_kA_N(X)_{i, k}.
	\end{align}
	Thus, with $\bar b_N(X)_{i} = \int_0^T\langle(-\Delta)^{2\alpha-\gamma}P_NF_i(X),\mathrm{d}W^N_t\rangle$ and $\theta = (\theta_0,\dots,\theta_K)^T$, this read as
	\begin{align}
		A_N(X)\hat\theta^N(X) = \bar b_N(X) + A_N(X)\theta,
	\end{align}
	i.e.
	\begin{align}
		\hat\theta^N - \theta = A_N^{-1}(X)\bar b_N(X).
	\end{align}
	Using the explicit representation for the inverse matrix $A_N(X)^{-1}$, we have 
	\begin{align}
		\hat\theta^N_j-\theta_j = \frac{1}{\det(A_N(X))}\sum_{i=0}^K(-1)^{i+j}\det(\overline A^{(i,j)}_N(X))\bar b_N(X)_i,
	\end{align}
	where $\overline A^{(i,j)}_N(X)$ results from $A_N(X)$ by erasing the $i$th row and the $j$th column. 
	Remember that by Lemma \ref{lem:DenominatorRate}, $a^N_0\asymp C_\alpha N^{\frac{2}{d}(2\alpha-2\gamma+1)+1}$. 
	Further, due to $\alpha\leq \gamma-d/4-1/2+\eta/2$, $a^N_i / a^N_0 \rightarrow 0$ in probability for all $i=1,\dots,K$. 
	Now, by the Cauchy-Schwarz inequality, each summand in the numerator can be bounded by terms of the form
	\begin{align}
		\prod_{k=0}^Ka^N_k\frac{|\bar b_N(X)_i|}{\sqrt{a^N_ia^N_j}}
	\end{align}
	for $i,j=0,\dots,K$. Thus, by Lemma \ref{lem:DeterminantRate}, in order to prove the claim it remains to find a uniform bound in probability for the terms of the form $|\bar b_N(X)_i| / \sqrt{a^N_ia^N_j}$. Now, for $i=1,\dots,K$,
	\begin{align}
		\mathbb{E}{\left[\bar b_N(X)_{i}^2\right]} &= \mathbb{E}\left[\left(\int_0^T\langle(-\Delta)^{2\alpha-\gamma}P_NF_i(X),\mathrm{d}W^N_t\rangle\right)^2\right] \\
		&= \mathbb{E}\left[\int_0^T|(-\Delta)^{2\alpha-\gamma}P_NF_i(X)|_H^2\mathrm{d}t\right] < \infty
	\end{align}
	uniformly in $N$ as $\alpha < \gamma - d/8 - 1/4 + \eta/4$. Further, $(a^N_0)^{-1/2}$ converges to zero in probability, while $(a^N_i)^{-1/2}$ converges to a positive constant for $i=1,\dots,K$. 
	For $i=0$, the reasoning is similar: By Propositon \ref{prop:AdditionalRegularity} and Lemma \ref{lem:DenominatorRate}, we have
	\begin{align}
	    \mathbb{E}\left[\bar b_N(X)_0^2\right] \asymp \mathbb{E}\left[\int_0^T|(-\Delta)^{1+2\alpha-\gamma}\overline X^N_t|_H^2\mathrm{d}t\right] \asymp C_{2\alpha-\gamma}N^{\frac{2}{d}(4\alpha-4\gamma+1)+1}.
	\end{align}
	Together with $a_0^N\sim N^{\frac{2}{d}(2\alpha-2\gamma+1)+1}$ and $\alpha\leq\gamma$, this yields that $\bar b_N(X)_0^2/a^N_0$ (and consequently $|\bar b_N(X)_0|/\sqrt{a^N_0a^N_j}$ for $j=0,\dots,K$) is bounded in probability.
	The claim follows. 
\end{proof}

\section{Methods}\label{app:methods}

\subsection*{Numerical simulation}
For the evaluation in Section \ref{sec:data_analysis} 
and Figure \ref{fig:Experiment_and_Simulations} (F) 
we simulated \eqref{eq:Activator}, \eqref{eq:Inhibitor} on a square with side $L=75$, with spatial increment $\mathrm{d}x=0.375$, for $t\in [T_0, T_1)$, $T_0=500$, $T_1=700$, 
with temporal increment $\mathrm{d}t=0.01$, using an explicit finite difference scheme, of which we recorded every $100$th frame. We choose $T_0=500$ in order to avoid artefacts stemming from zero initial conditions at $T=0$.
We set $b=0.2$, 
$\gamma=0$, $\sigma=0.1$ 
and $a(x) = 0.5-b+0.5(x/(0.33u_0L^2)-1)$. 
In Section \ref{sec:SimulationSigma}, we use additional simulations with $\sigma=0.05$ and $\sigma=0.2$.

For single and giant cell simulations 
in Figure \ref{fig:Experiment_and_Simulations} (D), (E) 
we solve \eqref{eq:Activator}, \eqref{eq:Inhibitor} 
on a square with side $L=75$, with spatial increment $\mathrm{d}x=0.15$, for $t\in[0,T]$, $T=1000$, with temporal increment $\mathrm{d}t=0.002$, using an explicit finite difference scheme together with a phase field model \cite{FlemmingFontAlonsoBeta20} to account for the interior of the single and giant cells, corresponding, respectively, to an area of A$_0$=113 \si{\micro}$m^2$ and  A$_0$=2290 \si{\micro}$m^2$. 
We used 
$b=0.4$ 
and $a(x) = 0.5-b+0.5(x/(0.25u_0A_0)-1)$. 
The noise terms are chosen as in \cite{FlemmingFontAlonsoBeta20}, with $\sigma=0.1$ and $k_\eta=0.1$ in the notation therein.

For both simulations, the remaining parameters are 
$D_U=0.1, D_V=0.02, k_1=k_2=1, u_0=1, \epsilon=0.02$. 
The unit length is 1~\si{\micro}m, the unit time is $1s$.

\subsection*{Experimental settings}
For experiments \textit{D. discoideum} AX2 cells expressing mRFP-LimE$\Delta$ as a marker for F-actin were used. Cell culture and electric pulse induced fusion to create giant cells were performed as described in \cite{Gerhardt_actin_2014,FlemmingFontAlonsoBeta20}. For live cell imaging an LSM 780 (Zeiss, Jena) with a 63x objective lens (alpha Plan-Apochromat, NA 1.46, Oil Korr M27, Zeiss Germany) were used. The spatial and temporal resolution was adjusted for each individual experiment to acquire image series with the best possible resolution, while protecting the cells from photo damage. Images series were saved as 8-bit tiff files. For image series where the full range of 256 pixel values was not utilized, the image histograms were optimized in such a way that the brightest pixels corresponded to a pixel value of 255.

\section*{Acknowledgements} 
This research has been partially funded by Deutsche Forschungsgemeinschaft (DFG) - SFB1294/1 - 318763901. 
SA acknowledges funding from MCIU and FEDER -- PGC2018-095456-B-I00.

\bibliographystyle{amsalpha}
\bibliography{Literatur,refs_CB}

\end{document}